\newcommand{\webpageTAJS}{\texttt{http://www.brics.dk/TAJS/}}
\newcommand{\webpageGOOGLE}{\texttt{http://v8.googlecode.com/svn/data/benchmarks/v7/run.html}}
\newcommand{\webpageCOOKIE}{\texttt{http://code.google.com/p/cookies/}}
\newcommand{\webpageRYE}{\texttt{http://ryejs.com/}}
\newtheorem{definition}{Definition}
\newtheorem{lemma}{Lemma}
\newtheorem{theorem}{Theorem}
\newcommand{\Label}[1]{\textit{#1}}
\newcommand{\eval}{\Downarrow}
\newcommand{\entails}{\vdash}
\newcommand{\entailsP}{\entails_{\textbf{P}}^{\aALatticeR,\aALatticeQ}}
\newcommand{\entailsFAIteration}{\entails_{\textsc{App}}^{\aStorables}}
\newcommand{\entailsFAApplication}{\entails_{\textsc{App}}^{\aFunc}}
\newcommand{\entailsPRIteration}{\entails_{\textsc{Get}}^{\aStorables}}
\newcommand{\entailsPRIntersection}{\entails_{\textsc{Get}}^{\aObj}}
\newcommand{\entailsPAIteration}{\entails_{\textsc{Put}}^{\aObjlabel}}
\newcommand{\entailsPAAssignment}{\entails_{\textsc{Put}}^{\ljSLocation}}
\newcommand{\join}{\sqcup}
\newcommand{\equivType}{\equiv_{\bijection,\ljType}}
\newcommand{\equivTypeSub}{\equiv_{\bijection,\{\ljSource|\ljSource\notin\bar{\ljType}\}}}
\newcommand{\equivTypeSubPrime}{\equiv_{\bijection',\{\ljSource|\ljSource\notin\bar{\ljType}\}}}
\newcommand{\equivLattice}{\prec_{\aAnalysisLattice}}
\newcommand{\bijection}{\flat}
\newcommand{\powerset}{\wp}
\newcommand{\dom}{dom}
\newcommand{\ljOperation}{\Downarrow_{\ljOp}^{\ljVal}}
\newcommand{\abstractOperation}{\Downarrow_{\ljOp}^{\aVal}}
\newcommand{\ljNew}{\textbf{new}^{\ljSLocation}}
\newcommand{\ljIf}{\textbf{if}}
\newcommand{\ljOp}{\textbf{op}}
\newcommand{\ljStr}{\textit{str}}
\newcommand{\ljTrue}{true}
\newcommand{\ljFalse}{false}
\newcommand{\ljUndefined}{\textbf{undefined}}
\newcommand{\ljNull}{\textbf{null}}
\newcommand{\ljTrace}{\textbf{trace}^{\ljSource}}
\newcommand{\ljTraceClass}{\textbf{trace}^{\ljSource,\ljClass}}
\newcommand{\ljUntraceClass}{\textbf{untrace}^{(\ljClass\hookrightarrow\ljClass')}}
\newcommand{\LjValue}{\Label{Value}}
\newcommand{\LjTaintedValue}{\Label{Tainted Value}}
\newcommand{\LjPrototype}{\Label{Prototype}}
\newcommand{\LjObject}{\Label{Object}}
\newcommand{\LjClosure}{\Label{Closure}}
\newcommand{\LjLocation}{\Label{Location}}
\newcommand{\LjStorable}{\Label{Storable}}
\newcommand{\LjEnvironment}{\Label{Environment}}
\newcommand{\LjHeap}{\Label{Heap}}
\newcommand{\lj}{\lambda_{J}}
\newcommand{\ldj}{\lambda_{J}^{\D}}
\newcommand{\ldcj}{\lambda_{J}^{\ljClass}}
\newcommand{\ljHeap}{\mathcal{H}}
\newcommand{\ljEnv}{\rho}
\newcommand{\ljExp}{e}
\newcommand{\ljLocation}{\xi^{\ljSLocation}}
\newcommand{\ljLocationPrime}{\xi^{\ljSLocation\prime}}
\newcommand{\ljSLocation}{\ell}
\newcommand{\ljSource}{\ljSLocation}
\newcommand{\ljFunc}{\lambda^{\ljSLocation}}
\newcommand{\ljProto}{p}
\newcommand{\ljObj}{o}
\newcommand{\ljClosure}{f}
\newcommand{\ljStorable}{s}
\newcommand{\ljVal}{v}
\newcommand{\ljConst}{c}
\newcommand{\ljVar}{x}
\newcommand{\ljTypedVal}{\omega}
\newcommand{\ljType}{\kappa}
\newcommand{\joinType}{\bullet}
\newcommand{\ljClass}{\mathcal{A}}
\newcommand{\D}{\mathcal{D}}
\newcommand{\dTrace}{\tau}
\newcommand{\aVal}{\vartheta}
\newcommand{\aValBottom}{\aVal_{\perp}}
\newcommand{\aState}{\Gamma}
\newcommand{\aStateBottom}{\aState_{\perp}}
\newcommand{\aScope}{\sigma}
\newcommand{\aScopeBottom}{\aScope_{\perp}}
\newcommand{\aStorable}{\theta}
\newcommand{\aStorables}{\Theta}
\newcommand{\aObjlabel}{\Xi}
\newcommand{\aObjStore}{\Sigma}
\newcommand{\aObj}{\Delta}
\newcommand{\aFunc}{\Lambda^{\ljSLocation}}
\newcommand{\aFuncBottom}{\aFunc_{\perp}}
\newcommand{\aFuncInput}{In}
\newcommand{\aFuncOutput}{Out}
\newcommand{\aStore}{\mathcal{F}}
\newcommand{\aStoreBottom}{\aStore_{\perp}}
\newcommand{\AAnalysisLattice}{\Label{Analysis Lattice}}
\newcommand{\aAnalysisLattice}{\mathcal{C}}
\newcommand{\aALatticeR}{\mathcal{R}}
\newcommand{\aALatticeQ}{\mathcal{Q}}
\newcommand{\aLattice}{\mathcal{L}}
\newcommand{\aLatticeBottom}{\aLattice_{\perp}}
\newcommand{\LvUndefined}{\textit{Undefined}}
\newcommand{\LvNull}{\textit{Null}}
\newcommand{\LvBool}{\textit{Bool}}
\newcommand{\LvNum}{\textit{Num}}
\newcommand{\LvString}{\textit{String}}
\newcommand{\lvUndefined}{\mathtt{undefined}}
\newcommand{\lvTrue}{\mathtt{true}}
\newcommand{\lvFalse}{\mathtt{false}}
\newcommand{\lvBool}{\mathtt{bool}}
\newcommand{\lvNull}{\mathtt{null}}
\newcommand{\lvString}{\mathtt{str}}
\newcommand{\ALatticeValue}{\Label{Lattice Value}}
\newcommand{\ALabel}{\Label{Label}}
\newcommand{\AValue}{\Label{Abstract Value}}
\newcommand{\AClosure}{\Label{Abstract Closure}}
\newcommand{\AObject}{\Label{Abstract Object}}
\newcommand{\AStorable}{\Label{Abstract Storable}}
\newcommand{\AFunctionStore}{\Label{FunctionStore}}
\newcommand{\AScope}{\Label{Scope}}
\newcommand{\AStorableStore}{\Label{ObjectStore}}
\newcommand{\AState}{\Label{State}}
\newcommand{\DDependency}{\Label{Dependency}}
\newcommand{\Rule}[1]{\textsc{#1}}
\newcommand{\LDJConstant}{(DT-Const)}
\newcommand{\LDJVariable}{(DT-Var)}
\newcommand{\LDJOperation}{(DT-Op)}
\newcommand{\LDJObjectCreation}{(DT-New)}
\newcommand{\LDJFunctionCreation}{(DT-Abs)}
\newcommand{\LDJFunctionApplication}{(DT-App)}
\newcommand{\LDJPropertyReference}{(DT-Get)}
\newcommand{\LDJPropertyAssignment}{(DT-Put)}
\newcommand{\LDJConditionTrue}{(DT-IfTrue)}
\newcommand{\LDJConditionFalse}{(DT-IfFalse)}
\newcommand{\LDJTrace}{(DT-Trace)}
\newcommand{\LDJTraceExt}{(DT-Trace-Classified)}
\newcommand{\LDJUntrace}{(DT-Untrace)}
\newcommand{\AProgram}{(A-Program)}
\newcommand{\AProgramIterationNotEquals}{(P-Iteration-NotEquals)}
\newcommand{\AProgramIterationEquals}{(P-Iteration-Equals)}
\newcommand{\AConstant}{(A-Const)}
\newcommand{\AVariable}{(A-Var)}
\newcommand{\AOperation}{(A-Op)}
\newcommand{\AObjectCreation}{(A-New-NonExisting)}
\newcommand{\AObjectCreationExisting}{(A-New-Existing)}
\newcommand{\AFunctionCreation}{(A-Abs-NonExisting)}
\newcommand{\AFunctionCreationExisting}{(A-Abs-Existing)}
\newcommand{\AFunctionApplication}{(A-App)}
\newcommand{\AFunctionIteration}{(App-Iteration)}
\newcommand{\AFunctionIterationEmpty}{(App-Iteration-Empty)}
\newcommand{\AFunctionStoreSubset}{(App-Store-Subset)}
\newcommand{\AFunctionStoreNonSubset}{(App-Store-NonSubset)}
\newcommand{\APropertyReference}{(A-Get)}
\newcommand{\APropertyReferenceIteration}{(Get-Iteration)}
\newcommand{\APropertyReferenceIterationEmpty}{(Get-Iteration-Empty)}
\newcommand{\APropertyReferenceIntersection}{(Get-Intersection)}
\newcommand{\APropertyReferenceNonIntersection}{(Get-NonIntersection)}
\newcommand{\APropertyReferenceEmpty}{(Get-Empty)}
\newcommand{\APropertyAssignment}{(A-Put)}
\newcommand{\APropertyAssignmentIteration}{(Put-Iteration)}
\newcommand{\APropertyAssignmentIterationEmpty}{(Put-Iteration-Empty)}
\newcommand{\APropertyAssignmentInDom}{(Put-Assignment-InDom)}
\newcommand{\APropertyAssignmentNotInDom}{(Put-Assignment-NotInDom)}
\newcommand{\AConditionTrue}{(A-IfTrue)}
\newcommand{\AConditionFalse}{(A-IfFalse)}
\newcommand{\ACondition}{(A-If)}
\newcommand{\ATrace}{(A-Trace)}
\newcommand{\ATraceExt}{(A-Trace-Classified)}
\newcommand{\AUntrace}{(A-Untrace)}
\lstdefinelanguage{JavaScript}{
                keywords={      attributes, class, classend, do, empty, endif, endwhile, fail,
                                function, functionend, if, implements, in, inherit, inout, not, of,
                                operations, out, return, set, then, types, while, use, else, switch, case,
                break, default, for, var},
                keywordstyle=\color{blue}\bfseries,
                ndkeywords={trace, untrace},
                ndkeywordstyle=\color{red}\bfseries,
                identifierstyle=\color{black},
                sensitive=false,
                comment=[l]{//},
                commentstyle=\color{Gray}\ttfamily,
                stringstyle=\color{red}\ttfamily
}
\newcommand{\lvNUM}{\textit{NUM}}
\newcommand{\lvCHAR}{\textit{STRING}}
\begin{document}

\conferenceinfo{PLAS'13,} {June 20, 2013, Seattle, WA, USA.}
\CopyrightYear{2013}
\copyrightdata{978-1-4503-2144-0/13/06}

\preprintfooter{Type-based Dependency Analysis for JavaScript}

\title{Type-based Dependency Analysis for JavaScript}
\subtitle{Technical Report}

\authorinfo{Matthias Keil \and Peter Thiemann}
{Institute for Computer Science\\ University of Freiburg\\ Freiburg, Germany}
{\{keilr,thiemann\}@informatik.uni-freiburg.de}

\maketitle

\begin{abstract}
Dependency analysis is a program analysis that determines potential data flow between program points. While it is not a security analysis per se, it is a viable basis for investigating data integrity, for ensuring confidentiality, and for guaranteeing sanitization. A noninterference property can be stated and proved for the dependency analysis. 

We have designed and implemented a dependency analysis for JavaScript. We formalize this analysis as an abstraction of a tainting semantics. We prove the correctness of the tainting semantics, the soundness of the abstraction, a noninterference property, and the termination of the analysis. 
\end{abstract}


\category{F.3.2}{LOGICS AND MEANINGS OF PROGRAMS}{Semantics of Programming Languages}[Program analysis]
\category{D.3.1}{PROGRAMMING LANGUAGES }{Formal Definitions and Theory}[Semantics ]
\category{D.4.6 }{OPERATING SYSTEMS}{Security and Protection}[Information flow controls ]

\terms
Security

\keywords
Type-based Analysis, Dependency, JavaScript




\section{Introduction} \label{sec:introduction}

Security Engineering is one of the challenges of modern software development. The connected world we live
in consists of interacting entities that process distributed private data. This data has to be protected against illegal
usage, tampering, and theft.


Web applications are one popular example of such interacting entities. They run in the web browser and consist of
program fragments from different sources (e.g., mashups). Such fragments should not be entrusted with sensitive
information. However, if a fragment's input data can be shown not to depend on confidential data, then it cannot divulge
this data or tamper with it.

A Web application may also be vulnerable to an injection attack. Such
an attack arises when data is stored in a database or in the DOM without
proper escaping. If an analysis can determine that the input to the database never
depends directly on a data source (like an HTML input field), but
rather is always filtered by a suitable sanitizer, then many kinds of
injection attacks can be avoided.

Dependency analysis is a program analysis that can help in both
situations, because it determines potential data flow between program
points. Intuitively, there is a dependency between the value in
variable $\ljVar$ and the value of an expression $\ljExp[\ljVar]$
containing the variable if substituting different expressions
$\ljExp'$ for $\ljVar$ may change the value of $\ljExp[\ljExp']$. In
our application we label data sources (e.g., as confidential) in a
JavaScript program and are interested in identifying the potential
sinks reachable from these data sources. For sanitization, we
instrument sanitizers with a relabeling operation that modifies the
dependency on the original data source to a sanitized dependency. We
consider a data sink safe, if it only depends on data that passed
through a sanitizer. Other uses of dependency
information for optimization or parallelization are possible, but not
considered in this work.

We designed and implemented our dependency analysis as an
extension of TAJS \cite{tajs2009}, a type analyzer for JavaScript. The
implementation allows us to label data sources with a $\ljTrace$ marker and to
indicate relabelings with an $\ljUntraceClass$ marker. The analysis
performs an abstract interpretation to approximate the flow of the
markers throughout the program. The marker is part of the analyzed
type and propagated to all program points that depend on a marked
value.

Part of our work consists in establishing the formal
underpinnings of the implemented analysis. 
Thus, we outline a correctness proof for the dependency part of the
analysis. For conducting the proofs, we have simplified the domains
with respect to the implementation to avoid an overly complex formal
system.  To this end, we formalize the dynamic semantics of a
JavaScript core language, extend that with marker propagation, and
then formalize the abstract interpretation of this extended semantics.
Both, concrete and abstract semantics are given as big-step semantics.
We prove sound marker propagation, sound approximation of the dynamic semantics by its abstract counterpart,
noninterference, and the termination of the analysis. 

\paragraph{Contributions}
\begin{itemize}
\item Design and implementation of a type-based dependency analysis based on TAJS \cite{tajs2009}.
\item Formalization of the analysis.
\item Proofs of correctness and termination.
\item Extension of the analysis for sanitization (Section~\ref{sec:untrace}).
\item A noninterference theorem (Section~\ref{sec:technical_results}).
\end{itemize}
\paragraph{Overview}
Section~\ref{sec:application_scenario} considers some example scenarios of our implemented system.
Section~\ref{sec:formalization} formalizes a core language, its
dynamic semantics, and defines noninterference
semantically. Section~\ref{sec:dependency_type} extends this semantics with tainting. 
Section~\ref{sec:abstract_analysis} defines the corresponding
abstraction, Section~\ref{sec:applying-analysis} gives some example applications, and Section \ref{sec:untrace}
defines the extension for sanitization. Section~\ref{sec:technical_results} contains our
theorems of soundness, noninterference, and termination. Section~\ref{sec:implementation} briefly describes the
implementation. Related work is discussed in
Section~\ref{sec:related_work} followed by a conclusion.


The proofs of soundness, noninterference, and termination are shown in appendix \ref{sec:proof_context},
\ref{sec:proof_noninterference}, \ref{sec:proof_correctness}, and \ref{sec:proof_termination}.


\section{Application Scenario} \label{sec:application_scenario}

Web developers rely on third-party libraries for calendars, maps,
social networks, and so on. To create a
trustworthy application, they should ensure that these libraries do
not leak sensitive information of their users.

One way to avoid such leaks is to detect information flow from
confidential data sources to untrusted sinks by program
analysis and take
measures to avoid this flow. Sometimes, this approach is too
restrictive, because the data arriving at the sink has been sanitized
on the way from the source. Sanitization can take many forms: data may
have been encrypted or a username/password combination may have been
reduced to a boolean. In such cases, the resulting data still depends
on the confidential source, but it can be safely declassified and passed
on to an untrusted sink.

An analogous scenario is the avoidance of injection attacks
where direct dependencies of database queries or DOM contents from
input fields in a Web form should be avoided. However, an indirect
dependency via a sanitizer that, in this case, escapes the values
suitably is acceptable.

Our dependency analysis addresses both scenarios as illustrated with
the following examples.

\subsection{Cookies}
\label{sec:cookies}

A web developer might want to ensure that the code does not read sensitive
data from cookies and sends it to the net. Technically, it means
that data that is passed to network send operations must not depend on
\lstinline{document.cookie}. This dependency can be checked by our analysis.  

To label data sources, our implementation reads a configuration file
with  a list of JavaScript objects that are labeled with a
dependency mark before starting the analysis.
Any predefined value or function can be marked in this way.
For this example, the analyzer is to label
\lstinline{document.cookie} with \lstinline{t0}. 

Values that are written
to a cookie are labeled by wrapping them in a \lstinline{trace}
expression.
The analysis determines that values returned
from cookies are influenced by
\lstinline{document.cookie}. Furthermore, after writing a marked value 
to a cookie, 
each subsequent read operation returns a value that depends on it. 

The following code snippet uses
a standard library for reading and writing cookies. The comments
show the analyzed dependencies of the respective values. 

\begin{lstlisting}[language=JavaScript]
var val1 = readCookie('test'); // d(val1)={t0}
var val2 = trace(4711);        // d(val2)={t1}
writeCookie('test', val2);
var val3 = readCookie('test'); // d(val3)={t0,t1}
\end{lstlisting}




Thus, the read value in \lstinline{val1} is influenced by \lstinline{document.cookie}. The value
in \lstinline{val2} is labeled by a fresh mark \lstinline{t1}. Later, this value is written to the cookie. Hence, the
result \lstinline{val3} of the last read operation is influenced by \lstinline{document.cookie} and \lstinline{val2}.


\subsection{Application: Sensitive Data}
\label{sec:application_sensitive_data}

The next example is to illustrate the underpinnings of our analysis and to
point out differences to other techniques. 
Figure~\ref{lst:example1} shows a code fragment to request
the user name corresponding to a user
id. This data is either read from a cookie or obtained by an Ajax
request.


\begin{figure}
\begin{lstlisting}[language=JavaScript]
var userHandler = function(uid) {
   var userData = {name:''};
   var onSuccess = function(response) {
      userData = response;
   };

   if (Cookie.isset(uid)) {
      Cookie.request(uid, onSuccess);
   } else {
      Ajax.request('http:\\example.org', {
         content : uid
      }, onSuccess);
   }

   return {
      getName : function() {
         return userData.name;
      }
   }
};
var name1 = userHandler(trace("uid1")).getName();
var name2 = userHandler(trace("uid2")).getName();
\end{lstlisting}
                \caption{Loading sensitive data.}
                \label{lst:example1}
\end{figure}

The function \lstinline{userHandler} returns an interface to a
user's personal data. The implementation abstracts from the data
source by using a callback  function \lstinline{onSuccess} to handle
the results. The code ignores the problem that \lstinline{userData}
may not be valid before completion of the Ajax request. 

To detect all values depending on user information, a developer would
mark the id. This mark should propagate to values returned from
\lstinline{Cookie.request()} and \lstinline{Ajax.request()}. Because
we are interested in values depending on \lstinline{Cookie.request()}
and \lstinline{Ajax.request()} the interfaces also get marked.

The conditional in line~7 depends on \lstinline{Cookie.isset(uid)} and
thus on \lstinline{uid} and on the cookie interface. The 
value in \lstinline{userData} (line~4) depends on \lstinline{uid}, on
the cookie interface, and on
\lstinline{Cookie.isset(uid)} from the Ajax interface. The result \lstinline{name1} depends on 
\lstinline{userData.name} and therefore on the user id, the cookie
interface, and potentially on the Ajax interface.

Standard security analyses label values with marks
drawn from a security lattice, often just \emph{Low} and \emph{High}. If
both sources, the cookie interface 
and the Ajax interface, are labeled with the same mark, there is no way to
distinguish these sources. Dependencies allow us a to
formulate security properties on a fine level of granularity that
distinguishes different sources without changing the underlying lattice.

Second, our analysis is flow-sensitive. Dependencies are bound to
values instead of variable names or parameters. A
variable may containt different values depending on different sources
during evaluation. In addition, the underlying TAJS implementation
already handles aliasing and polyvariant analysis in a
satisfactory way.

In the example, the values in \lstinline{name1} and \lstinline{name2}
result from the same function but may depend on different sources. 
The flow-sensitive model retains the independence of  the value in \lstinline{name1} and
\lstinline{trace("uid2")}. 
Section~\ref{sec:application_sensitive_data_cont} discusses the actual
outcome of the analysis.


\subsection{Application: Foreign Code}
\label{sec:application_foreign_code}


\begin{figure}
\begin{lstlisting}[language=JavaScript]
loadForeignCode = trace(function() {
   Array.prototype.foreach = function(callback) {
      for ( var k = 0; k < this.length; k++) {
         callback(k, this[k]);
      }
   };
});
loadForeignCode();
// [..]
var array = new Array(4711, 4712);
array.foreach(function(k, v) {
   result = k + v;
});
\end{lstlisting}
                \caption{Using foreign Code.}
                \label{lst:example2}
\end{figure}

The second scenario (Figure~ \ref{lst:example2}) illustrates one way a library can extend existing functionality.
This example extends the prototype of \lstinline{Array} by a
\lstinline{foreach} function. Later on, this function is used to
iterate over elements.  

The goal here is to
protect code from being compromised by the libraries used. The
function \lstinline{loadForeignCode} encapsulates foreign code and is
labeled as a source. In consequence, all values created or modified
by calling \lstinline{loadForeignCode} depend on this function and
contain its mark. Because the function in the \lstinline{foreach}
property gets marked, the values in \lstinline{result} also get
marked. Therefore, \lstinline{result} may be influenced by loading
foreign code. See Section~\ref{sec:application_foreign_code_cont} for the
results of the analysis.

\subsection{Application: Sanitization}
\label{sec:application:-sanitization}

Noninterference is not the only interesting property that can be
investigated with the dependency analyzer. To avoid injection attacks,
programmers should ensure that only escaped values occur in a database
query or become part of an HTML page. Also, a dependency on a secret data source may be acceptable
if the data is encrypted before being published.  These examples
illustrate the general idea of sanitization where a suitable function
needs to be interposed in the dataflow between certain sources and
sinks.


\begin{figure}
\begin{lstlisting}[language=JavaScript]
$ = function(id) {
  return trace(document.getElementById(id).value, "#DOM");
}
function sanitizer(value) {
  /* clean up value ... */
  return untrace(value, "#DOM");
}
// [...]
var input = $("text");
var secureInput = sanitizer(input);
consumer(secureInput);
\end{lstlisting}
                \caption{Analyzing sanitization.}
                \label{listing:sanitization}
\end{figure}

The concrete example in Figure~\ref{listing:sanitization} applies our analysis to the
problem. The input is
labeled with mark  \lstinline{#DOM} (line~2). 
The function in line 4 performs some (unspecified) sanitization and
finally applies the \lstinline'untrace' function to mark the
dependency on the marks identified with \lstinline{#DOM} as a
sanitized, safe dependency. The argument of the consumer can now be
checked for dependencies on unsanitized values. In the example code,
the analysis determines that the argument depends on the DOM, but that
the dependency is sanitized. 

Changing line~10 as indicated below
leaves the argument of the consumer with a mixture of sanitized
and unsanitized dependencies. This mixture could be flagged as an error.
\begin{lstlisting}[language=JavaScript,numbers=none]
var secureInput =
  i_know_what_i_do ? sanitizer(input) : input;
\end{lstlisting}


\section{Formalization} \label{sec:formalization}

This section presents the JavaScript core calculus $\lj$ along with a
semantic definition of independence.


\subsection{Syntax of $\lj$} \label{sec:syntax_ldj}

\begin{figure}[t]
  \begin{displaymath}
    \begin{array}{lrl}
      \ljExp &::=& \ljConst \mid \ljVar \mid \ljFunc \ljVar.\ljExp \mid \ljExp(\ljExp) \mid \ljOp(\ljExp, \ljExp) \\
      &\mid& \ljIf~ (\ljExp)~ \ljExp,~ \ljExp \mid \ljNew~\ljExp \mid \ljExp[\ljExp] \mid \ljExp[\ljExp]=\ljExp \mid \ljTrace(\ljExp)
    \end{array}
  \end{displaymath}
  \begin{displaymath}
    \begin{array}{llrl}
      \LjLocation &\ni~\ljLocation &&\\       
      \LjValue &\ni~ \ljVal &::=& \ljConst ~|~ \ljLocation\\
      \LjPrototype &\ni~\ljProto &::=& \ljVal\\
      \LjClosure &\ni~\ljClosure &::=& \emptyset ~|~ \langle \ljEnv, \ljFunc \ljVar.\ljExp \rangle\\
      \LjObject &\ni~ \ljObj &::=& \emptyset ~|~ \ljObj[\ljStr\mapsto\ljVal]\\
      \LjStorable &\ni~ \ljStorable &::=& \langle \ljObj, \ljClosure, \ljProto \rangle\\
      \LjEnvironment &\ni~ \ljEnv &::=& \emptyset ~|~ \ljEnv[\ljVar\mapsto\ljVal]\\
      \LjHeap &\ni~ \ljHeap &::=& \emptyset ~|~ \ljHeap[\ljLocation\mapsto\ljStorable]
    \end{array}
  \end{displaymath}
  \caption{Syntax and semantic domains of $\lj$.}
  \label{fig:semantic-domains_lj}  \label{fig:syntax_lj}
\end{figure}

$\lj$ is inspired by JavaScript core calculi from the literature
\cite{Guha:2010:EJ:1883978.1883988,ecma1999:262}.
A $\lj$ expression (Figure \ref{fig:syntax_lj}) is either a constant $\ljConst$ (a boolean, a
number, a string, $\ljUndefined$, or $\ljNull$), a variable $\ljVar$,
a lambda expression, an application, a primitive operation, 
a conditional, an object creation, a property reference, a property assignment, or a trace
expression.

The trace expression is novel to our calculus. It creates marked
values that can be tracked by our dependency analysis.
The expression $\ljNew~\ljExp$ creates an object whose prototype is the result of $\ljExp$.
The lambda expression, the new expression, and the trace expression
carry a unique mark $\ljSLocation$.


\subsection{Semantic domains} \label{sec:semantic_domains}

Figure \ref{fig:semantic-domains_lj} also defines the semantic domains of $\lj$.  A heap maps a location
$\ljLocation$ to a storable $\ljStorable$, which is a triple consisting of an object $\ljObj$,
potentially a function closure $\ljClosure$ (only for function objects), and a value $\ljProto$,
which serves as the prototype. The superscript $\ljSLocation$ refers the expression causing the 
allocation. An object $\ljObj$ maps a string to a value. A closure
consists of an environment $\ljEnv$ and an expression $\ljExp$.  The environment $\ljEnv$ maps a
variable to a value $\ljVal$, which may be a base type constant or a location.

Program execution is modeled by a big-step evaluation judgment of the form $\ljHeap,\ljEnv
~\entails~\ljExp ~\eval~ \ljHeap' ~|~ \ljVal$. The evaluation of expression $\ljExp$ in an initial heap
$\ljHeap$ and environment $\ljEnv$ results in the final heap $\ljHeap'$ and the value $\ljVal$. We omit
its standard definition for space reasons, but show excerpts of an augmented semantics in 
Section~\ref{sec:dependency_type}.

\begin{figure}[t]
                \centering
                \begin{displaymath}
                                \begin{array}{l@{~}l@{~}l}

                                                \langle \ljObj, \ljClosure, \ljProto \rangle(\ljStr) ~&::=&~ \begin{cases}
                                                                \ljVal, & \ljObj=\ljObj'[\ljStr\rightarrow\ljVal]\\
                                                                \ljObj'(\ljStr), & \ljObj=\ljObj'[\ljStr'\rightarrow\ljVal] \\ 
                                                                \ljHeap(\ljLocation)(\ljStr), & \ljObj=\emptyset ~\wedge~ \ljProto=\ljLocation\\
                                                                \ljUndefined, & \ljObj=\emptyset ~\wedge~ \ljProto=\ljConst
                                                \end{cases}\\

                                                \langle \ljObj, \ljClosure, \ljProto \rangle[\ljStr\mapsto\ljVal] ~&::=&~ \langle \ljObj[\ljStr\mapsto\ljVal], \ljClosure, \ljProto \rangle\\

                                                \langle \ljObj, \ljClosure, \ljProto \rangle_{\ljClosure}  ~&::=&~ \ljClosure\\

                                                \ljHeap[\ljLocation,\ljStr \mapsto \ljVal] ~&::=&~
                                                \ljHeap[\ljLocation\mapsto\ljHeap(\ljLocation)[\ljStr \mapsto \ljVal]]\\

                                                \ljHeap[\ljLocation \mapsto \emptyset] ~&::=&~ \ljHeap[\ljLocation \mapsto \langle \emptyset,\emptyset,\ljNull \rangle]\\

                                                \ljHeap[\ljLocation \mapsto \ljObj] ~&::=&~ \ljHeap[\ljLocation \mapsto \langle \ljObj,\emptyset,\ljNull \rangle]\\

                                                \ljHeap[\ljLocation \mapsto \ljClosure] ~&::=&~ \ljHeap[\ljLocation \mapsto \langle \emptyset,\ljClosure,\ljNull \rangle]\\

                                                \ljHeap[\ljLocation \mapsto \ljProto] ~&::=&~ \ljHeap[\ljLocation \mapsto \langle \emptyset,\emptyset,\ljProto \rangle]

                                \end{array}
                \end{displaymath}
                \caption{Abbreviations.}
                \label{fig:syntactic_sugar}
\end{figure}

Figure \ref{fig:syntactic_sugar} introduces some abbreviated notation. A property lookup or a property update on a
storable $\ljStorable=\langle \ljObj, \ljClosure, \ljProto \rangle$ is relayed to the underlying object. The property access $\ljStorable(\ljStr)$ returns
$\ljUndefined$ by default if the accessed string is not defined in $\ljObj$ and the prototype of $\ljStorable$
is not a location $\ljLocation$. We write $\ljStorable_{\ljClosure}$ for the closure in $\ljStorable$. The notation
$\ljHeap[\ljLocation,\ljStr \mapsto \ljVal]$ updates a property of storable $\ljHeap(\ljLocation)$,
$\ljHeap[\ljLocation \mapsto \ljObj]$ initializes an object, and
$\ljHeap[\ljLocation \mapsto \ljClosure]$ defines a function.


\subsection{Independence} \label{sec:dependency}

The $\ljTrace$ expression serves to mark a program point
as a source of sensitive data. An expression $\ljExp$ is independent
from that source if the value of the $\ljTrace$ expression does not
influence the final result of $\ljExp$. The first definition formalizes replacing
the argument of a $\ljTrace$ expression.

\begin{definition}\label{def:substitution-i}
                The substitution
                $\ljExp[\ljSLocation\mapsto\tilde{\ljExp}]$ of
                $\ljSLocation$ in $\ljExp$ by $\tilde{\ljExp}$ is
                defined as the homomorphic extension of
                \begin{equation}
                                \ljTrace(\ljExp')[\ljSLocation\mapsto\tilde{\ljExp}]\equiv\ljTrace(\tilde{\ljExp})
                \end{equation}
\end{definition}




\begin{definition}[incomplete first attempt]\label{def:independency}
                The expression $\ljExp$ is independent from $\ljSLocation$ iff all possible substitutions of $\ljSLocation$ are unobservable.
                \begin{equation}
                                \begin{split}
                                                \forall  \ljExp_1,\ljExp_2:~ &\ljHeap,\ljEnv ~\entails~\ljExp[\ljSLocation\mapsto\ljExp_1] ~\eval~ \ljHeap_1 ~|~ \ljVal\\
                                                ~\leftrightarrow~ &\ljHeap,\ljEnv ~\entails~\ljExp[\ljSLocation\mapsto\ljExp_2] ~\eval~ \ljHeap_2 ~|~ \ljVal \\
                                \end{split}
                \end{equation}
\end{definition}

This definition covers both, the terminating and the non-terminating cases.
Furthermore, we consider direct dependencies, indirect dependencies, and
transitive dependencies, similar to the behavior described by Denning
\cite{Denning:1976:LMS:360051.360056,Denning:1977:CPS:359636.359712}.
In Section~\ref{sec:technical_results}, we complete this definition to make it amenable to proof.


\section{Dependency Tracking Semantics}
\label{sec:dependency_type}

To attach marker propagation for upcoming values we apply definition \ref{def:independency} to the $\lj$ calculus.
The later on derived abstract interpretation is formalized on this extended calculus.

This section extends the semantics of $\lj$ with mark
propagation. The resulting calculus $\ldj$ \emph{only} provides a
baseline calculus for subsequent static analysis. $\ldj$
is \emph{specifically not} meant to
perform any kind of dynamic analysis, where the presence or absence of
a mark in a value guarantees some dependency related property. 

The calculus extends $\LjValue$ to $\LjTaintedValue \ni~
\ljTypedVal ::= \ljVal:\ljType$ where $\ljType ::= \emptyset ~|~ \ljSource ~|~ \ljType\joinType\ljType$ is a dependency
annotation. $\LjTaintedValue$ replaces $\LjValue$ in objects and environments.  The operation $\joinType$ 
joins two dependencies. If $\ljTypedVal=\ljVal:\ljType_{\ljVal}$ then write $\ljTypedVal\joinType\ljType$ for
$\ljVal:\ljType_{\ljVal}\joinType\ljType$ to apply a dependency annotation to a value. 

\begin{figure}
                \centering
                \begin{mathpar}
                                \inferrule [\LDJConstant]
                                {
                                }
                                {
                                                \ljHeap,\ljEnv,\ljType ~\entails~ \ljConst ~\eval~ \ljHeap ~|~ \ljConst:\ljType
                                }\and
                                \inferrule [\LDJVariable]
                                {
                                }
                                {
                                                \ljHeap,\ljEnv,\ljType ~\entails~ \ljVar ~\eval~ \ljHeap ~|~ \ljEnv(\ljVar)\joinType\ljType
                                }\and
                                \inferrule [\LDJFunctionCreation]
                                {
                                                \ljLocation \notin \dom(\ljHeap)
                                }
                                {
                                                \ljHeap,\ljEnv,\ljType ~\entails~ \ljFunc \ljVar.\ljExp ~\eval~ \ljHeap[\ljLocation\mapsto\langle \ljEnv, \ljFunc \ljVar.\ljExp \rangle] ~|~ \ljLocation:\ljType
                                }\and
                                \inferrule [\LDJOperation]
                                {
                                                \ljHeap,\ljEnv,\ljType ~\entails~  \ljExp_0 ~\eval~ \ljHeap' ~|~ \ljVal_0:\ljType_{0}\\\\
                                                \ljHeap',\ljEnv,\ljType ~\entails~  \ljExp_1 ~\eval~ \ljHeap'' ~|~ \ljVal_1:\ljType_{1} \\\\
                                                \ljVal_{op} ~=~ \ljOperation(\ljVal_0, \ljVal_1)
                                }
                                {
                                                \ljHeap,\ljEnv,\ljType ~\entails~ \ljOp(\ljExp_0, \ljExp_1) ~\eval~ \ljHeap'' ~|~ \ljVal_{op}:\ljType_{0}\joinType\ljType_{1}
                                }\and 
                                \inferrule [\LDJObjectCreation]
                                {
                                                \ljHeap,\ljEnv,\ljType ~\entails~ \ljExp ~\eval~ \ljHeap' ~|~ \ljVal:\ljType_{\ljVal}\\
                                                \ljLocation \notin \dom(\ljHeap)
                                }
                                {
                                                \ljHeap,\ljEnv,\ljType ~\entails~ \ljNew~ \ljExp ~\eval~
                                                \ljHeap'[\ljLocation\mapsto\ljVal] ~|~ \ljLocation:\ljType_{\ljVal}
                                }\and
                                \inferrule [\LDJFunctionApplication]
                                {
                                                \ljHeap,\ljEnv,\ljType ~\entails~ \ljExp_0 ~\eval~ \ljHeap' ~|~ \ljLocation:\ljType_{0} \\
                                                \langle \ljObj, \langle \dot{\ljEnv}, \ljFunc \ljVar.\ljExp \rangle, \ljProto \rangle = \ljHeap'(\ljLocation) \\\\
                                                \ljHeap',\ljEnv,\ljType ~\entails~ \ljExp_1 ~\eval~ \ljHeap'' ~|~ \ljVal_{1}:\ljType_{1} \\\\
                                                \ljHeap'',\dot{\ljEnv}[\ljVar \mapsto \ljVal_{1}:\ljType_{1}],\ljType\joinType\ljType_{0} ~\entails~ \ljExp ~\eval~ \ljHeap''' ~|~ \ljVal:\ljType_{\ljVal}
                                }
                                {
                                                \ljHeap,\ljEnv,\ljType ~\entails~ \ljExp_0(\ljExp_1) ~\eval~ \ljHeap''' ~|~ \ljVal:\ljType_{\ljVal}
                                }\and
                                \inferrule [\LDJConditionTrue]
                                {
                                                \ljHeap,\ljEnv,\ljType ~\entails~ \ljExp_0 ~\eval~ \ljHeap' ~|~ \ljVal_{0}:\ljType_{0} \\\\
                                                \ljVal_0 = \ljTrue \\
                                                \ljHeap',\ljEnv,\ljType\joinType\ljType_{0} ~\entails~ \ljExp_1 ~\eval~ \ljHeap_{1}'' ~|~ \ljVal_{1}:\ljType_{1}
                                }
                                {
                                                \ljHeap,\ljEnv,\ljType ~\entails~ \ljIf~ (\ljExp_0)~ \ljExp_1,~ \ljExp_2 ~\eval~ \ljHeap_{1}'' ~|~ \ljVal_1:\ljType_{1}
                                }\and
                                \inferrule [\LDJConditionFalse]
                                {
                                                \ljHeap,\ljEnv,\ljType ~\entails~ \ljExp_0 ~\eval~ \ljHeap' ~|~ \ljVal_0:\ljType_{0} \\\\
                                                \ljVal_0 \neq \ljTrue \\
                                                \ljHeap',\ljEnv,\ljType \joinType \ljType_{0} ~\entails~ \ljExp_2 ~\eval~ \ljHeap_{2}'' ~|~ \ljVal_2:\ljType_{2}
                                }
                                {
                                                \ljHeap,\ljEnv,\ljType ~\entails~ \ljIf~ (\ljExp_0)~ \ljExp_1,~ \ljExp_2 ~\eval~ \ljHeap_{2}'' ~|~ \ljVal_2:\ljType_{2}
                                }\and
                                \inferrule [\LDJPropertyReference]
                                {
                                                \ljHeap,\ljEnv,\ljType ~\entails~ \ljExp_{0} ~\eval~ \ljHeap' ~|~ \ljLocation:\ljType_{\ljLocation} \\\\
                                                \ljHeap',\ljEnv,\ljType ~\entails~ \ljExp_{1} ~\eval~ \ljHeap'' ~|~ \ljStr:\ljType_{\ljStr}
                                }
                                {
                                                \ljHeap,\ljEnv,\ljType ~\entails~ \ljExp_{0}[\ljExp_{1}] ~\eval~ \ljHeap'' ~|~
                                                \ljHeap''(\ljLocation)(\ljStr)\joinType\ljType_{\ljLocation}\joinType\ljType_{\ljStr}
                                }\and
                                \inferrule [\LDJPropertyAssignment]
                                {
                                                \ljHeap,\ljEnv,\ljType ~\entails~ \ljExp_{0} ~\eval~ \ljHeap' ~|~ \ljLocation:\ljType_{\ljLocation} \\\\
                                                \ljHeap',\ljEnv,\ljType ~\entails~ \ljExp_{1} ~\eval~ \ljHeap'' ~|~ \ljStr:\ljType_{\ljStr} \\\\
                                                \ljHeap'',\ljEnv,\ljType ~\entails~ \ljExp_{2} ~\eval~ \ljHeap''' ~|~ \ljVal:\ljType_{\ljVal}\\
                                                \ljHeap'''' = \ljHeap'''[\ljLocation,\ljStr \mapsto \ljVal:\ljType_{\ljVal} \joinType \ljType_{\ljLocation} \joinType \ljType_{\ljStr}]
                                }
                                {
                                                \ljHeap,\ljEnv,\ljType ~\entails~ \ljExp_0[\ljExp_1] = \ljExp_2 ~\eval~ \ljHeap'''' ~|~ \ljVal:\ljType_{\ljVal}
                                }\and
                                \inferrule [\LDJTrace]
                                {
                                                \ljHeap,\ljEnv,\ljType\joinType\ljSource ~\entails~ \ljExp ~\eval~ \ljHeap' ~|~ \ljVal:\ljType_{\ljVal}
                                }
                                {
                                                \ljHeap,\ljEnv,\ljType ~\entails~ \ljTrace~ (\ljExp) ~\eval~ \ljHeap' ~|~ \ljVal:\ljType_{\ljVal}
                                }
                \end{mathpar}
                \caption{Inference rules of $\ldj$.}
                \label{fig:inference-rules_ldj}
\end{figure}

The big-step evaluation judgment  $\ljHeap,\ljEnv,\ljType ~\entails~ \ljExp ~\eval~
\ljHeap' ~|~ \ljTypedVal$ for $\ldj$ extends the one for $\lj$ by a
new component $\ljType$ which tracks the context dependency for expression $\ljExp$.  Figure
\ref{fig:inference-rules_ldj} contains its defining inference rules.

The evaluation rules \Rule{\LDJConstant}, \Rule{\LDJVariable}, and \Rule{\LDJFunctionCreation} are trivial. 
Their return values depend on the context.  \Rule{\LDJOperation} calculates the
result on the value part and combines the dependencies of the involved
values. $\ljOperation$ stands for the application of operator $\ljOp$.
The rule \Rule{\LDJObjectCreation} binds the 
dependency of the evaluated prototype to the returned location.  During \Rule{\LDJFunctionApplication}
the dependency of the value referencing the function is bound to the sub-context.  In a similar way
\Rule{\LDJConditionTrue} and \Rule{\LDJConditionFalse} bind the dependency of the condition to the sub-context.
The rule \Rule{\LDJPropertyReference} combines the dependencies of heap location and property reference to the returned value.
The rule \Rule{\LDJPropertyAssignment} combines these dependencies to the assigned value because the evaluated location and
property references affect the write operation and further the value which is accessible at this location.

The trace expression $\ljTrace~ (\ljExp)$ \Rule{\LDJTrace} adds the
$\ljSource$ annotation to the context of expression 
$\ljExp$. This addition causes all values created or modified in
$\ljExp$ to be marked with $\ljSource$ (e.g. to detect side effects) as stated by the following context
dependency lemma.

\begin{lemma}
                \label{thm:context_dependency}
                $\ljHeap,\ljEnv,\ljType \entails \ljExp \eval \ljHeap' \mid \ljVal:\ljType_{\ljVal}$
                implies that
                $\ljType\subseteq\ljType_{\ljVal}$.
\end{lemma}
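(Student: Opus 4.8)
The plan is to proceed by induction on the height of the derivation of the evaluation judgment $\ljHeap,\ljEnv,\ljType \entails \ljExp \eval \ljHeap' \mid \ljVal:\ljType_{\ljVal}$, with a case distinction on the last rule applied. Inducting on the derivation rather than on the structure of $\ljExp$ is essential: in \Rule{\LDJFunctionApplication} the evaluation recurses into the body expression stored in a closure, which is not a syntactic subterm of the application. Before the case analysis I would fix the reading of $\subseteq$ on dependency annotations: since $\joinType$ is associative, commutative, and idempotent with unit $\emptyset$, an annotation $\ljType$ denotes a finite set of source marks, $\joinType$ is their union, and $\ljType \subseteq \ljType'$ is inclusion. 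The two facts I need are then immediate, and I would record them as a preliminary remark: \emph{extensivity}, $\ljType \subseteq \ljType \joinType \ljType'$ (and symmetrically $\ljType' \subseteq \ljType \joinType \ljType'$), together with \emph{transitivity} of $\subseteq$.

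With these in hand every case reduces to the same pattern. In the axioms \Rule{\LDJConstant} and \Rule{\LDJFunctionCreation} the returned annotation is literally $\ljType$, so the claim holds by reflexivity; in \Rule{\LDJVariable} the result is $\ljEnv(\ljVar)\joinType\ljType$, in which $\ljType$ occurs as a summand, so extensivity applies. In \Rule{\LDJObjectCreation} and \Rule{\LDJPropertyAssignment} the subderivation producing the returned value runs under the \emph{same} context $\ljType$ and returns exactly $\ljType_{\ljVal}$, so the induction hypothesis gives $\ljType \subseteq \ljType_{\ljVal}$ directly. In \Rule{\LDJOperation} and \Rule{\LDJPropertyReference} the relevant subderivation also runs under $\ljType$; the induction hypothesis yields $\ljType \subseteq \ljType_{0}$ (resp. $\ljType \subseteq \ljType_{\ljLocation}$), and since the returned annotation is a join having that annotation as a summand, extensivity and transitivity close the case.

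The only cases requiring a small extra step are those that \emph{extend} the context before recursing: \Rule{\LDJFunctionApplication} and \Rule{\LDJTrace} evaluate their continuation under $\ljType\joinType\ljType_{0}$ resp. $\ljType\joinType\ljSource$, and \Rule{\LDJConditionTrue}, \Rule{\LDJConditionFalse} evaluate the chosen branch under $\ljType\joinType\ljType_{0}$. Here the induction hypothesis is applied at the \emph{larger} context and delivers, e.g., $\ljType\joinType\ljType_{0} \subseteq \ljType_{\ljVal}$; combining this with extensivity $\ljType \subseteq \ljType\joinType\ljType_{0}$ and transitivity gives $\ljType \subseteq \ljType_{\ljVal}$. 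I expect no real obstacle: the lemma is a structural invariant that falls out of the fact that the context $\ljType$ is threaded into the returned annotation of every rule, either directly or through a monotone join. The two points to state carefully are the semantic reading of $\subseteq$ and the extensivity of $\joinType$, on which every case silently relies, and the choice to induct on derivations so that the application case is genuinely covered.
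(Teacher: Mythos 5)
Your proposal is correct and follows essentially the same route as the paper's proof: induction on the derivation of the evaluation judgment, with each case resolved either by reflexivity, by extensivity of $\joinType$, or by applying the induction hypothesis at the (possibly enlarged) context and chaining $\ljType\subseteq\ljType\joinType\ljType_{0}\subseteq\ljType_{\ljVal}$. Your explicit remarks on the set-theoretic reading of $\subseteq$, the extensivity of $\joinType$, and the reason for inducting on derivations rather than on expression structure are points the paper leaves implicit, but they do not change the argument.
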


The proof is by induction on the relation $\eval$ (Section \ref{sec:proof_context}).

\section{Abstract Analysis} \label{sec:abstract_analysis}

\begin{figure}[t]
                \centering
                \begin{displaymath}
                                \begin{array}{llrll}
                                                \LvUndefined &&::=& \powerset(\{\lvUndefined\})\\
                                                \LvNull &&::=& \powerset(\{\lvNull\})\\
                                                \LvBool &&::=& \powerset(\{\lvTrue,\lvFalse\})\\
                                                \LvNum &&::=& \lvNUM^\top_\perp\\
                                                \LvString &&::=& \lvCHAR^\top_\perp\\
                                                \ALatticeValue &\ni~ \aLattice&::=& \LvUndefined ~\times~ \LvNull ~\times~\\
                                                &&&\LvBool ~\times~ \LvNum ~\times~ \LvString
                                \end{array}
                \end{displaymath}
                \caption{Base Type Value Lattice.}
                \label{fig:lattice-value}
\end{figure}

\begin{figure}[t]
                \centering
                \begin{displaymath}
                                \begin{array}{llrll}
                                                \ALabel &\ni~ \aObjlabel &::=& \{\ljSLocation\dots\}\\
                                                \AClosure &\ni~  \aFunc &::=& \langle \aScope, \ljFunc \ljVar.\ljExp \rangle\\
                                                \AObject &\ni~ \aObj &::=& \emptyset ~|~ \aObj[\aLattice \mapsto \aVal]\\
                                                \AValue &\ni~ \aVal &::=& \langle \aLattice, \aObjlabel, \D \rangle\\
                                                \AStorable &\ni~ \aStorable &::=& \langle \aObj, \aFunc, \aObjlabel \rangle\\
                                                \AFunctionStore &\ni~ \aStore &::=& \emptyset ~|~ \aStore[\ljSLocation \mapsto \langle \aState, \aVal, \aState, \aVal \rangle]\\
                                                \AScope &\ni~ \aScope &::=& \emptyset ~|~ \aScope[\ljVar \mapsto \aVal]\\
                                                \AStorableStore &\ni~ \aObjStore &::=& \emptyset ~|~ \aObjStore[\ljSLocation \mapsto \aStorable]\\
                                                \AState &\ni~ \aState &::=& \langle \aObjStore, \D \rangle\\
                                                \DDependency &\ni~ \D &::=& \emptyset ~|~ \ljSLocation ~|~ \D \join \D\\
                                \end{array}
                \end{displaymath}
                \caption{Abstract Semantic Domains.}
                \label{fig:abstract-semantic-domains}
\end{figure}

The analysis is an abstraction of the $\ldj$ calculus. Its basis is the lattice for base type values
(Figure \ref{fig:lattice-value}), which is a simplified adaptation of
the lattice of TAJS  \cite{tajs2009}.  $\lvNUM$ is the set of floating point numbers, $\lvCHAR$ the set of string literals,
and the annotation ${\cdot}^\top_\bot$ turns a set into a flat lattice by adding a bottom and top element. An element of the analysis
lattice is a tuple like $\langle\perp,\perp,\lvTrue,\perp,\mathtt{"x"}\rangle$ which represents a value which is either
the boolean value $\lvTrue$ or the string $\mathtt{"x"}$. Further, $\langle\perp,\perp,\perp,\top,\perp\rangle$
represents all possible number values. The abstract semantic domains
(Figure \ref{fig:abstract-semantic-domains}) are similar to the domains arising from the $\ldj$ calculus except that 
a set of marks $\aObjlabel$ abstracts a set of concrete locations $\ljLocation$ where $\ljSLocation\in\aObjlabel$. 
An abstract value $\aVal ~=~ \langle \aLattice,\aObjlabel,\D \rangle$ is a triple of a lattice element $\aLattice$,
object marks $\aObjlabel$, and dependency $\D$.

Hence, each abstract value represents a set of base type values and a set of objects.  We write $\aLattice_{\aVal}$ for
the analysis lattice, $\aObjlabel_{\aVal}$ for the marks, and $\D_{\aVal}$ for the dependency component of the
abstract value $\aVal$.  Each abstract object is identified by the
mark $\ljSLocation$ corresponding to the $\ljNew~\ljExp$ 
expression creating the object. 
An abstract storable consists of an abstract object, a function closure, and a set of locations representing the prototype.
Unlike before, the abstract object maps a lattice element to a value.
This mechanism reduces the number of merge operations during the abstract analysis.

The abstract state is a pair $\aState = \langle \aObjStore,\D \rangle$ where $\aObjStore$ is the mapping from
marks to abstract storables. We write $\aObjStore_{\aState}$ for the object store, and $\D_{\aState}$ for the dependency in
$\aState$. $\aState(\aObjlabel)$ provides a set of storables, denoted
by $\aStorables$. The substitution of $\D$ in $\aState$  
written $\aState[\D\mapsto\D']\equiv\langle\aObjStore_{\aState},\D'\rangle$ replaces the state dependency.

To handle recursive function calls we introduce a global function store $\aStore$, which maps a mark $\ljSLocation$
to two pairs of state $\aState$ and value $\aVal$.  Functions are also
identified by marks $\ljSLocation$.  The function store contains  the
merged result of the last evaluation for each function. The first pair $\aState_{\aFuncInput},
\aVal_{\aFuncInput}$ represents the input state and input parameter of all heretofore taken function calls, the second
one $\aState_{\aFuncOutput}, \aVal_{\aFuncOutput}$ the output state and return value. For further use we write
$\aStore(\ljSLocation)_{\aFuncInput}$ to select the input, and $\aStore(\ljSLocation)_{\aFuncOutput}$ for the output.
The substitutions $\aStore[\ljSLocation,\aFuncInput\mapsto\langle\aState,\aVal\rangle]$ and
$\aStore[\ljSLocation,\aFuncOutput\mapsto\langle\aState,\aVal\rangle]$ denotes the store update operation
on input or output pairs.

Its inference is stated by the following lemma.

\begin{lemma}[Function Store] \label{thm:function_store}
  $\forall\aStore,\aFunc,\aState,\aVal:$
  If $\langle\aState,\aVal\rangle\sqsubseteq\aStore(\ljSLocation)_{\aFuncInput}$
  and $\aFunc = \langle \dot{\aScope}, \ljFunc\ljVar.\ljExp \rangle$
  then $\aState,\dot{\aScope}[\ljVar\mapsto\aVal] ~\entails~ \ljExp ~\eval~ \aState' ~|~ \aVal'$
  and $\langle\aState',\aVal'\rangle\sqsubseteq\aStore(\ljSLocation)_{\aFuncOutput}$.
\end{lemma}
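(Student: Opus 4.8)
\emph{Approach.} The statement is a soundness invariant of the global function store: any input pair that is subsumed by the recorded input of $\ljSLocation$ must re-evaluate to an output subsumed by the recorded output. Throughout I treat $\aStore$ as the store at analysis convergence, i.e. the post-fixpoint produced by the \Rule{App-Iteration} rules, since for an arbitrary unsaturated store no such guarantee could hold. Writing $\langle\aState_{\aFuncInput},\aVal_{\aFuncInput}\rangle = \aStore(\ljSLocation)_{\aFuncInput}$ and $\langle\aState_{\aFuncOutput},\aVal_{\aFuncOutput}\rangle = \aStore(\ljSLocation)_{\aFuncOutput}$, the plan is to reduce the goal to two separable ingredients — monotonicity of the abstract evaluation relation and a post-fixpoint property of $\aStore$ — and then compose them. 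First I would treat the boundary case where the input coincides with the recorded input $\langle\aState_{\aFuncInput},\aVal_{\aFuncInput}\rangle$, and afterwards lift to an arbitrary subsumed input $\langle\aState,\aVal\rangle\sqsubseteq\langle\aState_{\aFuncInput},\aVal_{\aFuncInput}\rangle$ by monotonicity.

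\emph{Monotonicity.} Separately I would establish the auxiliary fact that abstract evaluation is monotone: if $\aState_1\sqsubseteq\aState_2$, $\aScope_1\sqsubseteq\aScope_2$ and $\aState_1,\aScope_1\entails\ljExp\eval\aState_1'\mid\aVal_1'$, then $\aState_2,\aScope_2\entails\ljExp\eval\aState_2'\mid\aVal_2'$ with $\langle\aState_1',\aVal_1'\rangle\sqsubseteq\langle\aState_2',\aVal_2'\rangle$. This is proved by induction on the derivation of the first judgment with one case per abstract rule, each case appealing to monotonicity of the lattice operations, of $\join$, of the abstract object and storable lookups, and of the dependency join. The subtle case is function application, where the result is read out of $\aStore$: a larger input either falls under the same recorded input entry for the callee or forces that entry to grow, and in both situations the looked-up output is at least as large, so monotonicity is preserved.

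\emph{Fixpoint property and composition.} By construction of the \Rule{App-Iteration} rules, $\aStore(\ljSLocation)_{\aFuncOutput}$ is the join of the body evaluations over all inputs accumulated into $\aStore(\ljSLocation)_{\aFuncInput}$, and the analysis halts only once re-evaluating the body on the recorded input no longer enlarges the store (\Rule{App-Store-Subset} applies). Hence evaluating $\ljExp$ in $\aState_{\aFuncInput},\dot{\aScope}[\ljVar\mapsto\aVal_{\aFuncInput}]$ yields an output $\langle\aState_{\aFuncOutput}',\aVal_{\aFuncOutput}'\rangle\sqsubseteq\langle\aState_{\aFuncOutput},\aVal_{\aFuncOutput}\rangle$. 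Instantiating the monotonicity lemma with the smaller environment $\dot{\aScope}[\ljVar\mapsto\aVal]$ and the larger $\dot{\aScope}[\ljVar\mapsto\aVal_{\aFuncInput}]$ (using $\aVal\sqsubseteq\aVal_{\aFuncInput}$ from the hypothesis) produces $\aState,\dot{\aScope}[\ljVar\mapsto\aVal]\entails\ljExp\eval\aState'\mid\aVal'$ with $\langle\aState',\aVal'\rangle\sqsubseteq\langle\aState_{\aFuncOutput}',\aVal_{\aFuncOutput}'\rangle\sqsubseteq\langle\aState_{\aFuncOutput},\aVal_{\aFuncOutput}\rangle$, which is exactly the conclusion.

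\emph{Main obstacle.} The chief difficulty is the circular dependence between the two ingredients: abstract evaluation consults $\aStore$ for nested and recursive applications, so monotonicity of $\eval$ and the post-fixpoint property of $\aStore$ cannot be proved in isolation. I would break the circularity by fixing $\aStore$ at its converged value and carrying out a well-founded induction on the iteration history of the store (or on the approximation order of $\aStore$) rather than on derivations alone, so that a reference to $\aStore$ inside an application case may legitimately invoke the store invariant already established at that location. A secondary point requiring care is that the order $\sqsubseteq$ on $\langle\aState,\aVal\rangle$ pairs be the componentwise order that every rule actually preserves, in particular across the dependency component $\D$ and the object store $\aObjStore$.
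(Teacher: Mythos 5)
Your proposal is correct in substance but takes a genuinely different route from the paper's. The paper disposes of this lemma in a single sentence --- induction on the derivation of the iteration judgment $\entailsFAIteration$ that appears in \Rule{\AFunctionApplication} --- i.e., it treats the statement as an invariant that \Rule{\AFunctionStoreSubset} assumes and that \Rule{\AFunctionStoreNonSubset} re-establishes whenever the input entry grows, leaving all details to the reader. You instead factor the claim into (i) monotonicity of abstract evaluation in $\aState$ and $\aScope$, which the paper only states and proves much later as Lemma~\ref{thm:monotony} in service of the termination theorem, and (ii) a post-fixpoint property of $\aStore$ guaranteed by the program-level iteration (\Rule{\AProgramIterationEquals}), and then compose the two. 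Your explicit restriction to the converged store is a genuine gain in precision: as literally stated the lemma fails for an arbitrary store (e.g., immediately after \Rule{\AFunctionCreation} initializes an entry to $\langle\aStateBottom,\aValBottom,\aStateBottom,\aValBottom\rangle$, or in the window between the input update and the output update inside \Rule{\AFunctionStoreNonSubset}), so some such side condition is needed and the paper leaves it implicit. What your route buys is a clean separation of concerns and a reusable monotonicity lemma; what it costs is the circularity you yourself identify, since the application case of the monotonicity induction consults $\aStore$. Your repair --- fixing $\aStore$ at convergence and inducting on the iteration history --- is the right one, and you could simplify it further by observing that at a true fixpoint only \Rule{\AFunctionStoreSubset} can fire, so the application case of the monotonicity induction reduces to a constant store lookup and the circularity largely dissolves.
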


                                The proof is by induction on the
                derivation of $\aState''[\D \mapsto \D_{\aState''} \join \D_{0}] ~\entailsFAIteration~ \aState''(\aObjlabel_{0}),\aVal_{1} ~\eval~ \aState''' ~|~ \aVal$.

The $\ljTrace$ expression registers the dependency from $\ell$ on all
values that pass through it.

The abstraction is defined as relation between $\ljVal\in\LjTaintedValue$ and $\aVal\in\AValue$.

\begin{definition}[Abstraction] \label{def:abstraction_function}
                The abstraction $\alpha:\LjTaintedValue \\ \rightarrow \AValue$ is defined as:
                \begin{align}
                                \alpha(\ljVal:\ljType) ~::=~
                                \begin{cases}
                                                \langle \perp,\{\ljSLocation\},\{\ljSource|\ljSource\in\ljType\} \rangle & \ljVal=\ljLocation\\
                                                \langle \ljConst,\emptyset,\{\ljSource|\ljSource\in\ljType\} \rangle & \ljVal=\ljConst
                                \end{cases}
                \end{align}
\end{definition}

\begin{definition}[Abstract Operation] \label{def:abstract_operation}
                The abstract operation $\abstractOperation$ is defined
                in terms of the concrete operation $\ljOperation$ as usual:
                \begin{align}
                                \begin{split}
                                                &\abstractOperation(\aVal_{0},\aVal_{1}) ~::=~ \\
                                                &\bigsqcup \{ \ljOperation(\ljVal_{0},\ljVal_{1}) ~|~ \ljVal_{0}\in\aVal_{0}, \ljVal_{1}\in\aVal_{1} \}
                                \end{split}
                \end{align}
                                This definition implies that:
                \begin{align}
                                \begin{split}
                                                &\ljOperation(\ljVal_{0},\ljVal_{1})=\ljVal_{op} ~\rightarrow~\\
                                                &\abstractOperation(\alpha(\ljVal_{0}),\alpha(\ljVal_{1}))\sqsupseteq\alpha(\ljVal_{op})
                                \end{split}
                \end{align}
\end{definition}


\begin{figure}[t]
                \centering
                \begin{mathpar}
                                \inferrule [\AProgram]
                                {
                                                \aStateBottom,\aScopeBottom ~\entails~ \ljExp ~\eval~ \aState ~|~ \aVal\\\\
                                                \entailsP~ \langle \aStoreBottom, \aStateBottom, \aValBottom \rangle, \langle \aStore, \aState, \aVal \rangle, \ljExp ~\eval~ \aState' ~|~ \aVal'
                                }
                                {
                                                \entails~ \ljExp ~\eval~ \aState' ~|~ \aVal'
                                }\and
                                \inferrule [\AProgramIterationNotEquals]
                                {
                                                \aStateBottom,\aScopeBottom ~\entails~ \ljExp ~\eval~ \aState' ~|~ \aVal' \\\\
                                                \entailsP~ \aALatticeR',\langle \aStore, \aState', \aVal' \rangle, \ljExp ~\eval~ \aALatticeQ
                                }
                                {
                                                \entailsP~ \aALatticeR,\aALatticeR',\ljExp ~\eval~ \aALatticeQ
                                }\and
                                \inferrule [\AProgramIterationEquals]
                                {
                                }
                                {
                                                \entailsP~ \aALatticeR,\aALatticeR,\ljExp ~\eval~ \aALatticeR
                                }
                \end{mathpar}
                \caption{Inference rules for program interpretation.}
                \label{fig:inference-rules_program}
\end{figure}


\begin{figure*}[!htb]
                \centering
                \begin{mathpar}
                                \inferrule [\AConstant]
                                {
                                }
                                {
                                                \aState,\aScope ~\entails~ \ljConst ~\eval~ \aState ~|~ \langle \ljConst, \emptyset, \D_{\aState} \rangle
                                }\and
                                \inferrule [\AVariable]
                                {
                                }
                                {
                                                \aState,\aScope ~\entails~ \ljVar ~\eval~ \aState ~|~ \aScope(\ljVar) \join \D_{\aState}
                                }\and
                                \inferrule [\AOperation]
                                {
                                                \aState,\aScope ~\entails~ \ljExp_0 ~\eval~ \aState' ~|~ \aVal_{0}\\\\
                                                \aState',\aScope ~\entails~ \ljExp_1 ~\eval~ \aState'' ~|~ \aVal_{1} \\\\
                                                \langle\aLattice, \aObjlabel\rangle = \abstractOperation(\aVal_0, \aVal_1)
                                }
                                {
                                                \aState,\aScope ~\entails~ \ljOp(\ljExp_0, \ljExp_1) ~\eval~ \aState'' ~|~ \langle\aLattice,\aObjlabel,\D_{\aVal_0}\sqcup\D_{\aVal_1}\rangle
                                }\and
                                \inferrule [\AObjectCreation]
                                {
                                                \ljSLocation \notin \dom(\aState) \\
                                                \aState,\aScope ~\entails~ \ljExp ~\eval~ \aState' ~|~ \langle \aLattice, \aObjlabel, \D \rangle
                                }
                                {
                                                \aState,\aScope ~\entails~ \ljNew~ \ljExp
                                                ~\eval~ \aState'[\ljSLocation\mapsto\aObjlabel] ~|~ \langle\aLatticeBottom,\{\ljSLocation\},\D_{\aState}\join\D\rangle
                                }\and
                                \inferrule [\AObjectCreationExisting]
                                {
                                                \ljSLocation \in \dom(\aState) \\
                                                \aState,\aScope ~\entails~ \ljExp ~\eval~ \aState' ~|~ \langle \aLattice, \aObjlabel, \D \rangle
                                }
                                {
                                                \aState,\aScope ~\entails~ \ljNew~ \ljExp
                                                ~\eval~ \aState'[\ljSLocation\mapsto\aState(\ljSLocation)\join\langle\emptyset,\aFuncBottom,\aObjlabel\rangle]
                                                ~|~ \langle\aLatticeBottom,\{\ljSLocation\},\D_{\aState}\join\D\rangle
                                }\and
                                \inferrule [\AFunctionCreation]
                                {
                                                \ljSLocation \notin \dom(\aState) \\
                                                \aStore[\ljSLocation\mapsto\langle \aStateBottom, \aValBottom, \aStateBottom, \aValBottom \rangle]
                                }
                                {
                                                \aState,\aScope ~\entails~ \ljFunc \ljVar.\ljExp ~\eval~ \aState[\ljSLocation\mapsto\langle \aScope, \ljFunc \ljVar.\ljExp \rangle]  ~|~
                                                \langle\aLatticeBottom,\{\ljSLocation\},\D_{\aState}\rangle
                                }\and
                                \inferrule [\AFunctionCreationExisting]
                                {
                                                \ljSLocation \in \dom(\aState) \\
                                                \langle \dot{\aScope}, \ljFunc\ljVar.\ljExp \rangle = \aState(\ljSLocation)_{\aFunc}
                                }
                                {
                                                \aState,\aScope ~\entails~ \ljFunc \ljVar.\ljExp ~\eval~ \aState[\ljSLocation\mapsto\langle \aScope\join\dot{\aScope}, \ljFunc \ljVar.\ljExp \rangle] ~|~
                                                \langle\aLatticeBottom,\{\ljSLocation\},\D_{\aState}\rangle
                                }\and
                                \inferrule [\AFunctionApplication]
                                {
                                                \aState,\aScope ~\entails~ \ljExp_0 ~\eval~ \aState' ~|~ \langle \aLattice_{0}, \aObjlabel_{0}, \D_{0} \rangle\\\\
                                                \aState', \aScope ~\entails~ \ljExp_1 ~\eval~ \aState'' ~|~ \aVal_{1}\\\\
                                                \aState''[\D \mapsto \D_{\aState''} \join \D_{0}] ~\entailsFAIteration~ \aState''(\aObjlabel_{0}),\aVal_{1} ~\eval~ \aState''' ~|~ \aVal
                                }
                                {
                                                \aState,\aScope ~\entails~ \ljExp_0(\ljExp_1) ~\eval~ \langle\aObjStore_{\aState'''},\D_{\aState}\rangle ~|~ \aVal
                                }\and
                                \inferrule [\APropertyReference]
                                {
                                                \aState,\aScope ~\entails~ \ljExp_0 ~\eval~ \aState' ~|~ \langle \aLattice_{0}, \aObjlabel_{0}, \D_{0} \rangle \\\\
                                                \aState',\aScope ~\entails~ \ljExp_1 ~\eval~ \aState'' ~|~ \langle \lvString, \aObjlabel_{1}, \D_{1} \rangle \\\\
                                                \aState'' ~\entailsPRIteration~ \aState''(\aObjlabel_{0}),\lvString ~\eval~ \aVal
                                }
                                {
                                                \aState,\aScope ~\entails~ \ljExp_0[\ljExp_1] ~\eval~ \aState'' ~|~ \langle \aLattice_{\aVal}, \aObjlabel_{\aVal}, \D_{0} \join \D_{1} \join \D_{\aVal} \rangle
                                }\and
                                \inferrule [\APropertyAssignment]
                                {
                                                \aState,\aScope ~\entails~ \ljExp_0 ~\eval~ \aState' ~|~ \langle \aLattice_{0}, \aObjlabel_{0}, \D_{0} \rangle \\\\
                                                \aState',\aScope ~\entails~ \ljExp_1 ~\eval~ \aState'' ~|~ \langle \lvString, \aObjlabel_{1}, \D_{1} \rangle \\\\
                                                \aState'',\aScope ~\entails~ \ljExp_2 ~\eval~ \aState''' ~|~ \aVal \\\\
                                                \aState''' ~\entailsPAIteration~ \aObjlabel_{0},\lvString,\langle \aLattice_{\aVal}, \aObjlabel_{\aVal},\D_{0} \join \D_{1} \join \D_{\aVal} \rangle ~\eval~ \aState''''
                                }
                                {
                                                \aState,\aScope ~\entails~ \ljExp_0[\ljExp_1] = \ljExp_2 ~\eval~ \aState'''' ~|~ \aVal
                                }\and
                                \inferrule [\AConditionTrue]
                                {
                                                \aState,\aScope ~\entails~ \ljExp_0 ~\eval~ \aState' ~|~ \langle \aLattice_{0}, \aObjlabel_{0}, \D_{0} \rangle \\\\
                                                \aLattice_{0}= \lvTrue \\
                                                \aState'[\D \mapsto \D_{\aState'} \join \D_{0}],\aScope ~\entails~ \ljExp_1 ~\eval~ \aState'' ~|~ \aVal_1
                                }
                                {
                                                \aState,\aScope ~\entails~ \ljIf~ (\ljExp_0)~ \ljExp_1,~ \ljExp_2 ~\eval~ \langle\aObjStore_{\aState''},\D_{\aState}\rangle ~|~ \aVal_1
                                }\and
                                \inferrule [\AConditionFalse]
                                {
                                                \aState,\aScope ~\entails~ \ljExp_0 ~\eval~ \aState' ~|~ \langle \aLattice_{0}, \aObjlabel_{0}, \D_{0} \rangle \\\\
                                                \aLattice_{0}= \lvFalse \\
                                                \aState'[\D \mapsto \D_{\aState'} \join \D_{0}],\aScope ~\entails~ \ljExp_2 ~\eval~ \aState'' ~|~ \aVal_2
                                }
                                {
                                                \aState,\aScope ~\entails~ \ljIf~ (\ljExp_0)~ \ljExp_1,~ \ljExp_2 ~\eval~ \langle\aObjStore_{\aState''},\D_{\aState}\rangle ~|~ \aVal_2
                                }\and
                                \inferrule [\ACondition]
                                {
                                                \aState,\aScope ~\entails~ \ljExp_0 ~\eval~ \aState' ~|~ \langle \aLattice_{0}, \aObjlabel_{0}, \D_{0} \rangle \\\\
                                                \aLattice_0 \neq \lvTrue \wedge \aLattice_0 \neq \lvFalse  \\\\
                                                \aState'[\D \mapsto \D_{\aState'} \join \D_{0}],\aScope ~\entails~ \ljExp_1 ~\eval~ \aState_1'' ~|~ \aVal_1 \\\\
                                                \aState'[\D \mapsto \D_{\aState'} \join \D_{0}],\aScope ~\entails~ \ljExp_2 ~\eval~ \aState_2'' ~|~ \aVal_2
                                }
                                {
                                                \aState,\aScope ~\entails~ \ljIf~ (\ljExp_0)~ \ljExp_1,~ \ljExp_2 ~\eval~ \langle\aObjStore_{\aState_1''}\join\aObjStore_{\aState_2''},\D_{\aState}\rangle
                                                ~|~ \aVal_1 \join \aVal_2
                                }\and
                                \inferrule [\ATrace]
                                {
                                                \aState[\D\mapsto\D_{\aState}\join\ljSource], \aScope ~\entails~ \ljExp ~\eval~ \aState' ~|~ \aVal
                                }
                                {
                                                \aState,\aScope ~\entails~ \ljTrace~ (\ljExp) ~\eval~ \langle \aObjStore_{\aState'}, \D_{\aState} \rangle ~|~ \aVal
                                }
                \end{mathpar}
                \caption{Inference rules for abstract interpretation.}
                \label{fig:inference-rules_abstract}
\end{figure*}

\begin{figure}[t]
                \centering
                \begin{mathpar}
                                \inferrule [\AFunctionIteration]
                                {
                                                \aState ~\entailsFAApplication~ \aFunc,\aVal ~\eval~ \aState' ~|~ \aVal'\\\\
                                                \aState' ~\entailsFAIteration~ \aStorables,\aVal ~\eval~ \aState'' ~|~ \aVal''
                                }
                                {
                                                \aState ~\entailsFAIteration~ \langle\aObj,\aFunc,\aObjlabel\rangle;\aStorables,\aVal ~\eval~ \aState'' ~|~ \aVal' \join \aVal''
                                }\and
                                \inferrule [\AFunctionIterationEmpty]
                                {
                                }
                                {
                                                \aState ~\entailsFAIteration~ \emptyset,\aVal ~\eval~ \aState ~|~ \aValBottom
                                }\and
                                \inferrule [\AFunctionStoreSubset]
                                {
                                                \langle\aState,\aVal\rangle \sqsubseteq \aStore(\ljSLocation)_{\aFuncInput}\\
                                                \langle\aState',\aVal'\rangle = \aStore(\ljSLocation)_{\aFuncOutput}
                                }
                                {
                                                \aState ~\entailsFAApplication~ \aFunc,\aVal ~\eval~ \aState' ~|~ \aVal'
                                }\and
                                \inferrule [\AFunctionStoreNonSubset]
                                {
                                                \langle\aState,\aVal\rangle \not\sqsubseteq \aStore(\ljSLocation)_{\aFuncInput}\\
                                                \langle \dot{\aScope}, \ljFunc\ljVar.\ljExp \rangle = \aFunc\\\\
                                                \langle\bar{\aState},\bar{\aVal}\rangle = \aStore(\ljSLocation)_{\aFuncInput} \join \langle \aState,\aVal\rangle\\
                                                \aStore[\ljSLocation,\aFuncInput\mapsto\langle\bar{\aState},\bar{\aVal}\rangle]\\\\
                                                \bar{\aState},\dot{\aScope}[\ljVar \mapsto \bar{\aVal}] ~\entails~ \ljExp ~\eval~ \bar{\aState}' ~|~ \bar{\aVal}'\\
                                                \aStore[\ljSLocation,\aFuncOutput\mapsto\langle\bar{\aState}',\bar{\aVal}' \rangle]
                                }
                                {
                                                \aState ~\entailsFAApplication~ \aFunc,\aVal ~\eval~ \bar{\aState}' ~|~ \bar{\aVal}'
                                }       
                \end{mathpar}
                \caption{Inference rules for function application.}
                \label{fig:inference-rules_function-application}
\end{figure}

\begin{figure}[t]
                \centering
                \begin{mathpar}
                                \inferrule [\APropertyReferenceIteration]
                                {
                                                \aState ~\entailsPRIntersection~ \aObj,\aLattice ~\eval~ \aVal\\\\
                                                \aState ~\entailsPRIteration~ \aStorables,\aLattice ~\eval~ \aVal'\\\\
                                                \aState ~\entailsPRIteration~ \aState(\aObjlabel),\aLattice ~\eval~ \aVal''
                                }
                                {
                                                \aState ~\entailsPRIteration~ \langle\aObj,\aFunc,\aObjlabel\rangle;\aStorables,\aLattice ~\eval~ \aVal\join\aVal'\join\aVal''
                                }\and
                                \inferrule [\APropertyReferenceIterationEmpty]
                                {
                                }
                                {
                                                \aState ~\entailsPRIteration~ \emptyset,\aLattice ~\eval~ \aValBottom
                                }\and
                                \inferrule [\APropertyReferenceIntersection]
                                {
                                                \aLattice\sqcap\aLattice_{i}\neq\perp\\\\
                                                \aState ~\entailsPRIteration~ \aObj,\aLattice ~\eval~ \aVal'
                                }
                                {
                                                \aState ~\entailsPRIntersection~ (\aLattice_{i}:\aVal_{i});\aObj, \aLattice ~\eval~ \aVal_{i}\join\aVal'
                                }\and
                                \inferrule [\APropertyReferenceNonIntersection]
                                {
                                                \aLattice\sqcap\aLattice_{i}=\perp\\\\
                                                \aState ~\entailsPRIteration~ \aObj,\aLattice ~\eval~ \aVal'
                                }
                                {
                                                \aState ~\entailsPRIntersection~ (\aLattice_{i}:\aVal_{i});\aObj,\aLattice ~\eval~ \aVal'
                                }\and
                                \inferrule [\APropertyReferenceEmpty]
                                {
                                }
                                {
                                                \aState ~\entailsPRIntersection~ \emptyset, \aLattice ~\eval~ \langle \langle \top,\perp,\perp,\perp,\perp \rangle, \emptyset, \emptyset \rangle
                                }
                \end{mathpar}
                \caption{Inference rules for property reference.}
                \label{fig:inference-rules_property-reference}
\end{figure}

\begin{figure}[t]
                \centering
                \begin{mathpar}
                                \inferrule [\APropertyAssignmentIteration]
                                {
                                                \aState ~\entailsPAAssignment~ \ljSLocation,\aLattice,\aVal ~\eval~ \aState'\\\\
                                                \aState' ~\entailsPAIteration~ \aObjlabel,\aLattice,\aVal ~\eval~ \aState''
                                }
                                {
                                                \aState ~\entailsPAIteration~ \ljSLocation;\aObjlabel,\aLattice,\aVal ~\eval~ \aState''
                                }\and
                                \inferrule [\APropertyAssignmentIterationEmpty]
                                {
                                }
                                {
                                                \aState ~\entailsPAIteration~ \emptyset,\aLattice,\aVal ~\eval~ \aState
                                }\and
                                \inferrule [\APropertyAssignmentInDom]
                                {
                                                \aLattice \in \dom(\aState(\ljSLocation))
                                }
                                {
                                                \aState ~\entailsPAAssignment~ \ljSLocation,\aLattice,\aVal ~\eval~ \aState[\ljSLocation,\aLattice \mapsto \aState(\ljSLocation)(\aLattice) \join \aVal]
                                }\and
                                \inferrule [\APropertyAssignmentNotInDom]
                                {
                                                \aLattice \notin \dom(\aState(\ljSLocation))
                                }
                                {
                                                \aState ~\entailsPAAssignment~ \ljSLocation,\aLattice,\aVal ~\eval~ \aState[\ljSLocation,\aLattice \mapsto \aVal]
                                }
                \end{mathpar}
                \caption{Inference rules for property assignment.}
                \label{fig:inference-rules_property-assignment}
\end{figure}

Figures \ref{fig:inference-rules_program}, \ref{fig:inference-rules_abstract}, 
\ref{fig:inference-rules_function-application}, \ref{fig:inference-rules_property-reference},
and~\ref{fig:inference-rules_property-assignment} show the inference rules for the big-step evaluation judgment of the
abstract semantics.  It has the form $\aState,\aScope ~\entails~ \ljExp ~\eval~ \aState' ~|~ \aVal$.  State $\aState$
and scope $\aScope$ analyze expression $\ljExp$ and result in state $\aState'$ and value $\aVal$.
We use notations similar to Figure~\ref{fig:syntactic_sugar}. 



The global program rule \Rule{\AProgram}
(Figure~\ref{fig:inference-rules_program}) relies on two auxiliary rules
to repeatedly evaluate the 
program until the analysis state, an element of $\AAnalysisLattice$
consisting of $\aStore$, $\aState$ and $\aVal$, becomes stable. In the
figure, 
$\aALatticeR,\aALatticeQ$ range over
$\AAnalysisLattice$ and write
$\aStoreBottom$, $\aStateBottom$ and $\aScopeBottom$ for 
the empty instances of the components.

In Figure~\ref{fig:inference-rules_abstract},
the rules for constants \Rule{\AConstant} and variables
\Rule{\AVariable} work similarly as in $\ldj$. 

The object and function creation rules are also omitted. They check if an object or function, referenced by $\ljSLocation$,
already exists. In this case the object or function creation has to merge the prototypes or scopes.

The rule \Rule{\AOperation} is also standard.  As in $\ldj$ the $\ljTrace$ expression \Rule{\ATrace} assigns mark
$\ljSLocation$ to the sub-state.  The rules for the conditional \Rule{\ACondition}, \Rule{\AConditionTrue},
and \Rule{\AConditionFalse} have to handle the case that it is not possible to
distinguish between $\ljTrue$ and $\ljFalse$.  In this case both branches have to be evaluated and the results merged.

Similar problems arise in function application, property reference, and property assignment. Each value can refer to a
set of objects including a set of prototypes. Therefore each referenced function has to be evaluated \Rule{\AFunctionApplication}
and a property has to be read from \Rule{\APropertyReference} or written to \Rule{\APropertyAssignment} all objects. Results have to be
merged. The auxiliary rules are shown in figure \ref{fig:inference-rules_function-application}, \ref{fig:inference-rules_property-reference},
and~\ref{fig:inference-rules_property-assignment}.

The rules \Rule{\AFunctionIteration} and \Rule{\AFunctionIterationEmpty} iterate over all referenced functions.
Function application relies on the function store $\aStore$.  Before evaluating the function body, the
analyzer checks if the input, consisting of $\aState$ and parameter $\aVal$, is already subsumed by the stored input.
In that case \Rule{\AFunctionStoreSubset}, the stored result, consisting of output state $\aState$ and return value $\aVal$, is used. Otherwise the function body is evaluated \Rule{\AFunctionStoreNonSubset} and the store is updated with the result.


For read and write operations the rules \Rule{\APropertyReferenceIteration}, \Rule{\APropertyReferenceIterationEmpty}, \Rule{\APropertyAssignmentIteration} and \Rule{\APropertyAssignmentIterationEmpty} iterate in a similar way over all references. An abstract object maps a lattice element to a value in case a reference is not a singleton value. All entries having an intersection with the reference are affected by the read operation. The prototype-set has to be involved. \Rule{\APropertyReferenceIntersection}, \Rule{\APropertyReferenceNonIntersection} and \Rule{\APropertyReferenceEmpty} shows its inference. 
Before writing a property, the analyser checks if the property already exists. In this case \Rule{\APropertyAssignmentInDom}, the values get merged. Otherwise \Rule{\APropertyAssignmentNotInDom} the value gets assigned. The actual implementation uses a more refined lattice to improve precision. 

The abstract interpretation over-approximates the dependencies. The
merging of results in  \Rule{\ACondition}, \Rule{\AFunctionApplication}, \Rule{\APropertyReference}, and
\Rule{\APropertyAssignment} may cause false positives.
While some marked values may be independent from the
mark's source, unmarked values are guaranteed to be independent.


\section{Applying the Analysis}
\label{sec:applying-analysis}

This section reconsiders the examples Sensitive Data
(Section~\ref{sec:application_sensitive_data}) and Foreign Code
(Section~\ref{sec:application_foreign_code}) from the introduction
from an abstract analysis point of view.


\subsection{Application: Sensitive Data}
\label{sec:application_sensitive_data_cont}

Given the newly created mark $\ljSLocation_{1}$, the function
\lstinline{userHandler} is initially called with 
$\langle\langle\perp,\perp,\perp,\perp,\texttt{uid1}\rangle,\emptyset,\ljSLocation_{1}\rangle$.
If the result of calling \lstinline{Cookie.isset}
can be determined to be $\lvFalse$, then the dependencies associated with $\lvFalse$ ($\ljSLocation_{1}$ and
$\ljSLocation_{c}$ --- resulting from the cookie interface) are bound to the conditional's context.

The Ajax request cannot be evaluated. So, \lstinline{response} in
\lstinline{onSuccess} is a value containing the
location of an unspecified object like $\emptyset[\langle\perp,\perp,\perp,\perp,\top\rangle\mapsto\langle\langle\perp,\perp,\perp,\perp,\top\rangle,\emptyset,\ljSLocation_{a}\rangle]$
augmented with $\ljSLocation_{a}$. In this case, all further calls to
\lstinline{onSuccess} are already covered by the first input. 

By calling the \lstinline{userHandler} with \lstinline{"uid2"} a new mark $\ljSLocation_{2}$ is introduced.
This call is not covered by the first one so that  the function is reanalyzed with the merged value
$\langle \langle\perp,\perp,\perp,\perp,\LvString\rangle,\emptyset,\{\ljSLocation_{1},\ljSLocation_{2}\}\rangle$.
After  the analysis has stabilized, \lstinline{name1} also depends on $\ljSLocation_{2}$.

The example illustrates that merging functions can result in
conservative results. The implementation has a more refined function
store which is indexed by a pair of 
scope $\aScope$ and source location $\ljSLocation$ to prevent such inaccuracies.


\subsection{Application: Foreign Code}
\label{sec:application_foreign_code_cont}

The \lstinline{trace} expression in line 1
(Section\ref{sec:application_foreign_code}) marks the sub-context for
creating the \lstinline{foreach} function. 
The resulting location that points to the function is augmented with
this mark.
By calling \lstinline{loadForeigenCode} the mark is bound to the
callees context and finally to the value 
referencing the \lstinline{foreach} function. 

By iterating over the array elements (line 11) the dependency
annotation is forwarded to the value occurring in \lstinline{result}.

Unlike many other security analyses, the objects \lstinline{Array} and
\lstinline{Array.prototype} do not receive marks.
If the analysis can determine the updated property exactly, as is the
case with \lstinline{foreach}, then no other properties can be affected
by the update (expect the length). Such an abstract update occurs if
the property name is independent from the input. Otherwise, the update
happens on a approximated set of property names, all of which are
marked by this update.


%


\subsection{Further sample applications}
\label{sec:sample_applications}

We also applied our analysis to real-world examples like the
\emph{JavaScript Cookie Library with jQuery bindings and JSON
  support}\footnote{\webpageCOOKIE}  (version 2.2.0) and the
\emph{Rye}\footnote{\webpageRYE} library (version 0.1.0), a JavaScript
library for DOM manipulation. 

These libraries were augmented by wrapping several functions and
objects using the \lstinline{trace} function. The analysis
successfully tracks the flow of the thus marked values, which pop up
in the expected places.




\section{Dependency Classification} \label{sec:untrace}


\begin{figure}
                \centering
                \begin{displaymath}
                                \begin{array}{lrl}
                                                \ljExp &::=& \dots ~|~ \ljTraceClass(\ljExp,~ \ljConst) ~|~ \ljUntraceClass(\ljExp,~ \ljConst)
                                \end{array}
                \end{displaymath}
                \caption{Extended syntax of $\ldcj$.}
                \label{fig:syntax_ldcj}
\end{figure}

\begin{figure}
                \centering
                \begin{mathpar}
                                \inferrule [\LDJTraceExt]
                                {
                                                \ljHeap,\ljEnv,\ljType\joinType\ljSource^{\ljConst} ~\entails~ \ljExp ~\eval~ \ljHeap' ~|~ \ljVal:\ljType_{\ljVal}
                                }
                                {
                                                \ljHeap,\ljEnv,\ljType ~\entails~ \ljTraceClass~ (\ljExp,~ \ljConst) ~\eval~ \ljHeap' ~|~ \ljVal:\ljType_{\ljVal}
                                }\and
                                \inferrule [\LDJUntrace]
                                {
                                                \ljHeap,\ljEnv,\ljType ~\entails~ \ljExp ~\eval~ \ljHeap' ~|~ \ljVal:\ljType_{\ljVal}\\
                                                \ljType' = \ljType_{\ljVal}[\ljSLocation^{\ljClass,\ljConst}\mapsto\ljSLocation^{\ljClass',\ljConst}]
                                }
                                {
                                                \ljHeap,\ljEnv,\ljType ~\entails~ \ljUntraceClass~ (\ljExp,~  \ljConst) ~\eval~ \ljHeap' ~|~ \ljVal:\ljType'
                                }
                \end{mathpar}
                \caption{Inference rules of $\ldcj$.}
                \label{fig:inference-rules_ldcj}
\end{figure}

\begin{figure}
                \centering
                \begin{mathpar}
                                \inferrule [\ATraceExt]
                                {
                                                \aState[\D\mapsto\D_{\aState}\join\ljSource^{\ljClass,\ljConst}], \aScope ~\entails~ \ljExp ~\eval~ \aState' ~|~ \aVal
                                }
                                {
                                                \aState,\aScope ~\entails~ \ljTraceClass~ (\ljExp,~ \ljConst) ~\eval~ \langle \aObjStore_{\aState'}, \D_{\aState} \rangle ~|~ \aVal
                                }\and
                                \inferrule [\AUntrace]
                                {
                                                \aState, \aScope ~\entails~ \ljExp ~\eval~ \aState' ~|~ \aVal\\
                                                \D_{\aVal}'=\D_{\aVal}[\ljSLocation^{\ljClass,\ljConst}\mapsto\ljSLocation^{\ljClass',\ljConst}]
                                }
                                {
                                                \aState,\aScope ~\entails~ \ljUntraceClass~ (\ljExp,~ \ljConst) ~\eval~ \langle \aObjStore_{\aState'}, \D_{\aState} \rangle ~|~
                                                \langle \aLattice_{\aVal}, \aObjlabel_{\aVal}, \D_{\aVal}' \rangle
                                }
                \end{mathpar}
                \caption{Inference rules for abstract trace.}
                \label{fig:inference-rules_trace}
\end{figure}

To cater for dependency classification,
the accompanying formal framework $\ldcj$ extends $\ldj$ (Figure
\ref{fig:syntax_ldcj}). In  $\ldcj$ marks
are classified according to a 
finite set of modes. They are further augmented by an identifier
that can be referred to in the \lstinline'trace' and
\lstinline'untrace' expressions. The operator $\ljTraceClass$ 
generates a mark in mode $\ljClass$ and the \lstinline'untrace' operator
changes the mode of all $\ell$-marks according to the sanitization
method applied (this distinction is ignored in the example). In the
calculus, this change is expressed by the $\ljUntraceClass$
expression, where $\ljClass$ ranges over an unspecified set of modes.

Marks $\ljType ::= \dots ~|~ \ljSLocation^{\ljClass,\ljConst}$ are
extended by an new mark-type, a location classified with a class
$\ljClass$ and identifier $\ljConst$. 

The mark propagation is like in Section~\ref{sec:dependency_type} (see
Figure \ref{fig:inference-rules_ldcj}). Rule \Rule{\LDJTraceExt}
augments the sub-context with the new classified mark.
\Rule{\LDJUntrace} substitutes location
$\ljSLocation^{\ljClass,\ljConst}$ by a declassified location
$\ljSLocation^{\ljClass',\ljConst}$.

In the analysis, $\dTrace ::= \ljSLocation ~|~
\ljSLocation^{\ljClass,\ljConst}$ replaces $\ljSLocation$ in $\D$. 
Rule \Rule{\ATraceExt} (Figure \ref{fig:inference-rules_trace})
generates new dependencies and \Rule{\AUntrace} substitutes $\ljClass$
by $\ljClass'$ in all locations $\ljSLocation$ labeled with $\ljConst$.


\section{Technical Results}
\label{sec:technical_results}

To prove the soundness of our abstract analysis we show termination insensitive noninterference.  The required steps are
proving noninterference for the $\ldj$ calculus, showing that the abstract analysis provides a correct abstraction of
the $\ldj$ calculus, and that the abstract analysis
terminates.

%
%

\subsection{Noninterference}
\label{sec:noninterference}

Proving noninterference requires relating different substitution
instances of the same expression.  As they may evaluate differently,
we need to be able to cater for differences in the heap, for example,
with respect to locations.

\begin{definition}\label{def:bijection}
  A renaming $\bijection ::= \emptyset \mid \bijection[\ljLocation\mapsto\ljLocationPrime]$
  is a partial mapping on locations where
  $\bijection (\ljLocation)$ carries the same mark
  $\ell$ as $\ljLocation$.

  It extends to values by $\bijection (\ljConst) = \ljConst$.
\end{definition}

In the upcoming definitions, the dependency annotation $\ljType$
contains the marks created by the selected $\ljTrace$ 
expression, the body of which may be substituted.

Further, we introduce equivalence relations for each element affected by the $\ljSource$ substitution.

\begin{definition}\label{def:k-equivalence_value}
                Two marked values are $\bijection,\ljType$-equivalent
                $\ljVal_{0}:\ljType_{0}\equivType\ljVal_{1}:\ljType_{1}$ if they are equal as long as their marks are disjoint
                from $\ljType$. 
                \begin{align}
                                \ljType\cap\ljType_{0}=\emptyset ~\wedge~ \ljType\cap\ljType_{1}=\emptyset
                                ~\Rightarrow~ \bijection(\ljVal_{0})=\ljVal_{1}
                \end{align}
\end{definition}

\begin{definition}\label{def:k-equivalence_environment}
                Two environments $\ljEnv_{0}$,$\ljEnv_{1}$ are $\bijection,\ljType$-equivalent $\ljEnv_{0}\equivType\ljEnv_{1}$
                if $R := \dom (\ljEnv_0) = \dom (\ljEnv_1)$ and they contain equivalent values.
                \begin{align}
                                &\forall \ljVar \in R:~ \ljEnv_{0}(\ljVar)
                                \equivType \ljEnv_{1}(\ljVar)
                \end{align}
\end{definition}

\begin{definition}\label{def:k-equivalence_expression}
                Two expressions $\ljExp_{0}$,$\ljExp_{1}$ are $\bijection,\ljType$-equivalent $\ljExp_{0}\equivType\ljExp_{1}$
                iff they only differ in the argument of $\ljTrace(\ljExp')$ subexpressions with $\ljSource\in\ljType$.
                \begin{align}
                                \begin{split}
                                                &\ljType = \{\ljSource_{0}, ..., \ljSource_{n}\} ~\Rightarrow~\\
                                                &\exists \ljExp_{0}'\ldots\exists \ljExp_{n}':~ \ljExp_{0}
                                                ~=~
                                                \ljExp_{1}[\ljSource_{0}\mapsto\ljExp_{0}']\ldots[\ljSource_{n}\mapsto\ljExp_{n}']
                                \end{split}
                \end{align}
\end{definition}

\begin{definition}\label{def:k-equivalence_object}
                Two storables $\ljStorable_{0}$,$\ljStorable_{1}$ are $\bijection,\ljType$-equivalent\\ $\langle \ljObj_{0},
                \langle\ljEnv_{0},\ljFunc\ljVar.\ljExp_{0}\rangle, \ljProto_{0} \rangle \equivType \langle \ljObj_{1},
                \langle\ljEnv_{1},\ljFunc\ljVar.\ljExp_{1}\rangle, \ljProto_{1} \rangle$ if $S:= \dom (\ljObj_0) = \dom (\ljObj_1)$ and they
                only differ in values $\ljConst:\ljType_{\ljConst}$ with any intersection with $\ljType$.
                \begin{align}
                                \begin{split}
                                                &\forall \ljStr \in S:~ \ljObj_{0}(\ljStr) \equivType \ljObj_{1}(\ljStr)
                                \end{split}
                                \\
                                \begin{split}
                                                &\ljEnv_{0}\equivType\ljEnv_{1}
                                                ~\wedge~
                                                \ljFunc\ljVar.\ljExp_{0}\equivType\ljFunc\ljVar.\ljExp_{1}
                                \end{split}\\
                                \begin{split}
                                                &\bijection(\ljProto_{0})=\ljProto_{1}
                                \end{split}
                \end{align}
\end{definition}

\begin{definition}\label{def:k-equivalence_heap}
                Two heaps $\ljHeap_{0}$,$\ljHeap_{1}$ are $\bijection,\ljType$-equivalent $\ljHeap_{0}\equivType\ljHeap_{1}$ if
                they only differ in values $\ljVar:\ljType_{\ljVar}$ with any intersection with $\ljType$ or in one-sided
                locations.
                \begin{align}
                                \forall \ljLocation \in \dom(\bijection):~ \ljHeap_{0}(\ljLocation) \equivType
                                \ljHeap_{1}(\bijection(\ljLocation))
                \end{align}
\end{definition}

Now, the noninterference theorem can be stated as follows. 

\begin{theorem}\label{thm:noninterference}
                Suppose $\ljHeap,\ljEnv,\ljType \entails \ljExp \eval \ljHeap' \mid v:\ljType_{v}$.
                If $\ljSource \notin \bar{\ljType}$ and $\ljHeap \equivTypeSub \tilde{\ljHeap}$ and $\ljEnv\equivTypeSub\tilde{\ljEnv}$ then
                $\tilde{\ljHeap},\tilde{\ljEnv},\ljType \entails \bar{\ljExp} \eval \tilde{\ljHeap}' \mid
                \tilde{v}:\tilde{\ljType}_{v}$ with $\bar{\ljExp}=\ljExp[\ljSource \mapsto \tilde{\ljExp}]$ and
                $\ljExp \equivTypeSub \bar{\ljExp}$ and $\ljHeap' \equivTypeSubPrime \tilde{\ljHeap}'$ and
                $v:\ljType_{v}\equivTypeSubPrime\tilde{v}:\tilde{\ljType}_{v}$,
                for some $\bijection'$ extending $\bijection$.
\end{theorem}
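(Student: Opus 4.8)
The plan is to proceed by induction on the derivation of the original evaluation $\ljHeap,\ljEnv,\ljType \entails \ljExp \eval \ljHeap' \mid v:\ljType_v$, generalizing the statement over the renaming $\bijection$, the context $\ljType$, and the protected set $\bar{\ljType}$ so that the induction hypothesis applies to every subderivation. Throughout I read the equivalence $\equivTypeSub$ as forcing agreement exactly on the values whose marks lie within $\bar{\ljType}$ (the ``observable'' values), while leaving any value carrying a mark outside $\bar{\ljType}$---in particular the varied source $\ljSource$---free to differ. The cases then split into two groups: those in which both runs stay in lock-step, and those in which a control dependency on $\ljSource$ lets the two runs diverge.

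First I would dispatch the lock-step cases. For \Rule{\LDJConstant}, \Rule{\LDJVariable}, and \Rule{\LDJFunctionCreation} the result follows directly from $\ljEnv\equivTypeSub\tilde{\ljEnv}$ together with the observation that the returned mark is the incoming context joined with a fixed annotation. For \Rule{\LDJOperation}, \Rule{\LDJObjectCreation}, \Rule{\LDJFunctionApplication}, \Rule{\LDJPropertyReference}, \Rule{\LDJPropertyAssignment}, and for \Rule{\LDJTrace} when its mark differs from $\ljSource$, the substituted expression $\bar{\ljExp}$ has the same outermost constructor with the substitution pushed into the immediate subexpressions; I apply the induction hypothesis to each subderivation, reassemble with the same rule, and extend $\bijection$ to $\bijection'$ so that each freshly allocated location of the original run is mapped to the corresponding fresh location of the substituted run (the freshness side-conditions guarantee these extensions stay consistent). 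The bulk of the work is bookkeeping: checking that the joined dependency annotations on each conclusion still satisfy the value- and heap-equivalence, and that updates land on $\bijection'$-related locations. This is routine but tedious, and leans on the fact that equivalence tolerates both one-sided locations and disagreement on non-observable values.

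The genuinely hard cases are the conditionals \Rule{\LDJConditionTrue}/\Rule{\LDJConditionFalse} and \Rule{\LDJTrace} when its mark is $\ljSource$. For a conditional I would split on the guard's annotation $\ljType_0$. If $\ljType_0\subseteq\bar{\ljType}$, the induction hypothesis yields equal guard values $\bijection'(\ljVal_0)=\tilde{\ljVal}_0$, so both runs select the same branch and a second use of the hypothesis finishes the case. If $\ljType_0\not\subseteq\bar{\ljType}$, the runs may take different branches and lock-step reasoning breaks down. Here the chosen branch is evaluated under context $\ljType\joinType\ljType_0$, which now contains a mark outside $\bar{\ljType}$; by the Context Dependency Lemma~(\ref{thm:context_dependency}) every value produced or overwritten in that branch inherits this non-observable mark. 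I would package this into a confinement lemma stating that evaluation under a context $\ljType\not\subseteq\bar{\ljType}$ leaves the heap $\equivTypeSub$-equivalent to its input and returns a value whose annotation is likewise outside $\bar{\ljType}$. Applying confinement to each run separately and composing through transitivity of $\equivTypeSub$ then delivers the required heap- and value-equivalence even when the branches differ; the trace-on-$\ljSource$ case is identical, since \Rule{\LDJTrace} adds $\ljSource\notin\bar{\ljType}$ to the context before evaluating the (now substituted) body.

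I expect the main obstacle to be precisely these high-context cases: establishing the confinement lemma with a compatible bijection and reconciling it with the termination-insensitive reading of the theorem. When the guard depends on $\ljSource$ the substituted run executes a syntactically different branch (or, for the trace case, the unrelated body $\tilde{\ljExp}$), so there is no subderivation of the original run to appeal to; one must argue from confinement alone that whatever this run computes preserves observable state, and one relates only executions that both terminate. Keeping the extension $\bijection\to\bijection'$ a well-defined partial map while the two heaps grow along independent allocation sequences, and verifying that the mark-based side of every equivalence is stable under these extensions, is the delicate part; the remainder is a matter of carefully propagating the joined annotations through each rule.
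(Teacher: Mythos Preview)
Your proposal is correct and matches the paper's proof in structure: induction on the derivation, lock-step cases via the induction hypothesis, and the divergent cases (tainted guard in \Rule{\LDJConditionTrue}/\Rule{\LDJConditionFalse}, tainted callee in \Rule{\LDJFunctionApplication}, and \Rule{\LDJTrace} on the substituted source) handled by observing that the subevaluation's context already carries a mark outside $\bar{\ljType}$. The only difference is packaging---you isolate an explicit confinement lemma, whereas the paper invokes Lemma~\ref{thm:context_dependency} directly in each such subcase and then asserts the resulting value and heap equivalences rather tersely; your formulation makes the heap side of that step (one-sided allocations, tainted writes) more transparent, but the underlying argument is the same.
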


The proof is by induction on the evaluation $\eval$ (Section \ref{sec:proof_noninterference}).

%
%

\subsection{Correctness}
\label{sec:correctness}

The abstract analysis is a correct abstraction of the $\ldj$ calculus.
To formalize correctness, we introduce a consistency
relation that relates semantic domains of the concrete dependency
tracking semantics of $\ldj$ with the abstract domains.

\begin{definition}\label{def:consistency_relation} The consistency relation $\equivLattice$ is defined by:
                \begin{displaymath}
                                \begin{array}{lrl}

                                                \ljType\equivLattice\D &\Leftrightarrow& \ljType\subseteq\D\\

                                                \ljConst\equivLattice\aLattice &\Leftrightarrow& \ljConst \in \aLattice\\

                                                \ljLocation\equivLattice\aObjlabel &\Leftrightarrow& \ljSLocation \in \aObjlabel\\

                                                \ljVal\equivLattice\aVal &\Leftrightarrow&
                                                \begin{cases}
                                                                \ljLocation\equivLattice\aObjlabel_\aVal, & \ljVal=\ljLocation\\
                                                                \ljConst\equivLattice\aLattice_\aVal, & \ljVal=\ljConst
                                                \end{cases}\\

                                                \ljVal:\ljType\equivLattice\aVal &\Leftrightarrow& \ljType \equivLattice \D_{\aVal} ~\wedge~
                                                \ljVal\equivLattice\aVal\\

                                                \ljObj\equivLattice\aObj  &\Leftrightarrow& \forall \ljStr \in \dom(\ljObj):~ \exists \aLattice
                                                \in \dom(\aObj):~ \\
                                                &&\ljStr\equivLattice\aLattice ~\wedge~
                                                \ljObj(\ljStr)\equivLattice\aObj(\aLattice) ~\wedge~\\
                                                &&\forall \ljStr \notin dom(\ljObj):~ \ljUndefined \equivLattice\aObj(\aLattice)\\

                                                \ljEnv \equivLattice \aScope &\Leftrightarrow& \forall \ljVar \in \dom(\ljEnv):~ \ljVar \in
                                                \dom(\aScope) \\
                                                                                                &&\wedge~ \ljEnv(\ljVar)\equivLattice\aScope(\ljVar)\\

                                                \ljClosure\equivLattice\aFunc &\Leftrightarrow& \langle\ljEnv,\ljFunc\ljVar.\ljExp_{\ljClosure}\rangle=\ljClosure ~\wedge~ 
                                                \langle\aScope,\ljFunc\ljVar.\ljExp_{\aFunc}\rangle=\aFunc\\ 
                                                &&\rightarrow~ \ljEnv \equivLattice \aScope ~\wedge~ \ljFunc\ljVar.\ljExp_{\ljClosure}=\ljFunc\ljVar.\ljExp_{\aFunc}\\

                                                \ljStorable \equivLattice \aStorable &\Leftrightarrow&
                                                \langle\ljObj,\ljClosure,\ljProto\rangle=\ljStorable ~\wedge~ \langle\aObj,\aFunc,\aObjlabel\rangle=\aStorable\\
                                                &&~\rightarrow~ \ljObj\equivLattice\aObj ~\wedge~ \ljClosure\equivLattice\aFunc ~\wedge~ \ljProto \equivLattice \aObjlabel\\

                                                \ljHeap\equivLattice\aState &\Leftrightarrow& \forall \ljLocation \in \dom(\ljHeap):~
                                                \ljSLocation \in \aObjStore_\aState\\
                                                                                                &&\wedge~ \ljHeap(\ljLocation)\equivLattice\aObjStore_\aState(\ljSLocation)

                                \end{array}
                \end{displaymath}
\end{definition}

Showing adherence to the inference of $\ldj$ requires to proof that consistent heaps, environments, and values
produce a consistent heap and value.

\begin{lemma}[Program]\label{thm:program} $\forall \ljExp:$
                $\emptyset, \emptyset, \emptyset \entails \ljExp ~\eval~ \ljHeap ~|~ \ljTypedVal$ and $\entails \ljExp ~\eval~ \aState ~|~ \aVal$ implies that 
                $\langle \ljHeap, \ljTypedVal \rangle \equivLattice \langle \aState, \aVal \rangle$
\end{lemma}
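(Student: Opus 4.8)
The plan is to establish this soundness statement by a proof on the structure of the fixpoint iteration that defines the global program rule \Rule{\AProgram}, which ultimately rests on an inductive argument over the derivation of the abstract evaluation judgment $\aState,\aScope \entails \ljExp \eval \aState' \mid \aVal$. The key observation is that the lemma as stated is too weak to carry an induction directly: to handle the inductive cases for compound expressions (application, property reference/assignment, conditionals), I will need the consistency hypothesis to hold for the \emph{incoming} heap, environment, and context dependency $\ljType$, and then show it is preserved for the \emph{outgoing} heap and value. So first I would generalize the statement to an auxiliary invariant of the form: if $\ljHeap \equivLattice \aState$, $\ljEnv \equivLattice \aScope$, $\ljType \equivLattice \D_{\aState}$, and the function store $\aStore$ is consistent with all concrete closures reachable so far, and $\ljHeap,\ljEnv,\ljType \entails \ljExp \eval \ljHeap' \mid \ljVal:\ljType_{\ljVal}$ and $\aState,\aScope \entails \ljExp \eval \aState' \mid \aVal$, then $\ljHeap' \equivLattice \aState'$ and $\ljVal:\ljType_{\ljVal} \equivLattice \aVal$. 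The lemma then follows by instantiating at the empty initial configurations, where consistency holds vacuously, and appealing to Lemma~\ref{thm:function_store} to discharge the function-store side condition at the fixpoint.

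Next I would proceed case-by-case on the last rule of the $\ldj$ derivation, matching it against the corresponding abstract rule. The base cases \Rule{\LDJConstant} and \Rule{\LDJVariable} are immediate from the definition of $\equivLattice$ on values together with the facts that $\ljType \equivLattice \D_{\aState}$ feeds into $\D_{\aVal}$ via the $\joinType$/$\join$ operations, and that $\ljConst \in \aLattice$ holds for the singleton abstraction. For \Rule{\LDJOperation} I would invoke the soundness property of Definition~\ref{def:abstract_operation}, namely $\abstractOperation(\alpha(\ljVal_0),\alpha(\ljVal_1)) \sqsupseteq \alpha(\ljVal_{op})$, lifted through the consistency relation, and combine the dependency components using monotonicity of $\join$. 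The object- and function-creation cases require a short sub-argument that the same mark $\ljSLocation$ is allocated on both sides (the marks are determined syntactically by the expression), so the fresh/existing distinction between \Rule{\AObjectCreation} and \Rule{\AObjectCreationExisting} is handled by observing that the abstract store only ever grows, preserving $\equivLattice$ by monotonicity. The \Rule{\LDJTrace} case is direct, since the context extension $\ljType\joinType\ljSource$ on the concrete side matches $\D_{\aState}\join\ljSource$ on the abstract side, keeping $\ljType \equivLattice \D$ intact.

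The main obstacle, as I anticipate it, lies in the three compound cases that involve iteration over sets of abstract objects or closures: \Rule{\LDJFunctionApplication} against \Rule{\AFunctionApplication}, \Rule{\LDJPropertyReference} against \Rule{\APropertyReference}, and \Rule{\LDJPropertyAssignment} against \Rule{\APropertyAssignment}. In the concrete semantics a location $\ljLocation$ selects a single storable, whereas abstractly $\aState(\aObjlabel_0)$ yields a \emph{set} $\aStorables$ that is iterated over with results joined; I must show that the concrete storable is consistent with \emph{some} element of this set and that joining only over-approximates, so soundness is preserved by the $\join$ of all branches. For function application specifically, the hard part is reconciling the concrete call, which directly evaluates the body, with the abstract function-store memoization in \Rule{\AFunctionStoreSubset}/\Rule{\AFunctionStoreNonSubset}: here I would lean on Lemma~\ref{thm:function_store} to guarantee that whenever the concrete input is subsumed by the stored abstract input, the stored abstract output soundly over-approximates the concrete result, thereby closing the loop without re-running the induction on a non-decreasing argument. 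The property-reference case additionally requires tracking that the intersection-based lookup in \Rule{\APropertyReferenceIntersection}/\Rule{\APropertyReferenceNonIntersection} collects every concrete property whose string key $\ljStr$ satisfies $\ljStr \equivLattice \aLattice$, together with the default-\ljUndefined clause of the consistency relation for objects, so that the prototype-chain lookup of Figure~\ref{fig:syntactic_sugar} is faithfully covered. Once these iteration invariants are set up as small sub-lemmas, the remaining bookkeeping with the dependency annotations $\D_0 \join \D_1 \join \D_{\aVal}$ is routine monotonicity.
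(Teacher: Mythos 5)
Your proposal is correct and follows essentially the same route as the paper: the generalized invariant you introduce (consistency of incoming heap, environment, and context dependency implies consistency of the outgoing heap and value) is precisely Theorem~\ref{thm:correctness}, and the paper's proof of Lemma~\ref{thm:program} is exactly the instantiation of that theorem at the empty initial configurations via Definition~\ref{def:consistency_relation}. Your subsequent case analysis (abstract-operation soundness for \Rule{\LDJOperation}, the iteration sub-lemmas for application and property access, and Lemma~\ref{thm:function_store} for the memoized calls) likewise matches the paper's appendix proof of that theorem.
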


  Given by theorem \eqref{thm:correctness} and definition \eqref{def:consistency_relation}.

\begin{lemma}[Property Reference]\label{thm:property_reference}
                $\forall \ljHeap, \ljLocation, \ljStr, \aState, \aObjlabel, \aLattice:$ 
                $\ljHeap \equivLattice \aState$, $\ljLocation \equivLattice \aObjlabel$, $\ljStr \equivLattice \aLattice$, and
                $\entailsPRIteration \aState(\aObjlabel), \aLattice ~\eval~ \aVal$ implies
                $\ljHeap(\ljLocation)(\ljStr) ~\equivLattice~ \aVal$
\end{lemma}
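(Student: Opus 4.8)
The plan is to prove the statement by showing that the concrete lookup $\ljHeap(\ljLocation)(\ljStr)$, which follows exactly one path through a single storable and possibly up the prototype chain, is consistent with one of the summands of the join computed by the abstract $\entailsPRIteration$ derivation, and then to close the argument by monotonicity of $\equivLattice$. First I would record the auxiliary fact that $\equivLattice$ is monotone in its abstract argument: since every base clause of Definition~\ref{def:consistency_relation} is a membership or subset condition ($\ljConst\in\aLattice$, $\ljSLocation\in\aObjlabel$, $\ljType\subseteq\D$), and $\join$ is the least upper bound of the corresponding set/lattice orders, $\ljVal:\ljType \equivLattice \aVal$ and $\aVal \sqsubseteq \aVal'$ imply $\ljVal:\ljType \equivLattice \aVal'$. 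In particular any single $\aVal_i$ occurring in a join $\aVal_i \join \aVal' \join \aVal''$ is subsumed by the whole, so it suffices to hit one matching summand.

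Second, I would locate the relevant abstract storable. From $\ljHeap\equivLattice\aState$ and $\ljLocation\equivLattice\aObjlabel$ (i.e.\ $\ljSLocation\in\aObjlabel$) we obtain $\ljSLocation\in\aObjStore_{\aState}$, $\ljHeap(\ljLocation)\equivLattice\aObjStore_{\aState}(\ljSLocation)$, and $\aObjStore_{\aState}(\ljSLocation)\in\aState(\aObjlabel)$; write this storable as $\aStorable_0=\langle\aObj_0,\aFunc_0,\aObjlabel_0\rangle$ and $\ljHeap(\ljLocation)=\langle\ljObj_0,\ljClosure_0,\ljProto_0\rangle$, so that $\ljObj_0\equivLattice\aObj_0$ and $\ljProto_0\equivLattice\aObjlabel_0$. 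I then proceed by induction on the derivation of $\entailsPRIteration\aState(\aObjlabel),\aLattice\eval\aVal$, navigating the \Rule{\APropertyReferenceIteration} chain to the step whose head is $\aStorable_0$; the contributions of the other storables only enlarge the join and are discharged by monotonicity.

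The core case analysis then mirrors the defining clauses of the concrete storable lookup in Figure~\ref{fig:syntactic_sugar}. For the within-object cases (a direct hit $\ljObj_0=\ljObj_0'[\ljStr\mapsto\ljVal]$, or a mismatched head $\ljStr'\neq\ljStr$ forcing recursion into $\ljObj_0'$) I would prove a sub-lemma by induction on the structure of $\ljObj_0$: using $\ljObj_0\equivLattice\aObj_0$, a concrete key $\ljStr\in\dom(\ljObj_0)$ has a consistent abstract key $\aLattice_i\in\dom(\aObj_0)$ with $\ljStr\equivLattice\aLattice_i$ and $\ljObj_0(\ljStr)\equivLattice\aObj_0(\aLattice_i)$; since also $\ljStr\equivLattice\aLattice$, we get $\ljStr\in\aLattice\sqcap\aLattice_i$, hence $\aLattice\sqcap\aLattice_i\neq\perp$, so \Rule{\APropertyReferenceIntersection} contributes $\aObj_0(\aLattice_i)$ as a summand. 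When the object is exhausted, \Rule{\APropertyReferenceEmpty} yields $\langle\langle\top,\perp,\perp,\perp,\perp\rangle,\emptyset,\emptyset\rangle$, which is exactly consistent with the concrete default $\ljUndefined$ (since $\lvUndefined\in\top$), covering the empty-object clause. For the prototype clause $\ljObj_0=\emptyset$, $\ljProto_0=\ljLocation_p$, the concrete result is $\ljHeap(\ljLocation_p)(\ljStr)$; from $\ljProto_0\equivLattice\aObjlabel_0$ we get $\ljSLocation_p\in\aObjlabel_0$, so $\ljLocation_p\equivLattice\aObjlabel_0$, and the third premise $\entailsPRIteration\aState(\aObjlabel_0),\aLattice\eval\aVal''$ of \Rule{\APropertyReferenceIteration} is a strictly smaller derivation to which the induction hypothesis applies (with $\ljHeap,\aState,\ljStr,\aLattice$ unchanged), giving $\ljHeap(\ljLocation_p)(\ljStr)\equivLattice\aVal''$. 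In every case the concrete value is consistent with a summand, hence with the full join $\aVal$ by monotonicity.

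The main obstacle I anticipate is coordinating the three layers of recursion cleanly --- the outer iteration over the storable set, the prototype chain captured by the $\aState(\aObjlabel_0)$ premise, and the inner descent through the object captured by $\entailsPRIntersection$ --- and, in particular, making the two inductions (on the abstract \textsc{Get} derivation for the prototype chain, and on the object structure for the intersection) fit together without circularity. A secondary wrinkle is the degenerate prototype: when $\ljProto_0$ is a non-location constant, the relation $\ljProto_0\equivLattice\aObjlabel_0$ must be read as vacuous (there is no location to match), and correctness in that case rests entirely on the \Rule{\APropertyReferenceEmpty} summand supplying the abstract $\ljUndefined$ rather than on the prototype premise.
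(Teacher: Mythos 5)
Your proposal is correct and takes essentially the same route as the paper, whose proof of this lemma is only the one-line remark that it follows ``by Definition~\ref{def:consistency_relation} and by induction on the derivation'' of the $\entailsPRIteration$ judgment. Your elaboration --- locating the matching abstract storable via $\ljHeap\equivLattice\aState$, showing the concrete lookup path is consistent with one summand of the join and closing by monotonicity of $\equivLattice$ under $\sqsubseteq$ (which is the paper's Lemma~\ref{thm:subset_consistency_value}), recursing on the prototype premise of \Rule{\APropertyReferenceIteration}, and covering the default $\ljUndefined$ case with \Rule{\APropertyReferenceEmpty} --- is a faithful filling-in of that sketch with no gaps.
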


                                The proof is by definition \eqref{def:consistency_relation} and by induction on the
                derivation of $\aState'' ~\entailsPRIteration~ \aState''(\aObjlabel_{0}),\lvString ~\eval~ \aVal$.

\begin{lemma}[Property Assignment]\label{thm:property_assignment}
                $\forall \ljHeap, \ljLocation, \ljStr, \ljTypedVal, \aState, \aObjlabel, \aLattice, \aVal:$
                $\ljHeap \equivLattice \aState$, $\ljLocation \equivLattice \aObjlabel$, $\ljStr \equivLattice \aLattice$, $\ljTypedVal \equivLattice \aVal$
                and $\aState \entailsPAIteration \aObjlabel, \aLattice, \aVal ~\eval~ \aState'$ implies
                $\ljHeap[\ljLocation,\ljStr \mapsto \ljTypedVal] ~\equivLattice~ \aState'$
\end{lemma}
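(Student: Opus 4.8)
The plan is to proceed by induction on the derivation of the abstract iteration $\aState \entailsPAIteration \aObjlabel, \aLattice, \aVal \eval \aState'$, which peels off the marks of $\aObjlabel$ one at a time (rule $\Rule{\APropertyAssignmentIteration}$) and performs a single weak update per mark (rules $\Rule{\APropertyAssignmentInDom}$ and $\Rule{\APropertyAssignmentNotInDom}$). Two observations drive the argument. First, the hypothesis $\ljLocation \equivLattice \aObjlabel$ unfolds by Definition~\ref{def:consistency_relation} to $\ljSLocation \in \aObjlabel$, i.e.\ the mark of the concrete location $\ljLocation$ really is one of the marks the iteration visits; in particular the base case $\Rule{\APropertyAssignmentIterationEmpty}$ is vacuous, since $\ljLocation \equivLattice \emptyset$ is unsatisfiable. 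Second, every clause of $\equivLattice$ is phrased through membership or inclusion ($\ljConst \in \aLattice$, $\ljSLocation \in \aObjlabel$, $\ljType \subseteq \D$), so consistency is monotone on the abstract side: $\ljVal \equivLattice \aVal$ and $\aVal \sqsubseteq \aVal'$ give $\ljVal \equivLattice \aVal'$, and likewise for objects, storables, and whole states. I would prove this monotonicity property first, as it is used at every step.

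Using these, I would establish two auxiliary invariants by induction on the iteration: (I) the object store only grows and the dependency is untouched, $\aObjStore_{\aState} \sqsubseteq \aObjStore_{\aState'}$ and $\D_{\aState} = \D_{\aState'}$; and (II) coverage: since $\ljSLocation \in \aObjlabel$, the abstract object stored at $\ljSLocation$ in $\aState'$ maps $\aLattice$ to a value $\sqsupseteq \aVal$. Invariant (I) is immediate because both assignment rules either join into an existing entry or add a fresh one, and the substitution $\aState[\ljSLocation,\aLattice\mapsto\cdot]$ touches only the object store. For (II), when the peeled mark equals $\ljSLocation$ the single update installs $\aVal$ (or its join with the old value) at $\aLattice$, and (I) applied to the remaining iteration shows this entry can only grow; when $\ljSLocation$ lies in the tail, the inductive hypothesis applies directly.

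With (I) and (II) in hand, I would verify $\ljHeap[\ljLocation,\ljStr \mapsto \ljTypedVal] \equivLattice \aState'$ by checking the heap clause of Definition~\ref{def:consistency_relation} for every location. Since the concrete update changes only $\ljLocation$, and only its object part, the storables at every other concrete location $\ljLocation_0$ are unchanged; they were consistent with $\aObjStore_{\aState}(\ljSLocation_0)$ by $\ljHeap \equivLattice \aState$ and remain consistent with the larger $\aObjStore_{\aState'}(\ljSLocation_0)$ by (I) and monotonicity — this step also absorbs the case where some $\ljLocation_0 \neq \ljLocation$ happens to share the mark $\ljSLocation$. For $\ljLocation$ itself, storable consistency reduces to object consistency, because its closure and prototype are untouched on both sides; there I use $\ljStr \equivLattice \aLattice$ together with invariant (II) — the updated object's entry at $\aLattice$ dominates $\aVal$, and $\ljTypedVal \equivLattice \aVal$ — to witness the updated property via monotonicity, while the remaining, unchanged properties follow again from $\ljHeap \equivLattice \aState$ and monotonicity.

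I expect the main obstacle to be reconciling the strong concrete update against the weak abstract update: the concrete write to $\ljLocation.\ljStr$ overwrites the old value, whereas the abstract rule only joins $\aVal$ into the entry for $\aLattice$, and a single abstract mark $\ljSLocation$ (resp.\ a single, possibly non-singleton lattice element $\aLattice$) may stand for several concrete locations (resp.\ several concrete strings). Soundness here rests entirely on the over-approximation direction of $\equivLattice$, so the delicate part is discharging the object-consistency clause of Definition~\ref{def:consistency_relation} in full — in particular the existential choice of the abstract witness entry for each concrete string and the default clause requiring $\ljUndefined \equivLattice \aObj(\aLattice)$ for strings outside $\dom(\ljObj)$, whose obligation only shrinks when $\ljStr$ is freshly added. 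All of these are resolved by monotonicity once (I) and (II) are fixed, but this bookkeeping is the technical heart of the proof.
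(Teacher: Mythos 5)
Your proposal follows essentially the same route as the paper, which states only that the proof is ``by Definition~\ref{def:consistency_relation} and by induction on the derivation of'' the $\entailsPAIteration$ judgment; your induction on the iteration derivation, combined with unfolding the heap/object clauses of the consistency relation, is exactly that argument spelled out. The auxiliary observations you add --- upward monotonicity of $\equivLattice$ in the abstract argument, growth of the object store across the iteration, and the coverage of the mark $\ljSLocation \in \aObjlabel$ --- are the details the paper leaves implicit, and they correctly discharge the strong-versus-weak-update tension you identify.
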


                                The proof is by definition \eqref{def:consistency_relation} and by induction on the
                derivation of $\aState''' ~\entailsPAIteration~ \aObjlabel_{0},\lvString,\langle \aLattice_{\aVal}, \aObjlabel_{\aVal}, \D_{0} \join \D_{1} \join \D_{\aVal} \rangle ~\eval~ \aState''''$.

The following correctness theorem relates the concrete semantics to the abstract semantics. 

\begin{theorem} \label{thm:correctness}
                Suppose that $\ljHeap,\ljEnv,\ljType ~\entails~ \ljExp ~\eval~ \ljHeap' ~|~ \ljVar$ then
                $\forall \aState,\aScope$ with $\ljHeap\equivLattice\aState$, $\ljEnv\equivLattice\aScope$ and
                $\ljType\equivLattice\D_{\aState}$: $\aState,\aScope ~\entails~ \ljExp ~\eval~ \aState' ~|~ \aVal$ with
                $\ljHeap'\equivLattice\aState'$ and $\ljVar\equivLattice\aVal$. 
\end{theorem}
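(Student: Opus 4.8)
The plan is to proceed by structural induction on the derivation of the concrete evaluation $\ljHeap,\ljEnv,\ljType \entails \ljExp \eval \ljHeap' \mid \ljVal:\ljType_{\ljVal}$ (reading the result value of the statement as the tainted value $\ljVal:\ljType_{\ljVal}$), matching each $\ldj$ rule against its abstract counterpart in Figures~\ref{fig:inference-rules_abstract}--\ref{fig:inference-rules_property-assignment}. For every rule I would apply the induction hypothesis to each concrete premise, after first checking that its input state, scope, and context dependency are consistent with the corresponding concrete inputs, and then verify that the reconstructed abstract conclusion yields a state and value consistent with the concrete ones according to Definition~\ref{def:consistency_relation}. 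A recurring auxiliary fact I would establish up front is that $\equivLattice$ is upward closed in its abstract argument, i.e. $\ljVal:\ljType \equivLattice \aVal$ together with $\aVal \sqsubseteq \aVal'$ imply $\ljVal:\ljType \equivLattice \aVal'$ (and likewise for objects, storables, and states); this follows because the dependency component grows under $\join$ so $\ljType \subseteq \D_{\aVal} \subseteq \D_{\aVal'}$, and because constant membership $\ljConst \in \aLattice$ and mark membership $\ljSLocation \in \aObjlabel$ are monotone in the lattice order. This monotonicity is what lets a merged (joined) abstract result still cover a single concrete outcome.

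The base cases \Rule{\LDJConstant}, \Rule{\LDJVariable}, and \Rule{\LDJFunctionCreation} are immediate: the abstract value carries $\D_{\aState}$ as its dependency, and the hypotheses $\ljType \equivLattice \D_{\aState}$ and $\ljEnv \equivLattice \aScope$ discharge value consistency, while the state is either unchanged or extended at a fresh (or merged) mark. For \Rule{\LDJOperation} and \Rule{\LDJObjectCreation} I would thread the induction hypothesis through the two sub-evaluations, using Definition~\ref{def:abstract_operation} to conclude that $\abstractOperation$ over-approximates $\ljOperation$ (since $\ljVal_0 \in \aVal_0$ and $\ljVal_1 \in \aVal_1$ give $\ljVal_{op} \sqsubseteq \abstractOperation(\aVal_0,\aVal_1)$) and that the joined dependency $\D_{\aVal_0}\sqcup\D_{\aVal_1}$ subsumes $\ljType_0 \joinType \ljType_1$; for object creation the existing-mark subcase invokes the upward-closure fact, since the object store merges the new storable with whatever was allocated earlier at the same mark.

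The conditional is the first genuinely interesting case. The concrete derivation follows exactly one of \Rule{\LDJConditionTrue} or \Rule{\LDJConditionFalse}, whereas the abstract side may be forced into \Rule{\ACondition}, evaluating both branches and joining their stores and values. Having obtained $\ljType_0 \equivLattice \D_0$ from the induction hypothesis on the guard, I would verify that the enriched context satisfies $\ljType \joinType \ljType_0 \equivLattice \D_{\aState'} \join \D_0$ so the hypothesis applies to the taken branch, and then appeal to upward closure to show the single concrete result is dominated by the joined abstract result $\aVal_1 \join \aVal_2$ with state $\langle \aObjStore_{\aState_1''} \join \aObjStore_{\aState_2''}, \D_{\aState}\rangle$. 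Property reference and assignment reduce directly to Lemmas~\ref{thm:property_reference} and~\ref{thm:property_assignment} once the induction hypothesis has produced consistent marks and keys for the object and index subexpressions. The \Rule{\LDJTrace} case applies the hypothesis under $\ljType \joinType \ljSource \equivLattice \D_{\aState} \join \ljSource$ and then restores the outer dependency $\D_{\aState}$, mirroring \Rule{\ATrace} exactly.

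I expect the main obstacle to be function application. The concrete rule \Rule{\LDJFunctionApplication} performs a single recursive call to one target closure, while the abstract rule \Rule{\AFunctionApplication} iterates via $\entailsFAIteration$ over the whole set of storables denoted by $\aObjlabel_0$ and, through the function store, either reuses a cached summary (\Rule{\AFunctionStoreSubset}) or recomputes and widens it (\Rule{\AFunctionStoreNonSubset}). Two things must be reconciled: that the single concrete target is among the abstract set and that the joined iteration result subsumes its contribution, again by upward closure; and that the cached output is a sound over-approximation of the body's concrete behaviour. The latter is precisely Lemma~\ref{thm:function_store}, which I would invoke to bound the body's evaluation under the widened input $\langle\bar{\aState},\bar{\aVal}\rangle$ by $\aStore(\ljSLocation)_{\aFuncOutput}$; combined with the induction hypothesis on the body under the consistent input state $\bar{\aState}$ and parameter $\bar{\aVal}$, this gives $\ljHeap''' \equivLattice \aState'''$ and $\ljVal:\ljType_{\ljVal} \equivLattice \aVal$, whence the reconstructed conclusion $\langle\aObjStore_{\aState'''},\D_{\aState}\rangle$ is consistent with $\ljHeap'''$. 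Care is needed to check that the context-dependency bookkeeping, namely $\D_{\aState''} \join \D_0$ on entry and the restoration of $\D_{\aState}$ on exit, stays consistent with the concrete context $\ljType \joinType \ljType_0$; this is where the interplay between the function store's fixpoint and Definition~\ref{def:consistency_relation} is most delicate.
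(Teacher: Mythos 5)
Your proposal follows essentially the same route as the paper's proof: induction on the concrete derivation with case analysis against the abstract rules, where the upward-closure fact you isolate up front is exactly the paper's subset-consistency lemmas (Lemmas~\ref{thm:subset_consistency_state} and~\ref{thm:subset_consistency_value}), and the operation, application, property-reference, and property-assignment cases are discharged via Definition~\ref{def:abstract_operation} and Lemmas~\ref{thm:function_store}, \ref{thm:property_reference}, and~\ref{thm:property_assignment}, just as in the appendix. The subcase structure you anticipate for the conditional and for function application (single concrete branch or target versus joined abstract results, reconciled by monotonicity) matches the paper's treatment.
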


The proof is by induction on the evaluation of $\ljExp$ (Section \ref{sec:proof_correctness}).

%
%

\subsection{Termination}
\label{sec:termination}

Finally, we want to guarantee termination of our analysis.

\begin{theorem} \label{thm:termination}
                For each $\aState$, $\aScope$, and $\ljExp$, there exist $\aState'$ and $\aVal$
                such that $\aState,\aScope ~\entails~ \ljExp ~\eval~ \aState' ~|~ \aVal$. 
\end{theorem}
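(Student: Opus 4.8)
The plan is to prove totality of the abstract evaluation judgment $\aState,\aScope \entails \ljExp \eval \aState' \mid \aVal$ by well-founded induction, treating the global function store $\aStore$ as part of the ambient configuration that is threaded through the derivation. The argument rests on two facts: the abstract domains are finite for a fixed program, and the function store only ever grows.

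First I would establish finiteness. Every mark $\ljSLocation$ is drawn from the finite set of syntactic program points carrying a superscript (the $\ljNew$, $\lambda$, and $\ljTrace$ occurrences), so the label component $\aObjlabel$ and the dependency component $\D$ range over finite powersets. The base-type lattice $\aLattice$ of Figure~\ref{fig:lattice-value} has finite height: the \LvUndefined, \LvNull, and \LvBool factors are powersets of finite sets, and \LvNum, \LvString are flat lattices. Consequently abstract values $\aVal$, objects $\aObj$ (whose keys are the finitely many lattice elements that can arise), storables, the scope $\aScope$, the object store $\aObjStore$, the state $\aState$, and the function store $\aStore$ all inhabit lattices of finite height; in particular $\aStore$ satisfies the ascending chain condition.

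Next I would define a well-founded measure. The main obstacle is the recursion introduced by \Rule{\AFunctionApplication} via \Rule{\AFunctionStoreNonSubset}: evaluating an application triggers evaluation of a function body, which may again reach the same application. I would control this with the function store. Whenever \Rule{\AFunctionStoreNonSubset} fires, the input is not subsumed ($\langle\aState,\aVal\rangle \not\sqsubseteq \aStore(\ljSLocation)_{\aFuncInput}$), so the widened entry $\aStore(\ljSLocation)_{\aFuncInput} \join \langle\aState,\aVal\rangle$ is strictly greater, and the store is updated before the body is evaluated. Since every rule only enlarges $\aStore$ (monotonicity, which I would state and verify rule-by-rule, appealing to Lemma~\ref{thm:function_store} for the cached case), the quantity $m(\aStore)$, defined as the sum over all function marks of the gap between the current stored input and the top of its finite-height lattice, strictly decreases at each \Rule{\AFunctionStoreNonSubset} step and never increases elsewhere. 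I would then order configurations lexicographically by $(m(\aStore),\ \text{size}(\ljExp))$: subexpression premises decrease the structural component while leaving $m(\aStore)$ non-increasing, the \Rule{\AFunctionStoreNonSubset} body is evaluated at a strictly smaller $m(\aStore)$, and \Rule{\AFunctionStoreSubset} terminates immediately with the cached result.

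Finally, the auxiliary iteration judgments --- function iteration (\Rule{\AFunctionIteration}), property reference (\Rule{\APropertyReferenceIteration}, \Rule{\APropertyReferenceIntersection}), and property assignment (\Rule{\APropertyAssignmentIteration}) --- range over finite sequences of storables, object entries, and labels, so they terminate by ordinary induction on the length of the iterated sequence, appealing to the induction hypothesis for each per-element abstract evaluation. Assembling these, every derivation tree for $\aState,\aScope \entails \ljExp \eval \aState' \mid \aVal$ is finite, which yields the existence of $\aState'$ and $\aVal$. I expect the delicate part to be the monotonicity bookkeeping for $\aStore$ across the merges in \Rule{\ACondition} and the iteration rules, where several subevaluations thread the store in sequence and their stores must be shown to compose monotonically, so that $m(\aStore)$ behaves as the induction requires.
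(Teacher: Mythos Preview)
Your approach is correct and rests on the same two pillars as the paper's proof---monotonicity of the rules and finite height of the reachable analysis lattice---but you package them differently. The paper states monotonicity as a separate lemma and then invokes a high-level Kleene/Tarski-style argument (``monotone on a finite-height lattice, hence the fixpoint iteration terminates''), leaving the reader to connect that to the particular recursion through the function store. You instead make the termination argument explicit by constructing a lexicographic well-founded measure $(m(\aStore),\ \text{size}(\ljExp))$, where $m$ measures the remaining headroom in the function-store inputs. Your version is more operational and makes the role of \Rule{\AFunctionStoreNonSubset} as the only place where structural size can go up (compensated by a strict drop in $m$) fully transparent; the paper's version is shorter but relies on the reader supplying exactly this kind of argument.

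One point where you are quicker than the paper and should slow down: your finiteness claim for $\AObject$. You write that object keys are ``the finitely many lattice elements that can arise,'' but the paper explicitly flags that the $\LvString$ sublattice has infinitely many elements, so $\AObject$ as a domain has infinite height. The paper's ascending-chain-condition lemma argues separately that a fixed program only exercises a finite sublattice of $\LvString$ (constants and primitive operations, with recursion handled by the function-store merge collapsing distinct arguments to $\top$). Your measure $m(\aStore)$ depends on the input component of the function store living in a finite-height lattice, and that in turn depends on this argument, so you should spell it out rather than fold it into a parenthetical.
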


For the proof (Section \ref{sec:proof_termination}), we observe that all rules of the abstract system in Section~\ref{sec:abstract_analysis} are monotone with
respect to all their inputs.  As the analysis lattice for $\aState$ has finite height, all fixpoint computations in the
abstract semantics terminate.


\section{Implementation}
\label{sec:implementation}

The implementation extends TAJS\footnote{\webpageTAJS}, the type
analyzer for JavaScript.  TAJS accepts standard JavaScript \cite{ecma1999:262} .  

The abstract interpretation of values and the analysis state are extended by a set of dependency annotations, according
to the description in Section~\ref{sec:abstract_analysis}.  As shown in Section~\ref{sec:syntax_ldj} values can be
marked by using the $\ljTrace$ expression, which is implemented as a built-in function.  A configuration file can be used
to trace values produced by JavaScript standard operations or DOM
functions. The DOM environment gets constructed during the
initialization of TAJS and is available as as normal code would
be. The functionality and the dependency propagation for these
operations is hard coded.

The extended dependency set has no influence on the lattice structure and does not compromise the precision of the type analyzer. Some notes about the precision can be found in the original work of TAJS \cite{tajs2009}.

The functions $\ljTrace$ and $\ljUntraceClass$ have to be
defined as identity functions before the instrumented code can run in
a standard JavaScript engine.



TAJS handles all language features like prototypes, iterations, and exceptions.
The specification in Figure~\ref{fig:inference-rules_abstract}
simplifies the implementation in several respects.
To support the conditional to properly account for indirect
information flow, the control flow graph had to be extended with special
dependency push and pop nodes to encapsulate sub-graphs and to add or
remove state dependencies.

The type analyzer provides an over-approximation according to the principles described in
Section~\ref{sec:formalization}.  The analysis result shows the set of
traced values and the set of values, 
which are potentially influenced by them.

There are several ways to use the analyzer.  First, a value can be
marked and  its influence and usage can be determined.  This feature can be used to prevent private data from illegal usage and theft.  Second, the $\ljTrace$
function may be used to encapsulate foreign code.  As a result of this encapsulation each value which is modified due to
the foreign code is highlighted by the analysis.  An inspection of the
results can show breaches of integrity. 


Our implementation is based on an early version of TAJS. The current
TAJS version includes support  for further language features including \texttt{eval}
\cite{unevalizer2012}. The dependency analysis can benefit from these
extensions by merging it into the current development branch of TAJS.

%

\subsection{Runtime Evaluation}

We evaluated the performance impact of our extension by analyzing
programs from the Google V8 Benchmark
Suite\footnote{\webpageGOOGLE}. The programs we selected range from
about 400 to 5000 lines of code and perform tasks like an OS kernel 
simulation, constraint solving, or extraction of regular
expressions. The tests were run on a MacBook Pro with 2 GHz Intel
Core i7 processor with 8 GB memory.

\begin{figure}
\centering
\small
\begin{tabular}{ l || l | l}
\toprule
\textbf{Benchmark} & \textbf{TAJS} & \textbf{TbDA} \\ \midrule
Richards \scriptsize(539 lines)         & 1596                                  & 2890 \\
DeltaBlue \scriptsize(880 lines)                & 3471                                  & 4031 \\
Crypto \scriptsize(1689 lines)                  & 3637                                  & 7527 \\
RegExp \scriptsize(4758 lines)                  & 3710                                  & 4104 \\
Splay \scriptsize(394 lines)                    & 1598                                  & 2521 \\
Navier Stokes \scriptsize(387 lines)    & 2794                                  & 3118 \\
\bottomrule
\end{tabular}
\caption{Google V8 Benchmark Suite.}
\label{fig:googleV8Runtime}
\end{figure}

Figure~\ref{fig:googleV8Runtime} shows the particular benchmarks
together with the averaged time (in milliseconds) to run the analysis
and to print the output. The \textbf{TAJS} column shows the 
timing of the original type analyzer without dependency extension. The
\textbf{TbDA} column shows the timing of our extended
version. The figures demonstrate that the dependency analysis leads to
a slowdown between 12\% and 106\%. Two further benchmarks
(\emph{RayTrace} and \emph{EarleyBoyer}) did not run to completion 
because of compatibility problems caused by the outdated version of
TAJS underlying our implementation.


\section{Related Work} \label{sec:related_work}

Information flow analysis was pioneered by Denning's work
\cite{Denning:1976:LMS:360051.360056,Denning:1977:CPS:359636.359712}
which models different security levels as values in a lattice
containing elements like \emph{High} and \emph{Low} and
which suggests an analysis as an abstract 
interpretation of the propagation of these levels through the program.  Zanioli and others
\cite{Zanioli:2012:SSA:2231936.2231983} present a recent example of such an analysis with an emphasis on constructing an
expressive analysis domain. 

Many authors have taken up this approach and transposed it to type
theoretic and logical settings
\cite{Heintze:1998:SCP:268946.268976,Abadi:2007:ACC:1230168.1230577,Volpano:1996:STS:353629.353648,amtoft2004information,Sabelfeld03language-basedinformation-flow}.

In these systems, input value types are enhanced with security levels.  Well-typed
programs guarantee that no \emph{High} value flows into a \emph{Low} output and thus
noninterference between high inputs and low outputs \cite{GoguenMeseguer1982}.  Similar to the soundness property of our dependency analysis changes on
\emph{High} inputs are unobservable in \emph{Low} outputs.
Dependencies are related to security types, but more
flexible 
\cite{AbadiBanerjeeHeintzeRiecke1999}.  They can be analyzed before committing to a fixed security lattice.

Security aspects of JavaScript programs have received much attention. Different approaches focus on static or dynamic analysis
techniques, e.g. \cite{Just:2011:IFA:2093328.2093331,
  hedin2012information, Chugh:2009:SIF:1542476.1542483}, or attempt to
make guarantees by
reducing the functionality \cite{miller2008safe}. The analysis for dependencies is no security analysis per se,
but the analysis results express information that is relevant for
confidentiality and integrity concerns.

Dependency analysis can be seen as the static counterpart to data
tainting (e.g.,  \cite{Clause:2007:DGD:1273463.1273490}).
Tainting relies on augmenting the run-time representation of a value with
information about its properties (like its confidentiality level).  Users of the value first check at run time if that use is granted according to some
security policy.  Dynamic tainting approaches have been successfully used to address security attacks, including buffer
overruns, format string attacks, SQL and command injections, and cross-site scripting. Tainting semantics are also used
for automatic sanitizer placement \cite{Livshits:2013:TFA:2429069.2429115,DBLP:conf/sp/BalzarottiCFJKKV08}. There are also uses in program understanding,
software testing, and debugging. Tainting can also be augmented with static analysis to increase its effectiveness \cite{vogt2007cross}.

Dynamic languages like JavaScript have many peculiarities that make program
analysis and the interpretation of its results challenging \cite{5230486}. 
TAJS \cite{tajs2009} and hence our analysis can handle almost all dynamic features of JavaScript.


\section{Conclusion}
\label{sec:conclusion}

We have designed a type-based dependency analysis for JavaScript, proved its soundness and termination, and
demonstrated that independence ensures noninterference.  We have implemented the analysis as an
extension of the open-source JavaScript analyzer TAJS.  This approach ensures that our analysis can be applied
to real-world JavaScript programs.

While a dependency analysis is not a security analysis,  it can form the basis for investigating
noninterference.  This way, its results can be used to ensure confidentiality and
integrity, as well as verify the correct placement of sanitizers.




\bibliographystyle{abbrvnat}

\balancecolumns


\appendix
%

%

%



\section{Context Dependency}
\label{sec:proof_context}

\begin{proof}[Proof of Lemma \ref{thm:context_dependency}]


 $\ljHeap,\ljEnv,\ljType \entails \ljExp \eval \ljHeap' \mid \ljVal_{\ljExp}:\ljType_{\ljExp}$
   implies that $\ljType\subseteq\ljType_{\ljExp}$.
   By induction on the derivation of
   $\ljHeap,\ljEnv,\ljType ~\entails~ \ljExp ~\eval~ \ljHeap' ~|~ \ljVal_{\ljExp}:\ljType_{\ljExp}$.

   \begin{description}
	  \item[Case] \Rule{\LDJConstant}: $\ljExp\equiv\ljConst;~ \ljVal_{\ljExp}\equiv\ljConst;~ \ljType_{\ljExp}\equiv\ljType$. 
		 Claim holds because $\ljType\subseteq\ljType$.

	  \item[Case] \Rule{\LDJVariable}: $\ljExp\equiv\ljVar;~ \ljVal_{\ljExp}\equiv\ljVal;~ \ljType_{\ljExp}\equiv\ljType_{\ljVal}\joinType\ljType ~|~ \ljVal:\ljType_{\ljVal} = \ljEnv(\ljVar)$. 
		 Claim holds because $\ljType\subseteq\ljType_{\ljVal}\joinType\ljType$.

	  \item[Case] \Rule{\LDJOperation}:
		 $\ljExp\equiv\ljOp(\ljExp_{0}, \ljExp_{1});~ \ljVal_{\ljExp}\equiv\ljVal_{op};~ \ljType_{\ljExp}\equiv\ljType_0\joinType\ljType_1$.
		 By inversion 
		 \begin{align}
			\ljHeap,\ljEnv,\ljType ~&\entails~  \ljExp_0 ~\eval~ \ljHeap' ~|~ \ljVal_0:\ljType_{0}\\
			\ljHeap',\ljEnv,\ljType ~&\entails~  \ljExp_1 ~\eval~ \ljHeap'' ~|~ \ljVal_1:\ljType_{1}
		 \end{align}
		 holds. Induction yields $\ljType\subseteq\ljType_{0}$ and $\ljType\subseteq\ljType_{1}$.
		 Claim holds because $\ljType\subseteq\ljType_{\ljExp}=\ljType_0\joinType\ljType_1$.

	  \item[Case] \Rule{\LDJFunctionCreation}: $\ljExp\equiv\ljFunc \ljVar.\ljExp;~ \ljVal_{\ljExp}\equiv\ljLocation;~ \ljType_{\ljExp}\equiv\ljType$.
		 Claim holds because $\ljType\subseteq\ljType$.

	  \item[Case] \Rule{\LDJObjectCreation}: $\ljExp\equiv\ljNew~\ljExp_{p};~ \ljVal_{\ljExp}\equiv\ljLocation;~ \ljType_{\ljExp}\equiv\ljType_{\ljVal}$.
		 By inversion 
		 \begin{align}
			\ljHeap,\ljEnv,\ljType ~\entails~ \ljExp_{p} ~\eval~ \ljHeap' ~|~ \ljVal:\ljType_{\ljVal}
		 \end{align}
		 holds. Induction yields $\ljType\subseteq\ljType_{\ljVal}$.
		 Claim holds because $\ljType\subseteq\ljType_{\ljVal}$.

	  \item[Case] \Rule{\LDJFunctionApplication}: $\ljExp\equiv\ljExp_0(\ljExp_1);~ \ljVal_{\ljExp}\equiv\ljVal;~ \ljType_{\ljExp}\equiv\ljType_{\ljVal}$.
		 By inversion 
		 \begin{align}
			\ljHeap'',\dot{\ljEnv}[\ljVar \mapsto \ljVal_{1}:\ljType_{1}],\ljType\joinType\ljType_{0} ~\entails~ \dot{\ljExp} ~\eval~ \ljHeap''' ~|~ \ljVal:\ljType_{\ljVal}
		 \end{align}
		 holds where
		 $\dot{\ljEnv}, \dot{\ljExp}$ is the
		 closure obtained
		 from evaluating $\ljExp_0$. It follows that $\ljType\subseteq\ljType\joinType\ljType_{0}\subseteq\ljType_{\ljVal}$.

	  \item[Case] \Rule{\LDJPropertyReference}: $\ljExp\equiv\ljExp_0[\ljExp_1];~ \ljVal_{\ljExp}\equiv\ljVal;~ \ljType_{\ljExp}\equiv\ljType_{\ljVal}\joinType\ljType_{\ljLocation}\joinType\ljType_{\ljStr} ~|~ \ljVal:\ljType_{\ljVal} = \ljHeap''(\ljLocation)(\ljStr)$.
		 By inversion 
		 \begin{align}
			\ljHeap,\ljEnv,\ljType ~\entails~ \ljExp_{0} ~\eval~ \ljHeap' ~|~ \ljLocation:\ljType_{\ljLocation} \\
			\ljHeap',\ljEnv,\ljType ~\entails~ \ljExp_{1} ~\eval~ \ljHeap'' ~|~ \ljStr:\ljType_{\ljStr}
		 \end{align}
		 holds.  Induction yields $\ljType\subseteq\ljType_{\ljLocation}$ and $\ljType\subseteq\ljType_{\ljStr}$.
		 Claim holds because $\ljType\subseteq\ljType_{\ljVal}\joinType\ljType_{\ljLocation}\joinType\ljType_{\ljStr}$.

	  \item[Case] \Rule{\LDJPropertyAssignment}: $\ljExp\equiv\ljExp_0[\ljExp_1] = \ljExp_2;~ \ljVal_{\ljExp}\equiv\ljVal;~ \ljType_{\ljExp}\equiv\ljType_{\ljVal}$.
		 By inversion 
		 \begin{align}
			\ljHeap'',\ljEnv,\ljType ~\entails~ \ljExp_{2} ~\eval~ \ljHeap''' ~|~ \ljVal:\ljType_{\ljVal}
		 \end{align}
		 holds. Induction yields $\ljType\subseteq\ljType_{\ljVal}$.
		 Claim holds because $\ljType\subseteq\ljType_{\ljVal}$.

	  \item[Case] \Rule{\LDJConditionTrue}: $\ljExp\equiv\ljIf~ (\ljExp_0)~ \ljExp_1,~ \ljExp_2;~ \ljVal_{\ljExp}\equiv\ljVal_1;~ \ljType_{\ljExp}\equiv\ljType_{1}$.
		 By inversion 
		 \begin{align}
			\ljHeap',\ljEnv,\ljType\joinType\ljType_{0} ~\entails~ \ljExp_1 ~\eval~ \ljHeap_{1}'' ~|~ \ljVal_{1}:\ljType_{1}
		 \end{align}
		 holds.
		 It follows that $\ljType\subseteq\ljType\joinType\ljType_{0}\subseteq\ljType_{1}$

	  \item[Case] \Rule{\LDJConditionFalse}: Analogous to case\\
		 \Rule{\LDJConditionTrue}.

	  \item[Case] \Rule{\LDJTrace}: $\ljExp\equiv\ljTrace~ (\ljExp');~ \ljVal_{\ljExp}\equiv\ljVal;~ \ljType_{\ljExp}\equiv\ljType_{\ljVal}$.
		 By inversion 
		 \begin{align}
			\ljHeap,\ljEnv,\ljType\joinType\ljSource ~\entails~ \ljExp' ~\eval~ \ljHeap' ~|~ \ljVal:\ljType_{\ljVal}
		 \end{align}
		 holds. It follows that $\ljType\subseteq\ljType\joinType\ljSource\subseteq\ljType_{\ljVal}$

   \end{description}
\end{proof}


\section{Noninterference}
\label{sec:proof_noninterference}

%

\begin{proof}[Proof of Theorem \ref{thm:noninterference}]

   Suppose $\ljHeap,\ljEnv,\ljType \entails \ljExp \eval \ljHeap' \mid \ljVal:\ljType_{\ljVal}$.
   If $\ljSource \notin \bar{\ljType}$ and $\ljHeap \equivTypeSub \tilde{\ljHeap}$ and $\ljEnv\equivTypeSub\tilde{\ljEnv}$ then
   $\tilde{\ljHeap},\tilde{\ljEnv},\ljType \entails \bar{\ljExp} \eval \tilde{\ljHeap}' \mid
   \tilde{\ljVal}:\tilde{\ljType}_{\ljVal}$ with $\bar{\ljExp}=\ljExp[\ljSource \mapsto \tilde{\ljExp}]$ and
   $\ljExp \equivTypeSub \bar{\ljExp}$ and $\ljHeap' \equivTypeSubPrime \tilde{\ljHeap}'$ and
   $v:\ljType_{\ljVal}\equivTypeSubPrime\tilde{\ljVal}:\tilde{\ljType}_{\ljVal}$,
   for some $\bijection'$ extending $\bijection$.

   Proof by induction on the derivation of
   $\ljHeap,\ljEnv,\ljType ~\entails~ \ljExp ~\eval~ \ljHeap' ~|~ \ljVal:\ljType$ and 
   $\tilde{\ljHeap},\tilde{\ljEnv},\ljType ~\entails~ \bar{\ljExp} ~\eval~ \tilde{\ljHeap}' ~|~ \tilde{\ljVal}:\tilde{\ljType}_{\ljVal}$.

   \begin{description}


	  \item[Case] \Rule{\LDJConstant}: $\ljExp\equiv\ljConst$.
		 By definition \eqref{def:k-equivalence_value} claim holds because:
		 $\ljConst = \ljConst ~\rightarrow~ \ljConst:\ljType\equivTypeSub\ljConst:\ljType$


	  \item[Case] \Rule{\LDJVariable}: $\ljExp\equiv\ljVar$.
		 $\forall\ljSource\notin\bar{\ljType}$ yields
		 \begin{description}
			\item[Subcase] $\ljSource \notin \ljType$:
			   By definition \eqref{def:k-equivalence_value}, \eqref{def:k-equivalence_environment} claim holds because:
			   $ 
			   \ljEnv \equivTypeSub \tilde{\ljEnv} ~\rightarrow~ 
			   \ljEnv(\ljVar)\joinType\ljType\equivTypeSub\tilde{\ljEnv}(\ljVar)\joinType\ljType
			   $

			\item[Subcase] $\ljSource \in \ljType$:
			   From definition \eqref{def:k-equivalence_value}, \eqref{def:k-equivalence_environment} claim holds because:
			   $
			   \ljSource \in \ljType ~\rightarrow~ \ljEnv(\ljVar)\joinType\ljType\equivTypeSub\tilde{\ljEnv}(\ljVar)\joinType\ljType
			   $
		 \end{description}


	  \item[Case] \Rule{\LDJOperation}: $\ljExp\equiv\ljOp(\ljExp_{0}, \ljExp_{1})$.
		 By inversion 
		 \begin{align}
			&\forall \ljSource \notin \bar{\ljType}:~ \tilde{\ljHeap},\tilde{\ljEnv},\ljType ~\entails~
			\ljExp_{0}[\ljSource\mapsto\tilde{\ljExp}] ~\eval~
			\tilde{\ljHeap}' ~|~ \tilde{\ljVal}_{0}:\tilde{\ljType}_{0}\\
			&\forall \ljSource \notin \bar{\ljType}:~ \tilde{\ljHeap}',\tilde{\ljEnv},\ljType ~\entails~
			\ljExp_{1}[\ljSource\mapsto\tilde{\ljExp}] ~\eval~
			\tilde{\ljHeap}'' ~|~ \tilde{\ljVal}_{1}:\tilde{\ljType}_{1}
		 \end{align}
		 holds. $\forall\ljSource\notin\bar{\ljType}$ yields
		 \begin{description}
			\item[Subcase] $\ljSource \notin \ljType_{0} ~\wedge~
			   \ljSource \notin \tilde{\ljType}_{0} ~\wedge~
			   \ljSource \notin \ljType_{1} ~\wedge~
			   \ljSource \notin \tilde{\ljType}_{1}$:
			   By definition \eqref{def:k-equivalence_value} we obtain
			   \begin{align}
				  \ljSource \notin \ljType_{0} ~\wedge~ \ljSource \notin \tilde{\ljType}_{0} ~&\rightarrow~ \ljVal_{0}=\tilde{\ljVal}_{0}\\
				  \ljSource \notin \ljType_{1} ~\wedge~ \ljSource \notin \tilde{\ljType}_{1} ~&\rightarrow~ \ljVal_{1}=\tilde{\ljVal}_{1}
			   \end{align}
			   and this leads to:
			   \begin{align}
				  \begin{split}
					 &\ljOperation(\ljVal_0, \ljVal_1)\joinType\ljType_{0}\joinType\ljType_{1}
					 \equivTypeSub\\
					 &\ljOperation(\tilde{\ljVal}_0, \tilde{\ljVal}_1)\joinType\tilde{\ljType}_{0}\joinType\tilde{\ljType}_{1}
				  \end{split}
			   \end{align}

			\item[Subcase] $\ljSource \in \ljType_{0} ~\vee~
			   \ljSource \in \tilde{\ljType}_{0} ~\vee~ 
			   \ljSource \in \ljType_{1} ~\vee~
			   \ljSource \in \tilde{\ljType}_{1}$:
			   Again by~\eqref{def:k-equivalence_value}
			   \begin{align}
				  \begin{split}
					 &\ljOperation(\ljVal_0, \ljVal_1)\joinType\ljType_{0}\joinType\ljType_{1}
					 \equivTypeSub\\
					 &\ljOperation(\tilde{\ljVal}_0, \tilde{\ljVal}_1)\joinType\tilde{\ljType}_{0}\joinType\tilde{\ljType}_{1}
				  \end{split}
			   \end{align}     
			   holds.
		 \end{description}
		 Claim holds because $\ljOperation(\ljVal_0, \ljVal_1)\joinType\ljType_{0}\joinType\ljType_{1} \equivTypeSub
		 \ljOperation(\tilde{\ljVal}_0, \tilde{\ljVal}_1)\joinType\tilde{\ljType}_{0}\joinType\tilde{\ljType}_{1}$.


	  \item[Case] \Rule{\LDJFunctionCreation}: $\ljExp\equiv\ljFunc \ljVar.\ljExp$ where location $\ljLocation$ with $\ljLocation\notin \dom(\ljHeap)$
		 and location $\tilde{\ljLocation}$ with $\tilde{\ljLocation}\notin \dom(\tilde{\ljHeap})$ extends the renaming $\bijection$ with
		 $\bijection'=\bijection[\ljLocation\mapsto\tilde{\ljLocation}]$. By definition \eqref{def:k-equivalence_value}
		 $\ljLocation:\ljType\equivTypeSubPrime\tilde{\ljLocation}:\ljType$ holds. Claim holds because:
		 \begin{align}
			\ljHeap[\ljLocation\mapsto\langle \ljEnv, \ljFunc \ljVar.\ljExp \rangle]
			\equivTypeSubPrime
			\tilde{\ljHeap}[\tilde{\ljLocation}\mapsto\langle \tilde{\ljEnv}, \ljFunc \ljVar.\tilde{\ljExp} \rangle]
		 \end{align}


	  \item[Case] \Rule{\LDJObjectCreation}: $\ljExp\equiv\ljNew~\ljExp_{p}$ where location $\ljLocation$ with $\ljLocation\notin \dom(\ljHeap)$
		 and location $\tilde{\ljLocation}$ with $\tilde{\ljLocation}\notin \dom(\tilde{\ljHeap})$ extends the renaming $\bijection$ with
		 $\bijection'=\bijection[\ljLocation\mapsto\tilde{\ljLocation}]$. By definition \eqref{def:k-equivalence_value}
		 $\ljLocation:\ljType\equivTypeSubPrime\tilde{\ljLocation}:\ljType$ holds.
		 By inversion 
		 \begin{align}
			\forall \ljSource \notin \bar{\ljType}:~ \tilde{\ljHeap},\tilde{\ljEnv},\ljType ~\entails~
			\ljExp_{p}[\ljSource\mapsto\tilde{\ljExp}] ~\eval~
			\tilde{\ljHeap}' ~|~ \tilde{\ljVal}:\tilde{\ljType}_{\ljVal}
		 \end{align}
		 holds. Claim holds because:
		 \begin{align}
			\ljHeap'[\ljLocation\mapsto\ljVal] \equivTypeSubPrime \tilde{\ljHeap}'[\tilde{\ljLocation}\mapsto\tilde{\ljVal}]
		 \end{align}


	  \item[Case] \Rule{\LDJFunctionApplication}: $\ljExp\equiv\ljExp_0(\ljExp_1)$.
		 By inversion 
		 \begin{align}
			&\forall \ljSource \notin \bar{\ljType}:~ \tilde{\ljHeap},\tilde{\ljEnv},\ljType ~\entails~
			\ljExp_{0}[\ljSource\mapsto\tilde{\ljExp}] ~\eval~
			\tilde{\ljHeap}' ~|~ \tilde{\ljLocation}:\tilde{\ljType}_{0}\\
			&\forall \ljSource \notin \bar{\ljType}:~ \tilde{\ljHeap}',\tilde{\ljEnv},\ljType ~\entails~
			\ljExp_{1}[\ljSource\mapsto\tilde{\ljExp}] ~\eval~ \tilde{\ljHeap}'' ~|~ \tilde{\ljVal}_{1}:\tilde{\ljType}_{1}
		 \end{align}
		 holds. $\forall\ljSource\notin\bar{\ljType}$ yields
		 \begin{description}
			\item[Subcase] $\ljSource \notin \ljType_{0} ~\wedge~
			   \ljSource \notin \tilde{\ljType}_{0}$:
			   By definition \eqref{def:k-equivalence_value} we obtain
			   \begin{align}
				  \ljSource \notin \ljType_{0} ~\wedge~ \ljSource \notin \tilde{\ljType}_{0} ~&\rightarrow~
				  \bijection(\ljLocation) = \tilde{\ljLocation}
			   \end{align}
			   and this leads by definition \eqref{def:k-equivalence_object}, \eqref{def:k-equivalence_heap} to:
			   \begin{align}
				  \begin{split}
					 &\langle \dot{\ljEnv}, \ljFunc \ljVar.\dot{\ljExp} \rangle=\ljHeap(\ljLocation) ~\wedge~
					 \langle \ddot{\ljEnv}, \ljFunc \ljVar.\ddot{\ljExp} \rangle=\tilde{\ljHeap}(\tilde{\ljLocation})\\
					 &\rightarrow~ \dot{\ljEnv} \equivTypeSub \ddot{\ljEnv} ~\wedge~
					 \ljFunc \ljVar.\dot{\ljExp} \equivTypeSub \ljFunc \ljVar.\ddot{\ljExp}
				  \end{split}
			   \end{align}
			   Hence, $\dot{\ljEnv}[\ljVar\mapsto\ljVal_{1}:\ljType_{1}] \equivTypeSub \ddot{\ljEnv}[\ljVar\mapsto\tilde{\ljVal}_{1}:\tilde{\ljType}_{1}]$
			   and $\dot{\ljExp} \equivTypeSub \ddot{\ljExp} ~\rightarrow~ \dot{\ljExp} \equivTypeSub\ddot{\ljExp}[\ljSource\mapsto\tilde{\ljExp}]$ complies
			   the postcondition to apply the induction where
			   \begin{align}
				  \begin{split}
					 &\forall \ljSource \notin \bar{\ljType}:~
					 \tilde{\ljHeap}'',\ddot{\ljEnv}[\ljVar\mapsto\tilde{\ljVal}_{1}:\tilde{\ljType}_{1}],\ljType ~\entails~\\
					 &\ddot{\ljExp} [\ljSource\mapsto\tilde{\ljExp}] ~\eval~ \tilde{\ljHeap}'' ~|~ \tilde{\ljVal}:\tilde{\ljType}_{\ljVal}
				  \end{split}
			   \end{align}

			\item[Subcase] $\ljSource \in \ljType_{0} ~\vee~
			   \ljSource \in \tilde{\ljType}_{0}$:
			   By lemma \eqref{thm:context_dependency} applied on $\ljType\joinType\ljType_{0} ~\subseteq~ \ljType_{\ljVal}$
			   and $\ljType\joinType\tilde{\ljType}_{0} ~\subseteq~ \tilde{\ljType}_{\ljVal}$ the claim holds because:
			   \begin{align}
				  &\ljVal:\ljType_{\ljVal} \equivTypeSub \tilde{\ljVal}:\tilde{\ljType}_{\ljVal}\\
				  &\ljHeap''' \equivTypeSub \tilde{\ljHeap}'''
			   \end{align}
		 \end{description}


	  \item[Case] \Rule{\LDJPropertyReference}: $\ljExp\equiv\ljExp_0[\ljExp_1]$.
		 By inversion 
		 \begin{align}
			&\forall \ljSource \notin \bar{\ljType}:~ \tilde{\ljHeap},\tilde{\ljEnv},\ljType ~\entails~ \ljExp_{0}[\ljSource\mapsto\tilde{\ljExp}] ~\eval~ \tilde{\ljHeap}' ~|~ \tilde{\ljLocation}:\tilde{\ljType}_{\ljLocation}\\
			&\forall \ljSource \notin \bar{\ljType}:~ \tilde{\ljHeap}',\tilde{\ljEnv},\ljType ~\entails~ \ljExp_{1}[\ljSource\mapsto\tilde{\ljExp}] ~\eval~ \tilde{\ljHeap}'' ~|~ \tilde{\ljStr}:\tilde{\ljType}_{\ljStr}
		 \end{align}
		 hold.

		 \begin{description}
			\item[Subcase] $\ljSource \notin \ljType_{\ljLocation} ~\wedge~ \ljSource \notin \tilde{\ljType}_{\ljLocation} ~\wedge~ \ljSource \notin \ljType_{\ljStr} ~\wedge~ \ljSource \notin \tilde{\ljType}_{\ljStr}$: By definition \eqref{def:k-equivalence_value} we obtain
			   \begin{align}
				  \ljSource \notin \ljType_{\ljLocation} ~\wedge~ \ljSource \notin \tilde{\ljType}_{\ljLocation} ~&\rightarrow~
				  \bijection(\ljLocation)=\tilde{\ljLocation}\\
				  \ljSource \notin \ljType_{\ljStr} ~\wedge~ \ljSource \notin \tilde{\ljType}_{\ljStr} ~&\rightarrow~
				  \ljStr=\tilde{\ljStr}
			   \end{align}
			   which follows that 
			   \begin{align}
				  \begin{split}
					 &\ljHeap''(\ljLocation)(\ljStr)\joinType\ljType_{\ljLocation}\joinType\ljType_{\ljStr} \equivTypeSub\\
					 &\tilde{\ljHeap}''(\tilde{\ljLocation})(\tilde{\ljStr})\joinType\tilde{\ljType}_{\ljLocation}\joinType\tilde{\ljType}_{\ljStr}
				  \end{split}
			   \end{align}

			\item[Subcase] $\ljSource \in \ljType_{\ljLocation} ~\vee~ \ljSource \in \tilde{\ljType}_{\ljLocation} ~\vee~ \ljSource \in \ljType_{\ljStr} ~\vee~ \ljSource
			   \in \tilde{\ljType}_{\ljStr}$:
			   By definition \eqref{def:k-equivalence_value} we obtain
			   \begin{align}
				  \begin{split}
					 &\ljHeap''(\ljLocation)(\ljStr)\joinType\ljType_{\ljLocation}\joinType\ljType_{\ljStr} \equivTypeSub\\
					 &\tilde{\ljHeap}''(\tilde{\ljLocation})(\tilde{\ljStr})\joinType\tilde{\ljType}_{\ljLocation}\joinType\tilde{\ljType}_{\ljStr}
				  \end{split}
			   \end{align}
		 \end{description}
		 Claim holds because $\ljHeap''(\ljLocation)(\ljStr)\joinType\ljType_{\ljLocation}\joinType\ljType_{\ljStr} \equivTypeSub
		 \tilde{\ljHeap}''(\tilde{\ljLocation})(\tilde{\ljStr})\joinType\tilde{\ljType}_{\ljLocation}\joinType\tilde{\ljType}_{\ljStr}$.


	  \item[Case] \Rule{\LDJPropertyAssignment}: $\ljExp\equiv\ljExp_0[\ljExp_1] = \ljExp_2$.
		 By inversion 
		 \begin{align}
			&\forall \ljSource \notin \bar{\ljType}:~ \tilde{\ljHeap},\tilde{\ljEnv},\ljType ~\entails~
			\ljExp_{0}[\ljSource\mapsto\tilde{\ljExp}] ~\eval~
			\tilde{\ljHeap}' ~|~ \tilde{\ljLocation}:\tilde{\ljType}_{\ljLocation}\\
			&\forall \ljSource \notin \bar{\ljType}:~ \tilde{\ljHeap}',\tilde{\ljEnv},\ljType ~\entails~
			\ljExp_{1}[\ljSource\mapsto\tilde{\ljExp}] ~\eval~
			\tilde{\ljHeap}'' ~|~ \tilde{\ljStr}:\tilde{\ljType}_{\ljStr}\\
			&\forall \ljSource \notin \bar{\ljType}:~ \tilde{\ljHeap}'',\tilde{\ljEnv},\ljType ~\entails~
			\ljExp_{2}[\ljSource\mapsto\tilde{\ljExp}] ~\eval~
			\tilde{\ljHeap}''' ~|~ \tilde{\ljVal}:\tilde{\ljType}_{\ljVal}
		 \end{align}
		 holds. $\forall\ljSource\notin\bar{\ljType}$ yields
		 \begin{description}
			\item[Subcase] $\ljSource \notin \ljType_{\ljLocation} ~\wedge~
			   \ljSource \notin \tilde{\ljType}_{\ljLocation} ~\wedge~
			   \ljSource \notin \ljType_{\ljStr} ~\wedge~
			   \ljSource \notin \tilde{\ljType}_{\ljStr} ~\wedge~
			   \ljSource \notin \ljType_{\ljVal} ~\wedge~
			   \ljSource \notin \tilde{\ljType}_{\ljVal}$.
			   By definition \eqref{def:k-equivalence_value}, \eqref{def:k-equivalence_heap} we obtain
			   \begin{align}
				  \ljSource \notin \ljType_{\ljLocation} ~\wedge~ \ljSource \notin \tilde{\ljType}_{\ljLocation} ~&\rightarrow~
				  \bijection(\ljLocation)=\tilde{\ljLocation}\\
				  \ljSource \notin \ljType_{\ljStr} ~\wedge~ \ljSource \notin \tilde{\ljType}_{\ljStr} ~&\rightarrow~
				  \ljStr=\tilde{\ljStr}\\
				  \ljSource \notin \ljType_{\ljVal} ~\wedge~ \ljSource \notin \tilde{\ljType}_{\ljVal} ~&\rightarrow~
				  \ljVal=\tilde{\ljVal}
			   \end{align}
			   and this leads to:
			   \begin{align}
				  \begin{split}
					 &\ljHeap'''[\ljLocation,\ljStr\mapsto \ljVal:\ljType_{\ljVal} \joinType \ljType_{\ljLocation} \joinType \ljType_{\ljStr}] \equivTypeSub\\
					 &\tilde{\ljHeap}'''[\tilde{\ljLocation},\tilde{\ljStr}\mapsto \tilde{\ljVal}:\tilde{\ljType}_{\ljVal} \joinType \tilde{\ljType}_{\ljLocation} \joinType \tilde{\ljType}_{\ljStr}]
				  \end{split}
			   \end{align}

			\item[Subcase] $\ljSource \in \ljType_{\ljLocation} ~\vee~
			   \ljSource \in \tilde{\ljType}_{\ljLocation} ~\vee~
			   \ljSource \in \ljType_{\ljStr} ~\vee~
			   \ljSource \in \tilde{\ljType}_{\ljStr} ~\vee~
			   \ljSource \in \ljType_{\ljVal} ~\vee~
			   \ljSource \in \tilde{\ljType}_{\ljVal}$.
			   By definition \eqref{def:k-equivalence_value}, \eqref{def:k-equivalence_heap}
			   \begin{align}
				  \begin{split}
					 &\ljHeap'''[\ljLocation,\ljStr\mapsto \ljVal:\ljType_{\ljVal} \joinType \ljType_{\ljLocation} \joinType \ljType_{\ljStr}] \equivTypeSub\\
					 &\tilde{\ljHeap}'''[\tilde{\ljLocation},\tilde{\ljStr}\mapsto \tilde{\ljVal}:\tilde{\ljType}_{\ljVal} \joinType \tilde{\ljType}_{\ljLocation} \joinType \tilde{\ljType}_{\ljStr}]
				  \end{split}
			   \end{align}
			   holds.
		 \end{description}
		 Claim holds because $\ljHeap'''[\ljLocation,\ljStr\mapsto \ljVal:\ljType_{\ljVal} \joinType \ljType_{\ljLocation} \joinType \ljType_{\ljStr}] \equivTypeSub
		 \tilde{\ljHeap}'''[\tilde{\ljLocation},\tilde{\ljStr}\mapsto \tilde{\ljVal}:\tilde{\ljType}_{\ljVal} \joinType \tilde{\ljType}_{\ljLocation} \joinType \tilde{\ljType}_{\ljStr}]$.


	  \item[Case] \Rule{\LDJConditionTrue}: $\ljExp\equiv\ljIf~ (\ljExp_0)~ \ljExp_1,~ \ljExp_2$.
		 By inversion 
		 \begin{align}
			&\forall \ljSource \notin \bar{\ljType}:~ \tilde{\ljHeap},\tilde{\ljEnv},\ljType ~\entails~
			\ljExp_{0}[\ljSource\mapsto\tilde{\ljExp}] ~\eval~
			\tilde{\ljHeap}' ~|~ \tilde{\ljVal}_{0}:\tilde{\ljType}_{0}\\
			\begin{split}
			&\forall \ljSource \notin \bar{\ljType}:~ \tilde{\ljHeap}',\tilde{\ljEnv},\ljType\joinType\tilde{\ljType}_{0} ~\entails~\\
			&\ljExp_{1}[\ljSource\mapsto\tilde{\ljExp}] ~\eval~
			\tilde{\ljHeap}_{1}'' ~|~ \tilde{\ljVal}_{1}:\tilde{\ljType}_{1}
			\end{split}
		 \end{align}
		 holds.  $\forall\ljSource\notin\bar{\ljType}$ yields
		 \begin{description}
			\item[Subcase] $\ljSource \notin \ljType_{0} ~\wedge~
			   \ljSource \notin \tilde{\ljType}_{0}$: 
			   By definition \eqref{def:k-equivalence_value} we obtain
			   \begin{align}
				  \ljSource \notin \ljType_{0} ~\wedge~ \ljSource \notin \tilde{\ljType}_{0} ~&\rightarrow~ \ljVal_{0}=\tilde{\ljVal}_{0}
			   \end{align}
			   Claim holds by inversion of \Rule{\LDJConditionTrue}.

			\item[Subcase] $\ljSource \in \ljType_{\ljVal} ~\vee~
			   \ljSource \in \tilde{\ljVal}_{0}$:
			   By lemma \eqref{thm:context_dependency} applied on $\ljType\joinType\ljType_{0} ~\subseteq~ \ljType_{1}$ and 
			   $\ljType\joinType\tilde{\ljType}_{0} ~\subseteq~ \tilde{\ljType}_{1}$ the claim holds because:
			   \begin{align}
				  \ljVal_{1}:\ljType_{1} \equivTypeSub \tilde{\ljVal}_{1}:\tilde{\ljType}_{1}\\
				  \ljHeap''_{1} \equivTypeSub \tilde{\ljHeap}''_{1}
			   \end{align}

		 \end{description}


	  \item[Case] \Rule{\LDJConditionFalse}: Analogous to case \\ \Rule{\LDJConditionTrue}.

	  \item[Case] \Rule{\LDJTrace}: $\ljExp\equiv\ljTrace~ (\ljExp)$.
		 By inversion 
		 \begin{align}
			\forall \ljSource' \notin \bar{\ljType}:~ \tilde{\ljHeap},\tilde{\ljEnv},\ljType\joinType\ljSource ~\entails~
			\ljExp_{\ljSource}[\ljSource\mapsto\tilde{\ljExp}] ~\eval~
			\tilde{\ljHeap}' ~|~ \tilde{\ljVal}:\tilde{\ljType}_{\ljVal}
		 \end{align}
		 holds.  $\forall\ljSource\notin\bar{\ljType}$ yields
		 \begin{description}
			\item[Subcase] $\ljSource' \neq \ljSource$:
			   Claim holds by inversion \Rule{\LDJTrace}.

			\item[Subcase] $\ljSource' = \ljSource$:
			   By definition \eqref{def:substitution-i} we obtain:
			   \begin{align}
				  \forall \ljSource' \notin \bar{\ljType}:~ \tilde{\ljHeap},\tilde{\ljEnv},\ljType\joinType\ljSource ~\entails~
				  \tilde{\ljExp} ~\eval~
				  \tilde{\ljHeap}'' ~|~ \tilde{\ljVal}':\tilde{\ljType}_{\ljVal}'
			   \end{align}
			   By lemma \eqref{thm:context_dependency} applied on $\ljType\joinType\ljSource ~\subseteq~ \tilde{\ljType}_{\ljVal}'$ the claim holds because:
			   \begin{align}
				  \ljSource\in\tilde{\ljType}_{\ljVal}' ~&\rightarrow~ \ljVal:\ljType_{\ljVal} \equivTypeSub \tilde{\ljVal}':\tilde{\ljType}_{\ljVal}'\\
				  \ljSource\in\tilde{\ljType}_{\ljVal}' ~&\rightarrow~ \ljHeap' \equivTypeSub \tilde{\ljHeap}''
			   \end{align}
		 \end{description}

   \end{description}
\end{proof}


\section{Correctness}
\label{sec:proof_correctness}

First, we state some auxiliary lemmas.

\begin{lemma}[Subset consistency on states]\label{thm:subset_consistency_state}
   $\ljHeap \equivLattice \aState_{0}$ implies $\ljHeap \equivLattice \aState_{0} \join \aState_{1}$ with $\aState_{0} \sqsubseteq \aState_{0} \join \aState_{1}$.
\end{lemma}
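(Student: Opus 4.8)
The plan is to reduce the statement to a single monotonicity property: the consistency relation $\equivLattice$ is upward-closed in its abstract (right-hand) argument. Concretely, I would first prove the auxiliary claim that for every concrete entity $x$ and abstract entities $a,a'$ drawn from corresponding domains, $x \equivLattice a$ together with $a \sqsubseteq a'$ imply $x \equivLattice a'$. Since $\join$ is the least upper bound on $\AState$, we have $\aState_{0} \sqsubseteq \aState_{0} \join \aState_{1}$ immediately, and instantiating the auxiliary claim with $a = \aState_{0}$ and $a' = \aState_{0} \join \aState_{1}$ yields the lemma.

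I would prove the auxiliary claim by induction on the structure of Definition~\ref{def:consistency_relation}, which is a mutually recursive definition following the shape of the abstract domains. The three base cases are purely set-theoretic: $\ljType \equivLattice \D$ unfolds to $\ljType \subseteq \D$, which survives enlarging $\D$; $\ljConst \equivLattice \aLattice$ unfolds to $\ljConst \in \aLattice$, which survives $\aLattice \sqsubseteq \aLattice'$ because a larger lattice element denotes a superset of base values; and $\ljLocation \equivLattice \aObjlabel$ unfolds to $\ljSLocation \in \aObjlabel$, which survives enlarging the mark set. The composite cases for values $\ljVal \equivLattice \aVal$, marked values $\ljVal:\ljType \equivLattice \aVal$, closures, and storables are conjunctions of conditions on strictly smaller abstract components, so they follow directly from the induction hypothesis together with the fact that $\sqsubseteq$ on a record domain implies $\sqsubseteq$ on each projection (e.g. $\aVal \sqsubseteq \aVal'$ gives $\D_{\aVal} \sqsubseteq \D_{\aVal'}$ and $\aObjlabel_{\aVal} \sqsubseteq \aObjlabel_{\aVal'}$).

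The heap/state case then follows by unfolding $\ljHeap \equivLattice \aState$: for each $\ljLocation \in \dom(\ljHeap)$ I would use that $\dom(\aObjStore_{\aState_{0}}) \subseteq \dom(\aObjStore_{\aState_{0} \join \aState_{1}})$ and that the storable stored at $\ljSLocation$ only grows under the join, namely $\aObjStore_{\aState_{0}}(\ljSLocation) \sqsubseteq \aObjStore_{\aState_{0} \join \aState_{1}}(\ljSLocation)$, and then invoke the induction hypothesis on storables.

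I expect the genuine obstacle to be the object case $\ljObj \equivLattice \aObj$, because abstract objects map \emph{lattice elements} to values and joining two abstract objects may coalesce overlapping keys. Here I would argue from the definition of $\join$ on \AObject{} that for every $\ljStr \in \dom(\ljObj)$ the key $\aLattice \in \dom(\aObj)$ witnessing $\ljStr \equivLattice \aLattice$ is subsumed by some key $\aLattice' \in \dom(\aObj_{0} \join \aObj_{1})$ with $\aLattice \sqsubseteq \aLattice'$, so that $\ljStr \in \aLattice'$ still holds, and with $\aObj_{0}(\aLattice) \sqsubseteq (\aObj_{0} \join \aObj_{1})(\aLattice')$, so that the induction hypothesis on values reestablishes $\ljObj(\ljStr) \equivLattice (\aObj_{0} \join \aObj_{1})(\aLattice')$; the default clause for $\ljStr \notin \dom(\ljObj)$ is handled identically using monotonicity of the value consistency. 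This step is precisely where the join on lattice-keyed abstract objects, only sketched in Section~\ref{sec:abstract_analysis}, must be spelled out and used.
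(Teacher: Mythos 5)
Your proposal is correct and takes essentially the same route as the paper, whose entire proof of this lemma is the single line ``By definition~\eqref{def:consistency_relation}'': you have simply made explicit the upward-closure of $\equivLattice$ in its abstract argument that the paper leaves implicit. Your identification of the abstract-object case (lattice-keyed maps whose keys may be coalesced by $\join$) as the one place where the unstated definition of $\join$ on $\AObject$ actually has to be pinned down is accurate and goes beyond what the paper records.
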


\begin{proof}[Proof of Lemma \ref{thm:subset_consistency_state}]
   By definition \eqref{def:consistency_relation}.
\end{proof}

\begin{lemma}[Subset consistency on values]\label{thm:subset_consistency_value}
   $\ljTypedVal \equivLattice \aVal_{0}$ implies $\ljTypedVal \equivLattice \aVal_{0} \join \aVal_{1}$ with $\aVal_{0} \sqsubseteq \aVal_{0} \join \aVal_{1}$.
\end{lemma}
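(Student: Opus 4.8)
The plan is to mirror the one-line argument used for Lemma~\ref{thm:subset_consistency_state} and reduce the claim to monotonicity of each component of the consistency relation in Definition~\eqref{def:consistency_relation}. First I would write the concrete value as $\ljTypedVal = \ljVal:\ljType$ and recall that the order and join on abstract values $\langle\aLattice,\aObjlabel,\D\rangle$ are computed component-wise. Hence $\aLattice_{\aVal_{0}}\sqsubseteq\aLattice_{\aVal_{0}\join\aVal_{1}}$, $\aObjlabel_{\aVal_{0}}\subseteq\aObjlabel_{\aVal_{0}\join\aVal_{1}}$, and $\D_{\aVal_{0}}\subseteq\D_{\aVal_{0}\join\aVal_{1}}$, which in particular already discharges the side condition $\aVal_{0}\sqsubseteq\aVal_{0}\join\aVal_{1}$.

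Next I would unfold the hypothesis $\ljTypedVal\equivLattice\aVal_{0}$ using the clause for $\ljVal:\ljType\equivLattice\aVal$ in Definition~\eqref{def:consistency_relation}, splitting it into the dependency condition $\ljType\equivLattice\D_{\aVal_{0}}$ (i.e.\ $\ljType\subseteq\D_{\aVal_{0}}$) and the value condition $\ljVal\equivLattice\aVal_{0}$. The dependency part is immediate by transitivity of $\subseteq$: from $\ljType\subseteq\D_{\aVal_{0}}\subseteq\D_{\aVal_{0}\join\aVal_{1}}$ I obtain $\ljType\equivLattice\D_{\aVal_{0}\join\aVal_{1}}$. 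For the value part I would case-split on the shape of $\ljVal$. If $\ljVal=\ljLocation$, then $\ljVal\equivLattice\aVal_{0}$ unfolds to $\ljSLocation\in\aObjlabel_{\aVal_{0}}$, and the inclusion $\aObjlabel_{\aVal_{0}}\subseteq\aObjlabel_{\aVal_{0}\join\aVal_{1}}$ yields $\ljSLocation\in\aObjlabel_{\aVal_{0}\join\aVal_{1}}$. If $\ljVal=\ljConst$, then $\ljVal\equivLattice\aVal_{0}$ unfolds to $\ljConst\in\aLattice_{\aVal_{0}}$, and I must lift this membership to $\ljConst\in\aLattice_{\aVal_{0}\join\aVal_{1}}$. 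Reassembling the dependency condition with the appropriate value condition then re-establishes $\ljVal:\ljType\equivLattice\aVal_{0}\join\aVal_{1}$, as required.

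The only step that is not pure set theory is the constant case, which is the main obstacle: I need that membership of a base-type constant in the value lattice of Figure~\ref{fig:lattice-value} is upward closed, i.e.\ $\ljConst\in\aLattice$ and $\aLattice\sqsubseteq\aLattice'$ imply $\ljConst\in\aLattice'$. This is discharged component-wise against the structure of Figure~\ref{fig:lattice-value}: for the powerset factors $\LvUndefined$, $\LvNull$, $\LvBool$ the order is $\subseteq$, so the property is trivial, while for the flat factors $\lvNUM^{\top}_{\perp}$ and $\lvCHAR^{\top}_{\perp}$ moving up the order either keeps the represented singleton or jumps to $\top$, which denotes all values; in every case the represented set only grows. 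With this monotonicity in hand the lemma follows entirely by unfolding Definition~\eqref{def:consistency_relation}, exactly as in Lemma~\ref{thm:subset_consistency_state}.
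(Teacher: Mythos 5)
Your proposal is correct and takes essentially the same route as the paper, whose entire proof is the one-liner ``By definition~\eqref{def:consistency_relation}''; you have simply spelled out the component-wise unfolding (dependency inclusion, mark-set membership, and upward closure of constant membership in the base-type lattice) that this appeal to the definition implicitly relies on.
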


\begin{proof}[Proof of Lemma \ref{thm:subset_consistency_value}]
   By definition \eqref{def:consistency_relation}. 
\end{proof}

\begin{lemma}[Consistency on dependencies]\label{thm:dependencies_consistency}
   $\ljVal:\ljType_{\ljVal} \equivLattice \aVal ~\wedge~ \ljType \equivLattice \D$ implies $\ljVal:\ljType_{\ljVal}\joinType\ljType \equivLattice
   \langle \aLattice_{\aVal}, \aObjlabel_{\aVal}, \D_{\aVal}\join\D \rangle$
\end{lemma}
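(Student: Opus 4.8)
The plan is to unfold the tainted-value clause of Definition~\ref{def:consistency_relation}, namely $\ljVal:\ljType\equivLattice\aVal \Leftrightarrow \ljType \equivLattice \D_{\aVal} \wedge \ljVal\equivLattice\aVal$, and apply it to the conclusion. Abbreviating the target abstract value as $\aVal' := \langle \aLattice_{\aVal}, \aObjlabel_{\aVal}, \D_{\aVal}\join\D \rangle$, so that $\aLattice_{\aVal'}=\aLattice_{\aVal}$, $\aObjlabel_{\aVal'}=\aObjlabel_{\aVal}$, and $\D_{\aVal'}=\D_{\aVal}\join\D$, the goal $\ljVal:\ljType_{\ljVal}\joinType\ljType \equivLattice \aVal'$ splits into two obligations: (i) the dependency part $\ljType_{\ljVal}\joinType\ljType \equivLattice \D_{\aVal'}$, i.e.\ $\ljType_{\ljVal}\joinType\ljType \subseteq \D_{\aVal}\join\D$; and (ii) the value part $\ljVal\equivLattice\aVal'$.

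Obligation (ii) is immediate: the clause $\ljVal\equivLattice\aVal$ inspects only the lattice component $\aLattice_{\aVal}$ (when $\ljVal=\ljConst$) or the mark set $\aObjlabel_{\aVal}$ (when $\ljVal=\ljLocation$), and both components are carried over unchanged into $\aVal'$. Hence $\ljVal\equivLattice\aVal'$ follows directly from the hypothesis $\ljVal:\ljType_{\ljVal}\equivLattice\aVal$, which already supplies $\ljVal\equivLattice\aVal$.

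For obligation (i) I would first extract the two subset facts the hypotheses provide: $\ljVal:\ljType_{\ljVal}\equivLattice\aVal$ yields $\ljType_{\ljVal}\subseteq\D_{\aVal}$, and $\ljType\equivLattice\D$ yields $\ljType\subseteq\D$. The crux is the observation that on dependency annotations the concrete join $\joinType$ and the abstract join $\join$ both denote union of the underlying set of marks $\ljSLocation$; the subset relation is therefore monotone in both arguments, so $\ljType_{\ljVal}\subseteq\D_{\aVal}$ together with $\ljType\subseteq\D$ gives $\ljType_{\ljVal}\joinType\ljType \subseteq \D_{\aVal}\join\D$. Combining (i) and (ii) through the tainted-value clause establishes the claim.

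This lemma carries no genuine obstacle; it is a routine monotonicity argument. The single point worth stating explicitly is that $\joinType$ and $\join$ coincide as set union on marks, since that is precisely what makes the subset relation monotone and lets the two hypotheses compose into the required dependency inclusion.
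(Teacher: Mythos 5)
Your proof is correct and follows the same route as the paper, which disposes of this lemma simply by appeal to Definition~\ref{def:consistency_relation}; you have merely spelled out the unfolding into the dependency obligation (subset monotonicity of the join, read as union of marks) and the value obligation (the lattice and mark components are unchanged). Nothing further is needed.
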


\begin{proof}[Proof of Lemma \ref{thm:dependencies_consistency}]
   By definition \eqref{def:consistency_relation}.
\end{proof}

\begin{lemma}[Property Update]\label{thm:property_update_consistency}
   $\forall\ljStorable,\aStorable,\aLattice,\aVal ~|~ \ljStorable\equivLattice\aStorable$ implies $\ljStorable\equivLattice\aStorable[\aLattice\mapsto\aStorable(\aLattice)\join\aVal]$
\end{lemma}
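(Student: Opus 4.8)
The plan is to reduce the statement to a monotonicity observation about the abstract update together with the already-established upward closure of consistency in its abstract argument (Lemma~\ref{thm:subset_consistency_value}); this is the storable-level analogue of Lemmas~\ref{thm:subset_consistency_state} and~\ref{thm:subset_consistency_value}. First I would decompose the two storables as $\ljStorable = \langle \ljObj, \ljClosure, \ljProto \rangle$ and $\aStorable = \langle \aObj, \aFunc, \aObjlabel \rangle$, and unfold the hypothesis $\ljStorable \equivLattice \aStorable$ via Definition~\ref{def:consistency_relation} into its three conjuncts $\ljObj \equivLattice \aObj$, $\ljClosure \equivLattice \aFunc$, and $\ljProto \equivLattice \aObjlabel$. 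Using the storable abbreviations (read off Figure~\ref{fig:syntactic_sugar}), the update touches only the object component, i.e.\ $\aStorable[\aLattice\mapsto\aStorable(\aLattice)\join\aVal] = \langle \aObj', \aFunc, \aObjlabel \rangle$ with $\aObj' = \aObj[\aLattice \mapsto \aObj(\aLattice)\join\aVal]$. Hence the closure and prototype conjuncts carry over verbatim, and it remains only to prove $\ljObj \equivLattice \aObj'$.

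The key observation is that the update is monotone: $\dom(\aObj) \subseteq \dom(\aObj')$ and $\aObj(\aLattice') \sqsubseteq \aObj'(\aLattice')$ for every abstract key $\aLattice'$, since joining $\aVal$ only enlarges the entry at $\aLattice$ and leaves the remaining entries unchanged. I would then verify the object-consistency clause of Definition~\ref{def:consistency_relation} directly. For each $\ljStr \in \dom(\ljObj)$, take the witness $\aLattice^* \in \dom(\aObj)$ guaranteed by $\ljObj \equivLattice \aObj$, so that $\ljStr \equivLattice \aLattice^*$ and $\ljObj(\ljStr) \equivLattice \aObj(\aLattice^*)$. The same $\aLattice^*$ still lies in $\dom(\aObj')$ and $\ljStr \equivLattice \aLattice^*$ is unaffected, while $\aObj(\aLattice^*) \sqsubseteq \aObj'(\aLattice^*)$; applying Lemma~\ref{thm:subset_consistency_value} yields $\ljObj(\ljStr) \equivLattice \aObj'(\aLattice^*)$. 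The residual clause for strings $\ljStr \notin \dom(\ljObj)$, which demands $\ljUndefined \equivLattice \aObj'$ on the relevant entries, is preserved by the same monotonicity-plus-Lemma~\ref{thm:subset_consistency_value} argument, since every abstract entry only grows.

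The only point that needs care is the case split on whether the witnessing abstract key $\aLattice^*$ coincides with the updated key $\aLattice$: when $\aLattice^* \neq \aLattice$ the entry is untouched and consistency is immediate, whereas when $\aLattice^* = \aLattice$ the argument genuinely relies on Lemma~\ref{thm:subset_consistency_value} to propagate consistency across the join with $\aVal$. I expect this to be the main (and essentially only) obstacle, and it is mild: the substance of the lemma is just that the abstract property update is an increasing operation on $\AValue$ and that $\equivLattice$ is upward closed in its abstract component, both of which are already available. No induction is required; the proof is a direct unfolding of Definition~\ref{def:consistency_relation}.
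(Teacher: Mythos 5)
Your proof is correct and takes essentially the same approach as the paper: the paper's own proof of this lemma is just the one-liner ``By Definition~\ref{def:consistency_relation}'', and your argument is precisely the detailed unfolding of that definition --- the update only grows the object component pointwise, and $\equivLattice$ is upward closed in its abstract argument (Lemma~\ref{thm:subset_consistency_value}). You have merely made explicit the case split on the witnessing key and the monotonicity reasoning that the paper leaves implicit.
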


\begin{proof}[Proof of Lemma \ref{thm:property_update_consistency}]
   By definition \eqref{def:consistency_relation}.
\end{proof}

\begin{proof}[Proof of Theorem \ref{thm:correctness}]

   Suppose that $\ljHeap,\ljEnv,\ljType ~\entails~ \ljExp ~\eval~ \ljHeap' ~|~ \ljVal$ then
   $\forall \aState,\aScope$ with $\ljHeap\equivLattice\aState$, $\ljEnv\equivLattice\aScope$ and
   $\ljType\equivLattice\D_{\aState}$: $\aState,\aScope ~\entails~ \ljExp ~\eval~ \aState' ~|~ \aVal$ with
   $\ljHeap'\equivLattice\aState'$ and $\ljVal\equivLattice\aVal$. Proof by induction on the derivation of
   $\ljHeap,\ljEnv,\ljType ~\entails~ \ljExp ~\eval~ \ljHeap' ~|~ \ljTypedVal$ and 
   $\aState,\aScope ~\entails~ \ljExp ~\eval~ \aState' ~|~ \aVal$.

   \begin{description}


	  \item[Case] \Rule{\LDJConstant}: $\ljExp\equiv\ljConst$.
		 Claim holds because $\ljConst:\ljType \equivLattice \langle \ljConst, \emptyset, \D_{\aState} \rangle$.


	  \item[Case] \Rule{\LDJVariable}: $\ljExp\equiv\ljVar$.
		 Claim holds because $\ljEnv \equivLattice \aScope$ implies $\ljEnv(\ljVar)\joinType\ljType \equivLattice \aScope(\ljVar)\join \D_{\aState}$.


	  \item[Case] \Rule{\LDJOperation}: $\ljExp\equiv\ljOp(\ljExp_{0}, \ljExp_{1})$
		 where by inversion
		 \begin{align}
			\aState,\aScope ~&\entails~ \ljExp_0 ~\eval~ \aState' ~|~ \aVal_{0}\\
			\aState',\aScope ~&\entails~ \ljExp_1 ~\eval~ \aState'' ~|~ \aVal_{1}
		 \end{align}
		 holds. Claim holds by definition \eqref{def:abstract_operation}.


	  \item[Case] \Rule{\LDJFunctionCreation}: $\ljExp\equiv\ljFunc \ljVar.\ljExp$.
		 Claim holds by definition \eqref{def:consistency_relation}.

	  \item[Case] \Rule{\LDJObjectCreation}: $\ljExp\equiv\ljNew~\ljExp_{p}$
		 where by inversion
		 \begin{align}
			\aState,\aScope ~&\entails~ \ljExp_{p} ~\eval~ \aState' ~|~ \aVal
		 \end{align}
		 holds. Claim holds by definition \eqref{def:consistency_relation}.


	  \item[Case] \Rule{\LDJFunctionApplication}: $\ljExp\equiv\ljExp_0(\ljExp_1)$
		 where by inversion 
		 \begin{align}
			\aState,\aScope ~&\entails~ \ljExp_0 ~\eval~ \aState' ~|~ \langle \aLattice_{0}, \aObjlabel_{0}, \D_{0} \rangle\\
			\aState', \aScope ~&\entails~ \ljExp_1 ~\eval~ \aState'' ~|~ \aVal_{1}
		 \end{align}
		 holds. By definition \eqref{def:consistency_relation} we obtain $\exists\ljSLocation\in\aObjlabel_{0} ~|~ \ljLocation\equivLattice\ljSLocation$.
		 \begin{description}
			\item[Subcase] $\forall\ljSLocation\in\aObjlabel_{0}: \ljLocation\equivLattice\ljSLocation$
			   where by inversion
			   \begin{align}
				  \begin{split}
					 &\aState''[\D\mapsto\D_{\aState''}\join\D_{0}],\dot{\aScope}[\ljVar\mapsto\aVal_{1}] ~\entails~ \\
					 &\ljExp ~\eval~ \aState''' ~|~ \aVal
				  \end{split}
			   \end{align}
			   holds.
			   \begin{description}
				  \item[Subsubcase] $\langle\aState''[\D\mapsto\D_{\aState''}\join\D_{0}],\aVal_{1}\rangle \sqsubseteq \aStore(\ljSLocation)_{\aFuncInput}$.
					 Claim holds by lemma \eqref{thm:subset_consistency_state} and \eqref{thm:function_store}. 

				  \item[Subsubcase] $\langle\aState''[\D\mapsto\D_{\aState''}\join\D_{0}],\aVal_{1}\rangle \not\sqsubseteq \aStore(\ljSLocation)_{\aFuncInput}$.
					 Claim holds by lemma \eqref{thm:subset_consistency_state} and \eqref{thm:subset_consistency_value}.

			   \end{description}

			\item[Subcase] $\forall\ljSLocation\in\aObjlabel_{0}:\ljLocation\not\equivLattice\ljSLocation$
			   Case holds by $\exists\ljSLocation\in\aObjlabel_{0} ~|~ \ljLocation\equivLattice\ljSLocation$ and
			   lemma \eqref{thm:subset_consistency_state} and \eqref{thm:subset_consistency_value}.

		 \end{description}


	  \item[Case] \Rule{\LDJPropertyReference}: $\ljExp\equiv\ljExp_0[\ljExp_1]$
		 where by inversion 
		 \begin{align}
			\aState,\aScope ~&\entails~ \ljExp_0 ~\eval~ \aState' ~|~ \langle \aLattice_{0}, \aObjlabel_{0}, \D_{0} \rangle\\
			\aState',\aScope ~&\entails~ \ljExp_1 ~\eval~ \aState'' ~|~ \langle \lvString, \aObjlabel_{1}, \D_{1} \rangle 
		 \end{align}
		 holds. Claim holds because lemma \eqref{thm:property_reference} and definition \eqref{def:consistency_relation} implies
		 $\ljHeap''(\ljLocation)(\ljStr) ~\equivLattice~ \aVal$.


	  \item[Case] \Rule{\LDJPropertyAssignment}: $\ljExp\equiv\ljExp_0[\ljExp_1] = \ljExp_2$
		 where by inversion 
		 \begin{align}
			\aState,\aScope ~&\entails~ \ljExp_0 ~\eval~ \aState' ~|~ \langle \aLattice_{0}, \aObjlabel_{0}, \D_{0} \rangle \\
			\aState',\aScope ~&\entails~ \ljExp_1 ~\eval~ \aState'' ~|~ \langle \lvString, \aObjlabel_{1}, \D_{1} \rangle \\
			\aState'',\aScope ~&\entails~ \ljExp_2 ~\eval~ \aState''' ~|~ \aVal
		 \end{align}
		 holds. Claim holds because lemma \eqref{thm:property_assignment} and definition \eqref{def:consistency_relation} implies
		 $\ljHeap'''(\ljLocation)[\ljStr\mapsto\ljVal:\ljType_{\ljVal}] ~\equivLattice~ \aState''''$.


	  \item[Case] \Rule{\LDJConditionTrue}: $\ljExp\equiv\ljIf~ (\ljExp_0)~ \ljExp_1,~ \ljExp_2$
		 where by inversion 
		 \begin{align}
			&\aState,\aScope ~\entails~ \ljExp_0 ~\eval~ \aState' ~|~ \langle \aLattice_{0}, \aObjlabel_{0}, \D_{0} \rangle \\
			&\aState'[\D \mapsto \D_{\aState} \join \D_{0}],\aScope ~\entails~ \ljExp_1 ~\eval~ \aState_{1}'' ~|~ \aVal_1
		 \end{align}

		 holds. By definition \eqref{def:consistency_relation} we obtain $\ljTrue ~\equivLattice~ \aLattice_{0}$.
		 \begin{description}
			\item[Subcase]$\aLattice_{0}= \lvTrue$. Claim holds by inversion.
			\item[Subcase]$\aLattice_{0}= \lvBool$. Claim holds by
			   lemma \eqref{thm:subset_consistency_state} and \eqref{thm:subset_consistency_value}.

		 \end{description}


	  \item[Case] \Rule{\LDJConditionFalse}: Analogous to case \\ \Rule{\LDJConditionTrue}.

	  \item[Case] \Rule{\LDJTrace}: $\ljExp\equiv\ljTrace~ (\ljExp)$
		 where by inversion
		 \begin{align}
			\aState[\D\mapsto\D_{\aState}\join\ljSource], \aScope ~\entails~ \ljExp ~\eval~ \aState' ~|~ \aVal
		 \end{align}
		 holds. Claim holds by definition \eqref{def:consistency_relation}.

   \end{description}
\end{proof}


\section{Termination}
\label{sec:proof_termination}


\begin{lemma}[Monotonicity]\label{thm:monotony}
   The inference system in Fig.~\ref{fig:inference-rules_abstract} is monotone.
\end{lemma}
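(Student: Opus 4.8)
The plan is to first make precise what ``monotone'' means for the judgment $\aState,\aScope ~\entails~ \ljExp ~\eval~ \aState' ~|~ \aVal$ and then verify it rule by rule. Concretely, I would prove: whenever $\aState_0 \sqsubseteq \aState_1$, $\aScope_0 \sqsubseteq \aScope_1$, and the ambient function store satisfies $\aStore_0 \sqsubseteq \aStore_1$, and both $\aState_0,\aScope_0 \entails \ljExp \eval \aState_0' \mid \aVal_0$ and $\aState_1,\aScope_1 \entails \ljExp \eval \aState_1' \mid \aVal_1$ hold, then $\aState_0' \sqsubseteq \aState_1'$ and $\aVal_0 \sqsubseteq \aVal_1$, and the updated store stays ordered. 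The argument is by induction on the derivation of the first judgment, with one case per rule of Figures~\ref{fig:inference-rules_abstract}, \ref{fig:inference-rules_function-application}, \ref{fig:inference-rules_property-reference}, and~\ref{fig:inference-rules_property-assignment}.

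The engine of the proof is that every primitive operation used in a rule conclusion is monotone on the analysis lattice. First I would record the building blocks: the lattice join $\join$ is monotone in both arguments; the dependency substitution $\aState[\D \mapsto \D_{\aState} \join \D_0]$ is monotone because it only replaces the dependency component by a join; the abstract operation $\abstractOperation$ is monotone, which follows directly from Definition~\ref{def:abstract_operation}, since enlarging an argument enlarges the set of concrete pairs over which the big join is taken; and the store read/write notations are monotone in $\aStore$. With these in hand, the base cases \Rule{\AConstant} and \Rule{\AVariable} are immediate, as the outputs $\langle\ljConst,\emptyset,\D_{\aState}\rangle$ and $\aScope(\ljVar)\join\D_{\aState}$ are evidently monotone in $\aState$ and $\aScope$. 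The cases \Rule{\AOperation}, \Rule{\AObjectCreation}, \Rule{\AObjectCreationExisting}, \Rule{\AFunctionCreation}, \Rule{\AFunctionCreationExisting}, \Rule{\ATrace}, and the conditional rules follow by applying the induction hypothesis to the subexpressions and noting that the conclusion is built from the sub-results using only joins and the monotone operations above. For the conditional, the branch merge in \Rule{\ACondition} is a join; I must additionally check that raising $\aLattice_0$ from $\lvTrue$ or $\lvFalse$ to $\lvBool$ (so that \Rule{\ACondition} fires instead of \Rule{\AConditionTrue} or \Rule{\AConditionFalse}) still yields a larger result, which holds because the two-branch result $\aVal_1\join\aVal_2$ subsumes either single branch.

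The delicate cases are the auxiliary iteration judgments and function application. For property reference (Figure~\ref{fig:inference-rules_property-reference}) the split between \Rule{\APropertyReferenceIntersection} ($\aLattice\sqcap\aLattice_i\neq\perp$) and \Rule{\APropertyReferenceNonIntersection} ($\aLattice\sqcap\aLattice_i=\perp$) is guarded by an emptiness test that a larger reference $\aLattice$ can flip: I would argue that enlarging $\aLattice$ can only turn a non-intersecting entry into an intersecting one, which adds the extra summand $\aVal_i$ to the joined result, so the outcome still only grows, and that enlarging the storable set $\aStorables$ or the referenced label set likewise only adds summands. Property assignment (Figure~\ref{fig:inference-rules_property-assignment}) is handled analogously, using that \Rule{\APropertyAssignmentInDom} performs a join into the existing slot while \Rule{\APropertyAssignmentNotInDom} installs a fresh one. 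The main obstacle, and the case I expect to require the most care, is \Rule{\AFunctionApplication} together with \Rule{\AFunctionIteration}, \Rule{\AFunctionStoreSubset}, and \Rule{\AFunctionStoreNonSubset}: here the global function store is consulted and updated, and the case split on $\langle\aState,\aVal\rangle\sqsubseteq\aStore(\ljSLocation)_{\aFuncInput}$ can be flipped by enlarging the inputs, switching from reusing a cached output to re-evaluating the body. To discharge this I would invoke the Function Store Lemma~\ref{thm:function_store} to relate the cached output to a genuine body evaluation under the subsumed input, then apply the induction hypothesis to the body evaluation under the enlarged input, and finally observe that both the store-input and store-output updates in \Rule{\AFunctionStoreNonSubset} are joins, so the store remains ordered and the returned $\langle\aState''',\aVal\rangle$ stays monotone across the branch switch. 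The accumulation over the set $\aState''(\aObjlabel_0)$ of referenced closures in \Rule{\AFunctionIteration} is again a join of per-closure results, so enlarging $\aObjlabel_0$ only adds summands, completing the case.
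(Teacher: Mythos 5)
Your proposal is correct and follows essentially the same route as the paper: induction on the derivation of the abstract judgment, checking that each rule's conclusion is built from the sub-results by monotone operations (joins, dependency substitution, the abstract operation). The paper's own proof is only a two-line sketch claiming each rule ``preserves monotonicity if the recursive uses do''; you usefully fill in exactly the spots where that claim is non-obvious, namely the case splits that can flip when inputs grow (the conditional collapsing to \Rule{\ACondition}, the intersection test in property reference, and the function-store subsumption check bridged via Lemma~\ref{thm:function_store}).
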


\begin{proof}[Proof of Lemma \ref{thm:monotony}]
   We have to show monotonicity, that is,\\
   $\forall \aState, \tilde{\aState}, \aScope, \tilde{\aScope}, \aVal, \tilde{\aVal}, \ljExp$ with $\aState\sqsubseteq\tilde{\aState}$,
   $\aScope\sqsubseteq\tilde{\aScope}$:    
   \begin{align}
	  \begin{split}
		 &\aState,\aScope ~\entails~ \ljExp ~\eval~ \aState' ~|~ \aVal ~\wedge~
		 \tilde{\aState}, \tilde{\aScope} ~\entails~ \ljExp ~\eval~ \tilde{\aState}' ~|~ \tilde{\aVal}\\
		 &~\rightarrow~ \aState'\sqsubseteq\tilde{\aState}' ~\wedge~ \aVal\sqsubseteq\tilde{\aVal}
	  \end{split}
   \end{align}
   The proof is by induction on $\aState,\aScope
   ~\entails~ \ljExp ~\eval~ \aState' ~|~ \aVal$. 
   It is easy to see that each rule preserves
   monotonicity if the recursive uses of the judgment do
   so, too.
\end{proof}


\begin{lemma}[Ascending Chain Condition]
   \label{thm:ascending-chain-condition}
   For each subject program, the analysis reaches a finite subset of
   the partially ordered set $(\AAnalysisLattice,\sqsubseteq)$. 
\end{lemma}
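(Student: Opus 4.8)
The plan is to combine monotonicity (Lemma~\ref{thm:monotony}) with a finite-height argument for the reachable fragment of $(\aAnalysisLattice,\sqsubseteq)$. The iteration in \Rule{\AProgram}, \Rule{\AProgramIterationNotEquals}, and \Rule{\AProgramIterationEquals} generates a sequence of analysis states starting from $\langle\aStoreBottom,\aStateBottom,\aValBottom\rangle$; by Lemma~\ref{thm:monotony} the underlying evaluation functional is monotone, so this sequence is ascending and \Rule{\AProgramIterationEquals} fires exactly when it becomes stationary. A monotone operator iterated from the bottom element visits only the elements of a single ascending chain, so it suffices to bound the height of the reachable part of $\aAnalysisLattice$: a finite height bound immediately yields that the chain is eventually constant and that only a finite subset of $\aAnalysisLattice$ is reached.

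First I would fix the subject program $\ljExp$ and extract the finite syntactic material it can ever mention: the set $L$ of marks $\ljSLocation$ occurring in its $\ljTrace$, $\ljNew$, and $\ljFunc\ljVar.\ljExp$ subexpressions, the finite set $V$ of bound variables, and, for $\ldcj$, the finite sets of modes $\ljClass$ and identifiers $\ljConst$. An inspection of the rules in Figures~\ref{fig:inference-rules_abstract}--\ref{fig:inference-rules_property-assignment} shows that every mark in a reachable state is created at one of these positions and every scope variable is introduced by a $\lambda$-binder, so all reachable components are built over $L$ and $V$ only. I would then bound the height of each domain of Figure~\ref{fig:abstract-semantic-domains} under this restriction. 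The base type lattice $\aLattice$ (Figure~\ref{fig:lattice-value}) is a finite product: its $\LvUndefined$, $\LvNull$, and $\LvBool$ factors are finite, while $\LvNum$ and $\LvString$ are \emph{flat} lattices of height two; thus $\aLattice$ has finite height even though $\lvNUM$ and $\lvCHAR$ are infinite. The mark set $\aObjlabel\subseteq L$ and the dependency $\D$ range over finite sets, so $\aVal=\langle\aLattice,\aObjlabel,\D\rangle$ has finite height, and so do scopes $\aScope$ (domain contained in $V$), the function store $\aStore$, and the object store $\aObjStore$ (finite-domain maps indexed by $L$ into finite-height codomains).

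The hard part will be the abstract object $\aObj$, whose keys are drawn from the \emph{infinite} value lattice $\aLattice$: naively $\aLattice\to\aVal$ could accumulate unboundedly many string keys and so have infinite height. To close this gap I would show that any reachable $\aObj$ has only finitely many keys. New keys are introduced solely by \Rule{\APropertyAssignment} through \Rule{\APropertyAssignmentNotInDom}, and the program contains only finitely many property-assignment expressions. At each such expression the key is the value of a fixed subexpression $\ljExp_1$; across fixpoint iterations this value forms an ascending chain in the height-two string lattice, hence passes through at most one proper literal before collapsing to $\top$. Consequently each assignment site contributes at most one literal key together with the $\top$ case, and the total number of keys occurring in any reachable object is bounded by a constant depending only on the program. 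With a finite key set and finite-height values, the reachable objects $\aObj$, and hence storables $\aStorable$, again have finite height.

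Assembling the pieces, the reachable fragment of $\aAnalysisLattice=\langle\aStore,\aState,\aVal\rangle$ is a finite combination of finite products and finite-domain maps over finite-height lattices and therefore has finite height. By Lemma~\ref{thm:monotony} the iteration produces a monotone chain whose length is bounded by this height, so it stabilizes after finitely many steps and the analysis reaches only finitely many elements of $(\aAnalysisLattice,\sqsubseteq)$, which is the claim. The only delicate point throughout is the key-set bound for objects; everywhere else plain finiteness or the height-two flatness of the numeric and string lattices does the work.
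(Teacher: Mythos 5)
Your proposal is correct and follows essentially the same route as the paper's proof sketch: every component of the analysis lattice except the abstract object is of finite height because labels, variables, and dependencies are drawn from the program's finite syntax, and the abstract object is tamed by showing that only finitely many string keys are ever exercised, with the function store's input subsumption handling recursion. Your per-assignment-site argument (each site's key ascends through the flat string lattice and so contributes at most one literal plus $\top$) is a slightly sharper way to bound the key set than the paper's count of string constants and primitive operations, but it is the same underlying idea.
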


\begin{proof}[Proof sketch of Lemma \ref{thm:ascending-chain-condition}]
   Inspection of Fig.~\ref{fig:abstract-semantic-domains} shows that
   the finiteness of many components of the abstract semantic domain depend on the finiteness of the
   number of labels $\ell$ in a program. The sole exception is the
   abstract object, which is a mapping from lattice values to abstract
   values. Because the $\LvString$ sublattice of
   $\ALatticeValue$ has infinitely many values, the domain of abstract
   objects is infinite and has infinite height.

   We argue that each abstract object only uses a finite number of its
   fields by looking at the possible arguments to get and put
   properties. Non-$\top$ values of $\LvString$ are generated by
   constants in the program and by primitive operations applied to
   constants. Disregarding recursion, there are only finitely many such
   constants and operations, so that only a finite sublattice of
   $\AObject$ is exercised.

   Adding recursion does not change this picture because the
   function store subsumes function calls. If a call to a
   function has a fixed string argument that is different to the
   argument of previous calls, then the analysis subsumes these
   arguments to $\top$ and analyzes the function body with the subsumed
   argument.
\end{proof}


\begin{proof}[Proof of Theorem \ref{thm:termination}]
   For each $\aState$, $\aScope$, and $\ljExp$, there exist $\aState'$ and $\aVal$
   such that $\aState,\aScope ~\entails~ \ljExp ~\eval~ \aState' ~|~ \aVal$.
   The computation terminates with a fixpoint because all inference rules are monotone
   (Lemma \ref{thm:monotony}) and the analysis lattice has finite height (Lemma \ref{thm:ascending-chain-condition}).
\end{proof}


\end{document}